\newcommand{\bed}{\[}
\newcommand{\eed}{\]}
\newcommand{\beq}{\begin{equation}}
\newcommand{\eeq}{\end{equation}}
\newcommand{\beqa}{\begin{eqnarray}}
\newcommand{\eeqa}{\end{eqnarray}}
\newcommand{\ket} [1] {\vert #1 \rangle}
\newcommand{\bra} [1] {\langle #1 \vert}
\newcommand{\braket}[2]{\langle #1 | #2 \rangle}
\newcommand{\mean}[1]{\langle #1 \rangle}
\newcommand{\gras}[1]{\bold{#1}}
\newcommand{\Tr}{\mathop{\mathrm{Tr}}}
\newcommand{\Proba}[1] {\textrm{Proba}\big[ #1 \big]} 
\newcommand{\be}{\begin{eqnarray}}
\newcommand{\ee}{\end{eqnarray}}
\newcommand{\bea}{\begin{eqnarray}}
\newcommand{\eea}{\end{eqnarray}}
\newcommand{\bma}{\begin{subequations}}
\newcommand{\ema}{\end{subequations}}
\hspace*{\fill}{\ }
\hspace*{\fill}{\ }
\hspace*{\fill}{\ }
\newenvironment{proof}[1][]
\hspace*{\fill}{$\Box$}\smallskip}
\hspace*{\fill}{$\Box$}\smallskip}
\hspace*{\fill}{$\odot$}\smallskip}
\newtheorem{theorem}{Theorem}[section]
\newtheorem{lemma}[theorem]{Lemma}
\def\one{\ensuremath{\hbox{$\mathrm I$\kern-.6em$\mathrm 1$}}}
\begin{document}

\title{Low depth quantum circuits for Ising models}
\author{S. Iblisdir}
\email{iblisdir@ecm.ub.es}
\affiliation{Dept. Estructura i Constituents de la Mat$\grave{e}$ria, Universitat de Barcelona, 08028 Barcelona, Spain}
\author{M. Cirio}
\affiliation{Centre for Engineered Quantum Systems, Department of Physics and Astronomy, Macquarie University, North Ryde, NSW 2109, Australia}
\author{O. Boada}
\affiliation{Dept. Estructura i Constituents de la Mat$\grave{e}$ria, Universitat de Barcelona, 08028 Barcelona, Spain}
\author{G.K. Brennen}
\affiliation{Centre for Engineered Quantum Systems, Department of Physics and Astronomy, Macquarie University, North Ryde, NSW 2109, Australia}

\date{\today}  

\pacs{05.50.+q, , 75.10-b, 75.10.Jm, 03.67.Lx}

\begin{abstract}

A scheme for measuring complex temperature partition functions of Ising models is introduced. In the context of ordered qubit registers this scheme finds a natural translation in terms of global operations, and single particle measurements on the edge of the array. Two applications of this scheme are presented. First, through appropriate Wick rotations, those amplitudes can be analytically continued to yield estimates for partition functions of Ising models. Bounds on the estimation error, valid with high confidence, are provided through a central-limit theorem, which validity extends beyond the present context. It holds for example for estimations of the Jones polynomial. Interestingly, the kind of state preparations and measurements involved in this application can in principle be made ``instantaneous", i.e. independent of the system size or the parameters being simulated. Second, the scheme allows to accurately estimate some non-trivial invariants of links. A third result concerns the computational power of estimations of partition functions for real temperature classical ferromagnetic Ising models on a square lattice. We provide conditions under which estimating such partition functions allows one to reconstruct scattering amplitudes of quantum circuits making the problem BQP-hard.  Using this mapping, we show that fidelity overlaps for ground states of quantum Hamiltonians, which serve as a witness to quantum phase transitions, can be estimated from classical Ising model partition functions.  Finally, we show that the ability to accurately measure corner magnetizations on thermal states of two-dimensional Ising models with magnetic field leads to fully polynomial random approximation schemes (FPRAS) for the partition function. Each of these results corresponds to a section of the text that can be essentially read independently. 

\end{abstract}

\maketitle

%%%%%%%%%%%%%%%%%%%%%%%%%%%%%%%%%%%%%%%%%%%%%%%%%

\section{Introduction}
%%%%%%%%%%%

Statistical Mechanics provides formal recipes to study interacting many-body systems. Quantities that can be experimentally probed, such as the free energy or the specific heat, can in principle be derived in a straightforward manner. More often than not, however, computing these quantities turns out to be impossible in a limited time. As can be seen from very idealised systems, our ability to actually apply these recipes is very limited. During the last ten years, significant efforts have been devoted to investigating whether quantum mechanics could help in this respect. Various methods, all involving the superposition principle, have been proposed to compute the Jones polynomial at particular values of its variable \cite{AJL}, partition functions of classical statistical models \cite{vdN,AL,GeraciLidar}, the Tutte polynomial \cite{Tutte}, or more generally to contract tensor networks \cite{AL}. 

In this work, we will mainly focus on a collection of \emph{classical} two-level systems, each attached to a fixed position corresponding to a vertex of some lattice $\Lambda$, with edges $E(\Lambda)$. The state of a particle located at vertex $i$ is associated with a number $\sigma_i$ taking values in $\{-1,+1\}$. The energy of the system is given by an Ising Hamiltonian function, associating an energy with each classical configuration of the system 
$\sigma_{\Lambda}$:
\beq\label{eq:def-Ising}
H( \sigma_{\Lambda} )=-\sum_{i} h_i \sigma_i- \sum_{\langle i, j \rangle} J_{i,j}  \sigma_i  \sigma_j.
\eeq

The first sum in this equation runs over all vertices of $\Lambda$. The quantity $h_i$ models represents some local field felt by a spin located at position 
$i$. The second sum represents interactions between pairs of neighbour particles (edges of the lattice). The strength and sign of these interactions may vary from pair to pair. This model was introduced by Lenz as an idealisation of systems where magnetic interactions prevail \cite{Lenz}. Although innocent looking, it exhibits an extremely rich structure. On a regular lattice, close to a phase transition, its long range behaviour is similar to that of very interesting field theories \cite{DiF} while the problem of computing its partition function,
\beq\label{eq:def-partition-function}
Z(\beta)=\sum_{\{ \sigma \}} \text{exp} \big[-\beta H(\{ \sigma \}) \big],
\eeq
belongs the NP-hard complexity class \cite{Barahona}.

It is the purpose of this paper to present schemes that allows to accurately estimate $Z(\beta)$ for imaginary values of $\beta$ (Section \ref{sect:ctpf}), through manipulation of a suitable quantum mechanical system. Quantum circuits for this task have been previously proposed in Ref.\cite{Cuevas}.  However with our scheme, we will see how to evaluate partition functions of real systems, through analytic continuation (Section \ref{sec:anacon}).  A central-limit theorem is derived that allows to estimate the discrepancy between the partition function we wish to estimate and the estimate provided by the quantum algorithm. Interestingly, this theorem is also valid for a wide class of quantum algorithms, including well-known proposals to use a quantum computer in order to evaluate the Jones polynomial \cite{AJL}.  
As we shall see, the kind of preparation and measurement necessary for this estimation can in principle be made \emph{in constant time}, i.e. independent of the system size or the parameters being simulated. This feature is particularly appealing in view of possible practical implementations. We will then see that imaginary temperature partition functions are interesting in their own right, because they provide non-trivial invariant of knots (Section \ref{sect:ki}). Section \ref{sect:compu-power} deals with computational complexity issues.  We investigate the (quantum) computational power of the Ising model, and show how the ability to estimate real temperature partition functions of this model allows to efficiently simulate a quantum computer.  One application of this is the estimation of the wavefunction overlap, termed fidelity, between ground states of a quantum Hamiltonian in the vicinity of a quantum phase transition. We also show that some much simpler tasks have computational power. In particular, the ability to detect corner magnetisations of disordered Ising models leads to fully polynomial random approximation schemes  thereof.  Many of the quantum algorithms presented here involve repetitions of either constant depth or linear depth circuits and moreover many of the operations can be performed without individual qubit addressability.  This is potentially a real boon to experimental implementations in architectures such as trapped atoms in optical lattices or superconducting qubit arrays where individual addressing is not so easy but many qubits are available.   In additional some of the circuits provide for a trade off in space and time, i.e. one can perform either constant depth circuits in $d+1$ spatial dimensions or linear depth circuits in $d$ dimensions. Constant depth quantum circuits have attracted attention since the discovery of simple examples (depth-1 circuits) that are expected to be difficult to simulate classically \cite{Bremner}.  Furthermore, there is some evidence that fault tolerance thresholds could be improved for constant depth (or more generally logarithmic depth) quantum circuits \cite{Razborov, Preskill}. 

\section{Complex temperature partition functions}\label{sect:ctpf}
%%%%%%%%%%%%%%%%%%%%%%%%%%%%%%%%%%%%

We wish to study a classical system defined on some $d$-dimensional lattice $\Lambda$. For that purpose, we consider an associated situation, where a two-level system is located on each vertex of $\Lambda$. The computational basis for each quantum particle, $\{ \ket{+}, \ket{-} \}$, will be associated with classical individual spin configurations. Our construction relies on controlled phase gates acting on nearest neighbours, that is, elements $\langle k,l \rangle$ of $E(\Lambda)$, the set of edges of the lattice. Their action is best described in computational basis:
\beq\label{eq:def-controlled-phase}
C_{k,l}: \ket{\sigma_k,\sigma'_l} \to e^{i \phi_{k,l}(\sigma_k,\sigma'_l)} \ket{\sigma_k,\sigma'_l}.
\eeq

Importantly, these phase gates all commute with each other: 
\beq
\forall \langle k,l \rangle, \langle x,y \rangle \in E(\Lambda), \hspace{0.3cm}
[C_{k,l },C_{x,y }]=0.  
\eeq
Obviously, each function $\phi_{\langle k,l \rangle}$ can be expressed as 
\bed
\phi_{\langle k,l \rangle}(\sigma_k,\sigma'_l)=\sum_{s= \pm 1} \sum_{s'= \pm 1} \phi_{k,l }(s,s') \delta_{s \sigma_k} \delta_{s' \sigma'_l}.
\eed
With the definitions $\kappa_k \equiv \frac{1}{4} \sum_{s_k,s_l} \phi_{ k,l}(s_k,s_l),
J_{k,l} \equiv \frac{1}{4} \sum_{s_k,s_l} \phi_{k,l}(s_k,s_l) s_k s_l,
h_k \equiv \frac{1}{4} \sum_{s_k, s_l} \phi_{k,l }(s_k,s_l) (s_k+s_l)$, we see that a collective action of controlled phase gates across all edges of the lattice can be described in the computational basis as\footnote{Note to a reader interested in reproducing the calculations: the identity $\delta_{\sigma \sigma'}=\frac{1+ \sigma \sigma'}{2}$ has been repeatedly used.}
\beq\label{eq:collective-controlled-phase}
\begin{array}{lll}
\prod_{\langle k,l \rangle \in E} C^{\; \alpha}_{k,l} \prod_{k \in \Lambda} \ket{\sigma_k}
&=&
\textrm{exp}\big[ i \alpha \sum_{k \in \Lambda} \kappa_k\\
&+& i \alpha \sum_{k \in \Lambda} h_k \sigma_k \\
&+& 
i \alpha \sum_{\langle k,l \rangle \in E} J_{k,l } \sigma_k \sigma_l  \big] \\
&&\times \prod_{k \in \Lambda} \ket{\sigma_k}.
\end{array}
\eeq
In particular, if each quantum particle is initialized in the state 
\beq\label{eq:initial-individual}
\ket{+_x} \equiv \frac{1}{\sqrt{2}}(\ket{+}+\ket{-}),
\eeq
we see that the mean value of a product of phase gate operators takes the form of a partition function at imaginary temperature $i \alpha$:
\beq\label{eq:ima-pf-simple}
\begin{array}{lll}
A(\alpha) &\equiv& \bra{+_x^{\otimes |\Lambda|}} \prod_{\langle k,l \rangle \in E} C^{\alpha}_{kl} \ket{+_x^{\otimes |\Lambda|}}\\
&=&\frac{1}{2^{|\Lambda|}} \sum_{\{ \sigma \}} e^{- i \alpha H({\sigma})}, 
\end{array}
\eeq
with $H$ of the form given by Eq.(\ref{eq:def-Ising}). 

It is actually possible to get partition functions of a classical $(d+1)$-dimensional system through evolution of a $d$-dimensional quantum system. For that, we use two additional kinds of gates besides the controlled phase gate. The first kind is single qubit rotations:
\beq\label{eq:def-rotation-gate}
\begin{array}{lll}
U_k: \ket{+} &\to& \cos\theta_k \ket{+} + \sin\theta_k \ket{-}, \nonumber \\
U_k: \ket{-} &\to& -\sin\theta_k \ket{+} + \cos\theta_k \ket{-}. \nonumber 
\end{array}
\eeq
As discussed in Appendix \ref{appendix_a-priori}, other choices are possible. The second is single qubit phase gate:
\beq
P_k(\varphi_k): \ket{\sigma_k} \to e^{i \varphi_k \sigma_k} \ket{\sigma_k}.
\eeq
Next, we observe that the matrix elements of $U_k$ can be expressed in exponential form for almost all values of the parameters $\theta_k$:
\beq
\bra{\sigma'_k} U_k \ket{\sigma_k}=\exp \big[J_k^{\downarrow} \sigma_k \sigma'_k + i \frac{\pi}{4} \sigma'_k - i \frac{\pi}{4} \sigma_k+B(\theta_k) \big],  %\hspace{0.7cm} 
\eeq
with $\theta_k \notin \{ k \frac{\pi}{2}: \; k \in \mathbb{Z}\}$ and where: 
\beq\label{eq:down-interaction}
J_k^{\downarrow}=-\frac{1}{2} \ln(\tan \theta_k) -i \frac{\pi}{4}, \hspace{0.7cm} 
\eeq
and
\beq\label{eq:B-factor}
B(\theta_k)=\frac{\ln(\cos(\theta_k))}{2}+\frac{\ln(\sin(\theta_k))}{2}+i\frac{\pi}{4}\;\;.
\eeq
These individual rotations $\{U_k, k \in \Lambda \}$ are applied on all lattice sites simultaneously. For bookkeeping, it is convenient to assume there is an external clock recording the moment $t$ where simultaneous rotations are applied, and ticking at exactly this time. There is nothing particular to this clock, it is just a way to label the change of variables necessary to describe the action of the $U_k$ gates: 
\beq\label{eq:collective-U-gates}
\begin{array}{lll}
\prod_{k \in \Lambda} U_k(t)  \ket{ \sigma(t) }&=& G(t)
\sum_{\{ \sigma(t+1) \} }\\
&&\exp \big[ \sum_{k \in \Lambda} J_k^{\downarrow}(t) \sigma_k(t) \sigma_k(t+1) \\
&&+ i \frac{\pi}{4} \sum_{k \in \Lambda} (\sigma_k(t+1) - \sigma_k(t) ) \big]\\
&&\ket{ \sigma(t+1)},
\end{array}
\eeq
where $G(t)=\exp(\sum_{k \in \Lambda} B(\theta_k(t)))$.

Now let us consider a $d$-dimensional lattice $\Lambda$ of particles  each prepared in the state (\ref{eq:initial-individual}). Let us assume that a layer evolution operator 
\beq
\mathcal{L}(t)  =  \prod_{k \in \Lambda} P_k(-\frac{\pi}{4}) \prod_{k \in \Lambda} U_k(t) \prod_{\langle k,l \rangle \in E} C_{k,l}^{\alpha}(t) \prod_{k \in \Lambda} P_k(\frac{\pi}{4}).
\eeq
is applied $(m-1)$ times on this initial state, leading to the final state $\prod_{t=1}^{m-1} \mathcal{L}(m-t) \ket{+_x}^{\otimes |\Lambda|}$ (see Fig.\ref{fig:optlat1}).

The overlap of this state with the initial state $\ket{+_x^{\otimes |\Lambda|}}$ takes again the form of an Ising partition function, but now defined on an enlarged lattice $\hat{\Lambda}=\Lambda \times \{1, \ldots, m\}$: 
\beq\label{eq:amp2}
\begin{array}{lll}
A(\alpha,\Theta) &\equiv& \bra{+_x^{\otimes |\Lambda|}}\prod_{t=1}^{m-1} \mathcal{L}(m-t) \ket{+_x^{\otimes |\Lambda|}}\\
&=&\frac{1}{2^n}\sum_{\sigma}\exp[-H(\sigma)],
\end{array}
\eeq
where, $\Theta$ denotes collectively all individual rotations performed on the system, and where, up to an additive constant $\sum_{t=1}^m \ln G(t)$, the classical Hamiltonian $H$ with imaginary couplings is 
\begin{equation}
\begin{array}{lll}
-H(\sigma)&=& i \alpha \sum_{t=1}^{m} \sum_{k \in \Lambda} h_k(t)  \sigma_k(t) \\
&&+ i \alpha \sum_{t=1}^{m} \sum_{\langle k,l \rangle \in E} J_{k,l}(t) \sigma_k(t) \sigma_l(t)\\
&&+ \sum_{t=1}^{m-1} \sum_{k \in \Lambda} J_k^{\downarrow}(t) \sigma_k(t) \sigma_k(t+1).
\end{array}
\end{equation}
Eq.(\ref{eq:amp2}) is proven by inserting identity operators and identifying single-particle quantum basis states $\ket{\pm}$ with single particle classical spin configurations $\ket{\sigma}$: 
\bed
\begin{array}{llll}
&\bra{+_x^{\otimes |\Lambda|}}\prod_{t=1}^{m-1} \mathcal{L}(m-t) \ket{+_x^{\otimes |\Lambda|}}& \\
&=\frac{1}{2^n}\sum_{\sigma(1) \ldots \sigma(m)} \prod_{t=1}^{m-1} \bra{\sigma(m-t+1)}\mathcal{L}(m-t) \ket{\sigma(m-t)}.
\end{array}
\eed

\begin{figure}
\begin{center}
\includegraphics[width=\columnwidth]{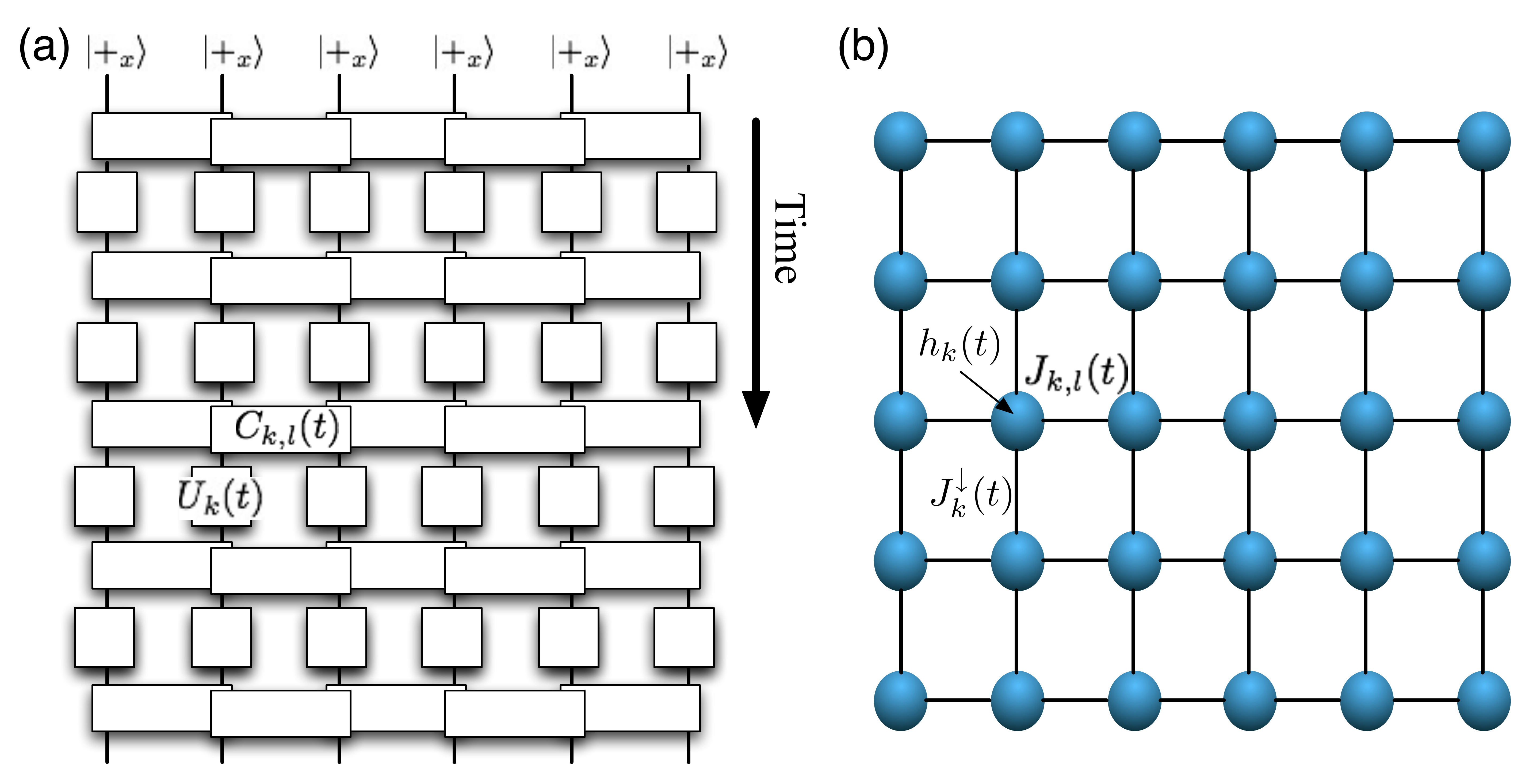}
\end{center}
\caption{Example of the quantum algorithm on a 1D chain of qubits to compute the partition function of a 2D classical Ising model at imaginary temperature.  (a)  The quantum algorithm begins with qubits in the chain initialized in state $\ket{+_x}$ and proceeds with alternating sequences of parallel nearest neighbour two qubit gates $C_k(t)$ diagonal in the computational basis $\{\ket{\pm}\}$ and parallel local rotations $U_k(t)$ (supplemented by single qubit phase gates).  (b)   The corresponding classical Ising model with spatially dependent horizontal and vertical bond strengths and local magnetic fields.}
\label{fig:optlat1}
\end{figure}

\section{Implementation}\label{sect:imp}
%%%%%%%%%%%%%%%%%%%%%%

At the core of the discussion held in the previous section lies the ability to measure the scalar product between $n$-particle states $\ket{\Phi}$ and $\ket{\Psi}$. We will describe two measurement protocols addressing this problem. The first is the simpler and allows to detect $|\braket{\Phi}{\Psi}|^2$, while the second truly yields $\braket{\Phi}{\Psi}$.
% Next, we will argue that cold atoms trapped in an optical lattice is a promising candidate to achieve these protocols when the amplitudes we are interested in are of the form given by Eq.(\ref{eq:ima-pf-simple}) or Eq.(\ref{eq:amp2}).

\textbf{Protocol 1}

\begin{enumerate}

\item Prepare an $n$-particle system $A$ in the state $\ket{\Psi}$, and an $n$-particle system $B$ in the state $\ket{\Phi}$.

\item Prepare an ancillary register $R$ of $n$ qubits in the state $\ket{GHZ}=\frac{1}{\sqrt{2}}(\ket{+ \ldots +}+\ket{- \ldots -})$.

\item  Perform a bit-wise controlled swap gate with each qubit $R_j$ of the register as a control and $A_j$, $B_j$ as targets, i.e. if qubit $R_j$ is in the state $\ket{-}$ then apply $\texttt{SWAP}(A_j,B_j)$. We get 
\bed
\begin{array}{lll}
\frac{1}{\sqrt{2}}\Big(\ket{+ \ldots +}_R \ket{\Psi}_A \ket{\Phi}_B + \ket{- \ldots -}_R  \ket{\Phi}_A \ket{\Psi}_B\Big)  
\end{array}
\eed

\item  Measure the first $n-1$ qubits  of $R$ in the basis $\{ \ket{\pm_x}\}=\{ \frac{1}{\sqrt{2}} (\ket{+}\pm \ket{-})  \}$. Denote $m_j=\pm 1$ the (equiprobable) outcomes of measurement on register qubit $j$ and define $\chi=\sum_{j=1}^{n-1} m_j$.  The state for the last qubit of the register and the system $AB$ is 
\beq
\frac{1}{\sqrt{2}}\Big(\ket{+, \Psi, \Phi}+ (-1)^{\chi} \ket{-, \Phi, \Psi}\Big).
\eeq

\item
Measure the Pauli operator $\sigma^x$ of the last ancillary qubit $R_n$.  The expectation value is
\beq
\langle \sigma^x_n\rangle=(-1)^\chi |\braket{\Phi}{\Psi}|^2.
\eeq

\end{enumerate}

\textbf{Protocol 2}

\begin{enumerate}

\item Prepare an $n$-particle system $A$ in the state $\ket{\Phi}$.

\item Prepare an ancillary register $R$ of $n$ qubits in the GHZ state 

\item  Evolve the qubits in register $A$ conditioned on the state of the ancilla to prepare
\bed
\frac{1}{\sqrt{2}}\Big(\ket{+ \ldots +}_R \ket{\Phi}_A +
\ket{- \ldots -}_R  \ket{\Psi}_A\Big)
\eed

This can be done by replacing all instances of quantum gates in the evolution of $ \ket{\Phi} \to \ket{+_x^{\otimes n}} \to \ket{\Psi}$ into bitwise controlled gate operations.  The single qubit phase gates become controlled phase gates: $\ket{+}_{R_k}\bra{+}\otimes {\bf 1}_{A_k}+\ket{-}_{R_k}\bra{-}\otimes P_{A_k}$.  Similarly, the single qubit rotations become: $\ket{+}_{R_k}\bra{+}\otimes {\bf 1}_{A_k}+\ket{-}_{R_k}\bra{-}\otimes U_{A_k}$.  The collisional gates are controlled by one of neighboring ancillary qubits, e.g.:  $\ket{+}_{R_x}\bra{+}\otimes {\bf 1}_{A_x,A_y}+\ket{-}_{R_k}\bra{-}\otimes {C}_{A_x,A_y}$.  Such three qubit diagonal gates can be decomposed into at most $6$ nearest neighbor controlled phase gates \cite{Brennen:03}.

\item  Measure the first $n-1$ qubits  of $R$ in the basis $\{ \ket{\pm_x}\}$. Denote $m_j=\pm 1$ the outcome of measurement on register qubit $j$ and let again  $\chi=\sum_{j=1}^{n-1} m_j$.  The state for the last qubit of the register and the system $A$ is 
\beq
\frac{1}{\sqrt{2}}\Big(\ket{+, \Phi}+ (-1)^{\chi} \ket{-, \Psi}\Big).
\eeq

\item
Measure $\sigma^x$ on the last ancillary qubit $R_n$.  The expectation value is 
\beq
\langle \sigma^x_n\rangle=(-1)^\chi\Re[\braket{\Phi}{\Psi}].
\eeq

\item
Repeat steps 1-4 but on the last qubit $R_n$ measure instead the Pauli operator $\sigma^y$ where the basis $\{\ket{\pm_y}\}=\{\frac{1}{\sqrt{2}}(\ket{+}\pm i\ket{-})\}$.  The expectation value is
 \beq
\langle \sigma^y_n\rangle=(-1)^\chi \Im[\braket{\Phi}{\Psi}].
\eeq

\end{enumerate}

We note that it is actually not necessary to prepare size $n$ ancillary registers in a $GHZ$ state for either measurement protocol, since one ancillary qubit making controlled swaps or controlled interactions like a serial tape head over the quantum registers would suffice.  The penalty is a potentially linear slowdown and the need to transport the ancilla qubit over the register for every gate in the circuit.  The $\ket{GHZ}$ state can be prepared in one plane using global, i.e. spatially homogeneous, pulses in the plane \cite{Benjamin,Raussendorf,Fitzsimons}.  Futhermore, by coupling the quantum register with a common bosonic mode, $\ket{GHZ}$ states can be prepared in constant time \cite{Bren09}.  The idea is to place all the spins inside a high $Q$ cavity (with decay rate $\kappa$) with a resonance field frequency close to the transition between the qubit states and some other excited state.  When the coupling between the field and qubits is spin dependent and dispersive (e.g. a differential light shift induced by polarization section rules or by spin dependent detuning) then the interaction is modelled as:
\begin{equation}
V_{z}=g_{z}a^{\dagger }a\sum_{j}\sigma _{j}^{z},
\label{setofops}
\end{equation}
where $g_{z}$ is the dispersive coupling strength.  Then $\ket{GHZ}$ can be produced either using strong coupling with a quantised state of light or via a geometric phase gate using coherent state displacements.  We outline the latter as follows:
\begin{itemize}
\item
Initialize all the spins in $\ket{+_x}$ and the cavity mode in the vacuum state $\ket{\alpha=0}$.
\item
Perform the following nine step interaction sequence:
\[
\begin{array}{lll}
&&D(-\beta^{-\kappa \tau})e^{-i \tau V_Z}D(-\alpha^{-\kappa \tau})[\prod_{j}\sigma^x_j]e^{-i \tau V_z}\\
&&[\prod_{j}\sigma^x_j]D(\beta)e^{-i \tau
V_z}D(\alpha),
\end{array}
\]
where $D(\alpha)=e^{\alpha a^{\dagger}-\alpha^{\ast}a}$ is a coherent state displacement, 
and $e^{-i \tau V_z}$ is the unitary evolution generated by $V_z$.  When the parameters 
satisfy:  $g_z \tau=\pi/2$, and $|\alpha \beta | (e^{-3\kappa \tau/2}+e^{-\kappa \tau/2})=\pi/4$, 
then the cavity returns to the vacuum and the global rotation $U=e^{-i\frac{\pi}{4}\prod_j \sigma^z_j}$ is applied to the qubits.
\item 
Apply the global operation $\prod_j e^{i\frac{\pi}{2\sqrt{2}}(\sigma^x_j+\sigma^z_j)}$
to the spins.
\end{itemize}
The state of the qubits is then $\frac{1}{\sqrt{2}}(\ket{++\ldots +}-i\ket{--\ldots -}$ which is locally equivalent to $\ket{GHZ}$ and functions just as well for the simulation protocols above.  The overall process fidelity, which measures how close the lossy process is to the target unitary $U=e^{-i\frac{\pi}{4}\prod_j \sigma^z_j}$, satisfies  \cite{BrenII09}  
\[F_{\rm pro}\geq 1-\frac{\pi^2\kappa}{%
2|g_z|} \Big(1+\frac{\pi\kappa}{2|g_z|}\Big).
\]
 Note that this is a constant depth circuit thanks to the non-local coupling of the field to the qubits.  Of course as the number of spins increases the size of the cavity must also increase, and the strength of the field, spin coupling decreases as $1/\sqrt{Vol}$ where $Vol$ is the cavity volume. Consequently, there is ultimately a process time which scales as $\sqrt{n}$ where $n$ is number of qubits. However, in practice this could be quite fast compared to a sequential circuit for generating $\ket{GHZ}$.

Since the measurement of $\braket{\Phi}{\Psi}$ is informationally more complete than that of $|\braket{\Phi}{\Psi}|^2$, the reader might wonder why we have bothered describe a separate procedure to measure the latter quantity. The reason is that Protocol 2 is experimentally more demanding than Protocol 1 since all the gates must be promoted to controlled gates based on the state of the ancilla.  For most of the discussion to follow we assume information is obtained from Protocol 2, while results for partition function reconstructions using Protocol 1 are presented in Appendix \ref{appendix_a-priori}.

So far, we have considered the case of planar boundary conditions. If the classical system is periodic in space (i.e. the lattice $\Lambda$ is periodic) then the above quantum algorithm is simply modified in the couplings $J_{k,l}(t)$ to account for this. If the classical system is periodic in the time direction, then a few modifications are needed.  To relate the measurement of the quantum system to the classical partition function, the boundaries states $\ket{\sigma_k(m)}$ and $\ket{\sigma_k(1)}$ must be identified.  So rather than computing the scattering matrix element $\bra{+_x^{\otimes |\Lambda|}}W\ket{+_x^{\otimes |\Lambda|}} $, where the unitary $W$ is defined as $W= \prod_{t=1}^{m-1} \mathcal{L}(m-t) $, as we have described so far, we want the trace:  $\Tr [W]$. This is found by using the measurement Protocol 2 but with the register $A$ prepared in the completely mixed state $\frac{{\bf 1}}{2^n}$.  The polarization measurements of the last ancilla of the register then yield the real and imaginary parts of $\frac{\Tr[W]}{2^n}$.  Also note by the cyclic property of the trace, the phase gates $P_k$ are no longer needed in the quantum evolution.

Consider the implementation of this measurement for a 3D classical Ising model using a quantum register encoded in a plane.  For Protocol 1 three parallel planes are needed, one (the top plane) prepared in a $\ket{GHZ}$ state, and the centre (c) and bottom (b) planes both prepared in $\ket{+_x^{\otimes n}}$.  The centre plane is prepared in $\prod_{\langle k,l \rangle \in E} C^{\alpha}_{k,l} \ket{+_x^{\otimes |\Lambda|}}$ or evolved in $\prod_{t=1}^{m-1} \mathcal{L}(m-t) \ket{+_x^{\otimes |\Lambda|}}$, and the subsequent $\texttt{C-SWAP}$ gates between registers can be implemented in parallel bitwise between pairs $(c_k,b_k)$ using a sequence of at most $12$ nearest neighbor collisional gates \cite{Brennen:03}.  Finally the measurement of the top register only requires collecting the parity of measurement outcomes of $n-1$ qubits in the bulk (without addressability) and an addressable measurement of $X_n$ for one qubit on a corner.  For Protocol 2 two registers are needed:  the top one prepared in $\ket{GHZ}$ state and the bottom prepared in $\ket{+_x^{\otimes n}}$.  During the quantum evolution all gates acting on the bottom register (say qubit $b_k$) are to be controlled by the neighbouring qubit on the top plane (qubit $t_k$).  For a rotation gates $U_k(t)$ this means to instead apply the controlled gate $\ket{+}_{t_k}\bra{+}\otimes {\bf 1}_{b_k}+\ket{-}_{t_k}\bra{-}\otimes U_k(t)$.   Such a gate can be done using at most 3 controlled collision gates between $t_k$ and $b_k$.  For the two qubit gates $C_{k,l}(t)$ we need to apply $\ket{+}_{t_k}\bra{+}\otimes {\bf 1}_{b_k,b_{l}}+\ket{-}_{t_k}\bra{-}\otimes C_{k,l}(t)$.  This three qubit diagonal gate can be realized using using at most $12$ collisional gates between nearest neighbors $t_k,b_k$ and $b_k,b_{l}$.  Since not all the gates now commute it is necessary to do this in two stages over non overlapping pairs of nearest neighbors in the bottom register.   Measurement of the top register proceeds as for Protocol 1.   
 Regarding addressability, it is necessary to be able to address the different planes along $\hat{z}$ but addressability can be relaxed in the $\hat{x}-\hat{y}$ direction. 

\section{Partition functions}\label{sec:anacon} 
%%%%%%%%%%%%%%%%%%%%%%%%%%%%%%%%%%%%%

The schemes of Section \ref{sect:ctpf} can be used to provide estimates for real temperature partition functions of classical models. We proceed by analytic continuation of the quantum amplitudes (or their modules) provided by the protocols described in Section \ref{sect:ctpf}.  The general idea is to view the partition function as a polynomial of order linear in the system size whose coefficients are the same as the those obtained from the quantum amplitude estimation but with real instead of complex variables, and then to Wick rotate these variables.

Let $\alpha$ and $\theta$ denote two \emph{complex} variables, and consider a function $F$ of the form
\beq\label{eq:complex-function}
\begin{array}{lll}
F: \mathbb{C} \times \mathbb{C} &\to& \mathbb{C}: (\alpha,\theta) \to F(\alpha,\theta)\\
&=&\sum_{\nu_1=-N_1}^{N_1} \; \sum_{\nu_2=-N_2}^{N_2} c_{\nu_1,\nu_2} e^{i \nu_1 \alpha} e^{i \nu_2 \theta},
\end{array}
\eeq 
where $N_1,N_2 < \infty$. Clearly, $F$ is an \emph{analytic} function, so the coefficients $\{ c_{\nu_1,\nu_2} \}$ define $F$ on the whole complex plane. If $F$ is known for $\alpha_{j_1}= \alpha^{(j_1 )}= 2 \pi \; \frac{j_1}{N_1}, \;  j_1=0 \ldots 2 N_1, \theta_{j_2}=\theta^{(j_2)}=2 \pi \; \frac{j_2}{N_2},  \; j_2=0 \ldots 2 N_2$, then a Fourier transform yields
\beq\label{eq:Fourier-coeff}
\begin{array}{lll}
c_{\nu_1 \nu_2}&=& \frac{1}{(2N_1+1)(2N_2+1)} \sum_{j_1=0}^{2N_1} \sum_{j_2=0}^{2N_2} \\
&&e^{-2 i \pi j_1 \nu_1/(2N_1+1)}e^{-2 i \pi j_2 \nu_2/(2N_2+1)} \\
&&\times~F(\alpha_{j_1},\theta_{j_2}).
\end{array}
\eeq

Plugging this expression in Eq.(\ref{eq:complex-function}), one finds sums of geometric series. Summing them yields
\beq\label{eq:complex-function2}
\hat{F}(\alpha,\theta)=\sum_{j_1=0}^{2N_1} \sum_{j_2=0}^{2N_2}  F(\alpha_{j_1},\theta_{j_2}) \; w^{(N_1)}(\alpha-\alpha_{j_1}) w^{(N_2)}(\theta-\theta_{j_2}) ,
\eeq
where 
\beq\label{def:omega}
w^{(N)}(x) \equiv \frac{1}{2 N+1} \frac{\sin((2 N+1) \frac{x}{2})}{\sin \frac{x}{2}}. 
\eeq

Now consider the quantum amplitudes introduced in Section \ref{sect:ctpf}, in the case where $h_k(t), J_{k,l}(t) \in \{-1,+1 \}$, $\forall k \in \Lambda, \; \forall \langle k,l \rangle \in E(\Lambda), \; \forall t=1 \ldots m$, and where all ``vertical" couplings $J^{\downarrow}$ are set equal. (For the case of non-uniform vertical couplings, see Appendix \ref{appendix_a-priori}). In that case, these quantum amplitudes are certainly of the form (\ref{eq:complex-function}), with $N_1, N_2$ growing at most polynomially with the number of vertices of the classical model being under consideration.  For suitable \emph{complex} values of $\alpha,\theta$, the probability amplitude $A(\alpha,\theta)$ of the $d$-dimensional quantum system can be put in correspondence with the \emph{real} partition function of the $(d+1)$-dimensional classical system. Namely, for
\beq\label{eq:alpha-theta-desired}
\begin{array}{lll}
&&\alpha^{\star}=i \beta, \hspace{1cm}
\theta^{\star}=\frac{1}{i} \ln \sqrt{\frac{1+e^{2 \beta J^{\downarrow} }} {1-e^{2 \beta J^{\downarrow}}}}, \hspace{1cm}\\
&&g(\theta^{\star}) \equiv \frac{1}{2} \ln{\sin 2 \theta^{\star}}+\frac{i \pi}{4}-\frac{1}{2} \ln 2,
\end{array}
\eeq
one finds that $A(\alpha^{\star},\theta^{\star})=e^{|\Lambda| m g(\theta^{\star})} \; Z^{\text{Ising}}(\beta)/2^{|\Lambda|}$. In the definition of $\alpha^{\star}$, we recognise the familiar Wick rotation. The role of the other parameter, $\theta^{\star}$, is to analytically continue the \emph{unitary} quantum mechanical transfer matrix, between successive times, to the (non-unitary) statistical mechanical transfer matrix. In summary, in order to get information about the partition function of a $d$-dimensional classical system, we estimate the probability amplitude $A(\alpha,\theta)$ for well-chosen values of $\alpha$ and $\theta$. From the collected data, we reconstruct the dependence of the function $A$ on its variables $(\alpha,\theta)$, as just explained. Finally, analytic continuation of the variables $(\alpha,\theta)$ to the suitable values (\ref{eq:alpha-theta-desired}) yields an estimate for the desired partition function.

Let us analyse the errors appearing when the values $A(\alpha_{j_1},\theta_{j_2})$ are not known exactly but estimated by some quantities $\varphi_{j_1 j_2}$. The identity (\ref{eq:complex-function2}) allows to get \emph{a priori} error estimate. To simplify the discussion, let us start with the case where partition functions are estimated using a one-time-step protocol. Then, $m=0, N_1= \textrm{poly}(|\Lambda|) \equiv N$ and $N_2=0$. Defining $\delta \varphi^{\star}= \max \{ |\varphi_j-A(\alpha_j)|, j=0 \ldots 2N \}$, we see, through error propagation, that the error at inverse temperature $\beta$, $\Delta A(i \beta)$ satisfies
\beq\label{eq:apriori-error}
\Delta A(i \beta) \leq \sum_{j=0}^{2 N} |w^{(N)}( i \beta-\alpha_{j})| \delta \varphi^{\star}.
\eeq
In the limit of large values of $\beta$, the r.h.s of this equation essentially behaves as $\delta \varphi^{\star} e^{\beta N}$, indicating that the measurement accuracy should shrink exponentially, with the inverse temperature and the size of the system, in order to maintain the error over our estimate for partition function below some fixed prescribed threshold. 

A bound on the error independent of $\beta$ can also be derived easily. Indeed, for the Hamiltonians we are considering, the partition function can be written as 
\bed
Z^{\text{Ising}}(\beta)=\sum_{k=-N}^{N} \xi_k \; e^{-k \beta},
\eed
where all coefficients $\xi_k$ are non-negative integers whose magnitude is at most $2^{m |\Lambda|}$ (number of classical configurations associated with the system). It would therefore be sufficient to be able to estimate these coefficients with a relative accuracy of $2^{-m|\Lambda|}$ in order to be able to reconstruct $Z^{\text{Ising}}(\beta)$ perfectly. The bound appearing on the r.h.s of (\ref{eq:apriori-error}) is independent of the actual values for the link couplings and magnetic fields of the precise Ising model being simulated. We therefore expect it to be pretty loose. 

To get a sharper understanding of how errors behave, we made some numerical simulations. In Fig.\ref{fig:simulations} we show how the error behaves by studying different quantities such as the logarithm of the partition function, the energy and the specific heat. In particular we simulated a model with uniform couplings and zero magnetic fields and a model with $\pm 1$ couplings (with $50\%$ probability) and uniform magnetic field. One can appreciate how, in the uniform case, the error over the partition function goes to zero for zero and infinite temperature. In Appendix \ref{appendix_a-priori}, we show that error over each Fourier coefficient $\xi_k$ is well behaved for large and small values of $k$ (close to $\pm N$), but blows up for intermediate values $k$ (close to $0$). This fact is consistent with our numerical observations and the well known duality present in this model \cite{Bathia}. For the non-homogeneous case, we have found that the errors in the partition function starts by growing exponentially with $\beta$, then remains constant. This observation is consistent with the fact that there is no known low temperature/high temperature duality relation. The errors we have found are also much larger. Our numerics indicate that, in the non-homogeneous case, the magnitude of the partition function is dominated by those $\xi_k$ corresponding to intermediate values of $k$, much more so than for the homogeneous case. 

\begin{figure}
$
\begin{array}{ccc}
\includegraphics[width=\columnwidth]{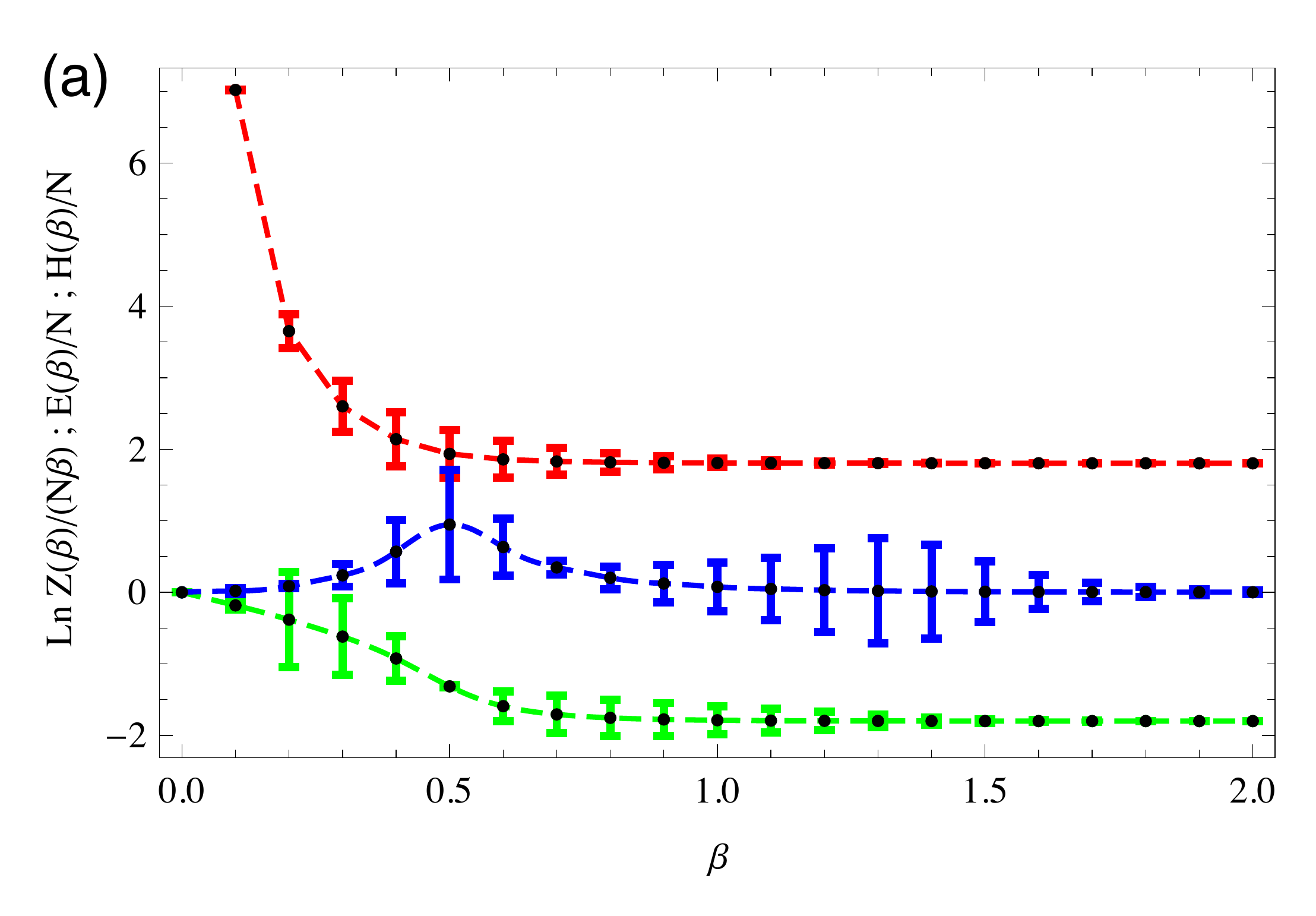}\\
\includegraphics[width=\columnwidth]{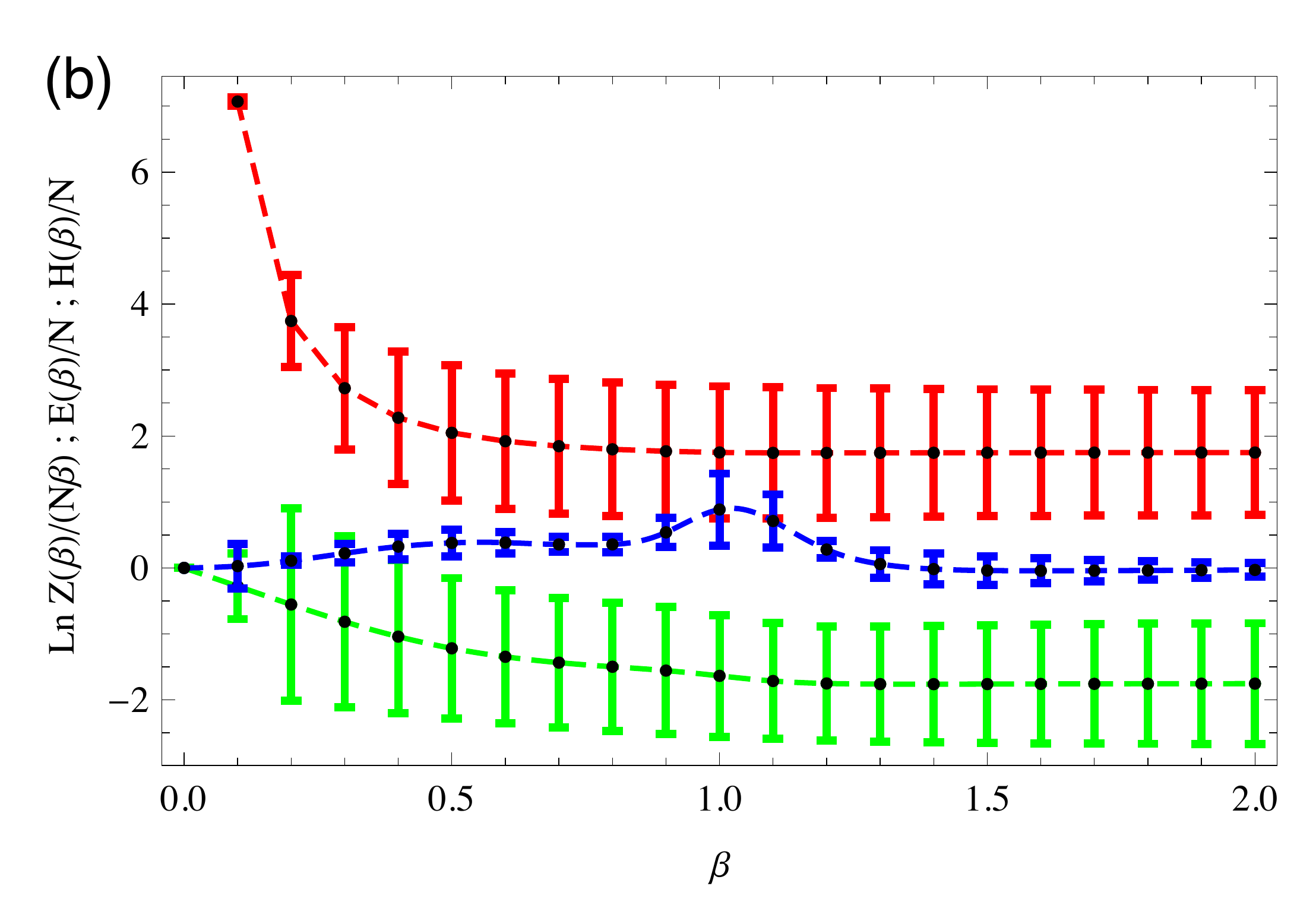}
\end{array}
$
\caption{Example of reconstructed partition functions for Ising model. The reconstructed quantities are the negative of free energy per spin $\ln(Z(\beta)/(N\beta)$ (red), energy per spin $E/N$ (green), and specific heat per spin $H(\beta)/N$ (blue) as a function of temperature and normalized by the number of spins. The plots show the average value of the quantities mentioned above, which is identical to the true value up to numerical machine precision, with error bars representing the a-priori standard deviation. (a) $10\times 10$ classical Ising model with uniform ferromagnetic couplings ($J=1$) and zero magnetic field.  (b) $8\times 8$  classical Ising model with non uniform couplings ($J=\pm 1$ with equal probability) and uniform magnetic field $h=1$.  For the simulation, we supposed to have experimental data with standard deviation equal to $10^{-3}$.}
\label{fig:simulations}
\end{figure} 

Previous attempts at using quantum mechanics to compute approximations of partition functions exhibit errors comparable to ours.  A quantum algorithm based on Fourier sampling was introduced in \cite{Master} to estimate partition functions and free energies of quantum Hamiltonians, which includes the classical Ising model in the case of all diagonal interactions.  There it was found that the number Fourier components needed to be sampled scales polynomially with the lattice size, but in order to obtain a mulitplicative approximation of the partition function, the requisite accuracy of estimation of each coefficient scaled exponentially with the system size. An algorithm, based on using a quantum computer to contract tensor networks yields similar approximation scales \cite{AL}.  Even preparing a quantum state which coherently encodes a classical thermal state of an Ising appears to be difficult, e.g. in Ref. \cite{Yung} the authors provide an algorithm which does so but is exponential in the square root of the system size (see also \cite{DVdN}).  

To conclude this section, we study the possibility to use the data provided by the quantum experiments in order to construct a bound for the error on the estimated partition function.  Our motivation is that, possibly, the a posteriori error analysis might be finer than the error bounds provided by plain error propagation. To simplify the discussion, we will again restrict ourselves to one-step protocols. Extension to the general case is straightforward. Let us expand the quantity $A(i \beta)$ as
\beq
\begin{array}{ll}
A(i \beta)=&\sum_{j=0}^{2N}  
(\Re w^{(N)}( i \beta- \alpha_j)\\
&+ i \; \Im w^{(N)}( i \beta- \alpha_j))(\Re A(\alpha_j) + i \; \Im A(\alpha_j)),
\end{array}
\eeq
and focus on, say,  

\beq
A_{RR}( i \beta) \equiv \sum_{j=0}^{2N} 
\Re w^{(N)}( i \beta- \alpha_j) \; \Re A(\alpha_j).
\eeq
The three other bits of $A(i \beta)$ are treated likewise. As was shown in the previous section, each quantity $\Re A(\alpha_j)$ is obtained by measuring the polarisation of a qubit in a precise direction. Such a measurement process can be viewed as drawing a random variable whose outcomes are $\{+1,-1\}$, and whose mean value is the polarisation we are interested in. Let $M$ denote the number of Bernoulli trials involved in determining each probability amplitude, and let us denote $X_{j}(k)$ the outcome of the $k$-th trial used in the determination of $\Re A(\alpha_j)$. For fixed $j$, the random variables $X_{j}(k)$ have the same distribution for all $k$, characterised by $\text{Prob}[X_{j}(k)=-1]=p_{j}$. 

Our estimate for $A_{RR}( i \beta)$ is
\beq
\widehat{A}_{RR}( i \beta)=\frac{1}{M} \sum_{k=1}^M  \sum_{j=0}^{2 N} \Re w^{(N)}( i \beta- \alpha_j)  \; X_{j}(k).
\eeq
If we assume there is no (uncontrolled) systematic error in the quantum experiments, then the true value of $A_{RR}( i \beta)$ is of course given by
\beq
A_{RR}( i \beta)=
\sum_{j=0}^{2 N} \Re w^{(N)}( i \beta- \alpha_j) (1-2p_{j}).
\eeq
Let $\mathsf{E}_2(\widehat{p}_j)$ and $\mathsf{E}_3(\widehat{p}_j)$ denote appropriate estimates for $\mean{\big( \Re A(\alpha_j)-X_j(k) \big)^2}$ and $\mean{|\Re A(\alpha_j)-X_j(k)|^3}$ respectively, constructed from an appropriate estimate $\widehat{p}_j$ for $p_j$.

With such estimates, we define two random variables as follows:
\bed
\widetilde{D}_M(\epsilon)=\frac{1}{\sqrt{M}}
\frac{\sum_{j=0}^{2N} |\Re w^{(N)}( i \beta- \alpha_j)|^3 \big( \mathsf{E}_3(\widehat{p}_j)+8 \epsilon_j \big)}
{ \big(\sum_{j=0}^{2N} |\Re w^{(N)}( i \beta- \alpha_j)|^2 \big( \mathsf{E}_2(\widehat{p}_j)-4 \epsilon_j \big)  \big)^{3/2}},
\eed
\bed
\widetilde{\lambda}_M(\epsilon)=
\frac{\sqrt{M}}
{\sqrt{ \sum_{j=0}^{2N} |\Re w^{(N)}( i \beta- \alpha_j)|^2 \big( \mathsf{E}_2(\widehat{p}_j)+4 \epsilon_j \big)}}
\eed
where the deviations $\epsilon_j$ are of the form 
\bed
\epsilon_j=\frac{1}{4+s}  \mathsf{E}_2(\widehat{p}_j).
\eed
In this definition, $s$ is a parameter we are free to choose at our convenience.

The following central limit theorem holds for the statistics of errors:

\begin{theorem}[Central limit]\label{thm:error-stat} 
Let $\mathcal{F}_*$ denote the cumulative distribution of a zero-mean, unit-variance Gaussian probability distribution, and let $\Delta$ denote some strictly positive real number. The (composite) \emph{random variable}
\bed
\mathcal{L}(\{X_j(k)\}) \equiv
\big[
 1-2 \mathcal{F}_*(-\widetilde{\lambda}_M(\epsilon) \Delta)-1.12 \; \widetilde{D}_M(\epsilon)
 \big]
\eed
takes a finite value and lower bounds the \emph{quantity} $\textrm{Prob} \big[   |\widehat{A}_{ RR}(i \beta)-A_{RR}(i \beta)| < \Delta \big] $ with probability at least
\bed
\mathcal{P}(\{\epsilon_j \},M,N) \equiv \prod_{j=0}^{2N} \big( 1-2 e^{-\epsilon_j^2 M} \big)-\prod_{j=0}^{2N} \big( p_j^M+(1-p_j)^M  \big).
\eed 
\end{theorem}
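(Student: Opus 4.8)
The plan is to read $\widehat A_{RR}(i\beta)-A_{RR}(i\beta)$ as a normalised sum of independent bounded random variables, apply a Berry--Esseen estimate with explicit constant to compare its law with a Gaussian of the correct variance, and then replace the unknown Bernoulli parameters $p_j$ (which control the variance and the third moment) by the data-dependent surrogates $\mathsf{E}_2(\widehat p_j),\mathsf{E}_3(\widehat p_j)$, bounding the replacement error with a Hoeffding--Chernoff concentration bound. The target is: on a ``good'' event of probability at least $\mathcal{P}$, the computable quantity $\mathcal{L}$ is finite and satisfies $\mathcal{L}\le \textrm{Prob}[\,|\widehat A_{RR}(i\beta)-A_{RR}(i\beta)|<\Delta\,]$.

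\emph{Independence, moments, Berry--Esseen.} First note that the trials $X_j(k)$ are mutually independent over all pairs $(j,k)$: independent over $k$ since the runs are independent repetitions, and independent over $j$ since each $\Re A(\alpha_j)$ comes from a physically separate experiment. Writing $w_j\equiv\Re w^{(N)}(i\beta-\alpha_j)$ and $Z_{j,k}\equiv\tfrac1M w_j\bigl(X_j(k)-(1-2p_j)\bigr)$, one has $\widehat A_{RR}-A_{RR}=\sum_{j,k}Z_{j,k}$, a sum of $(2N+1)M$ independent mean-zero variables. For a $\{+1,-1\}$-valued variable with $\textrm{Prob}[X=-1]=p$ a direct computation gives $\mathrm{Var}(X)=4p(1-p)=:v(p)$ and $\E|X-(1-2p)|^3=8p(1-p)(p^2+(1-p)^2)=:\rho(p)$; these are the quantities $\mathsf{E}_2,\mathsf{E}_3$ estimate, and $\mathsf{E}_2(\widehat p_j)=v(\widehat p_j)$, $\mathsf{E}_3(\widehat p_j)=\rho(\widehat p_j)$ are the natural plug-in estimators. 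Hence $\mathrm{Var}(\sum Z_{j,k})=\tfrac1M\sum_j w_j^2 v(p_j)$ and $\sum_{j,k}\E|Z_{j,k}|^3=\tfrac1{M^2}\sum_j|w_j|^3\rho(p_j)$. The uniform Berry--Esseen theorem for independent (non-identically distributed) summands, with universal constant $0.56$, bounds the sup-distance between the law of $\widehat A_{RR}-A_{RR}$ and the centred Gaussian of that variance by $0.56\,r$ with $r\equiv\tfrac1{\sqrt M}\,\bigl(\sum_j|w_j|^3\rho(p_j)\bigr)/\bigl(\sum_j w_j^2 v(p_j)\bigr)^{3/2}$. Evaluating at $\pm\Delta$, subtracting, and using continuity of the Gaussian CDF at $\pm\Delta$ yields $\textrm{Prob}[\,|\widehat A_{RR}-A_{RR}|<\Delta\,]\ge \bigl(1-2\mathcal{F}_*(-\sqrt M\,\Delta/\sqrt{\sum_j w_j^2 v(p_j)})\bigr)-1.12\,r$, which is $\mathcal{L}$ with the true moments $v(p_j),\rho(p_j)$ in place of their surrogates (note $1.12=2\cdot0.56$).

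\emph{From true moments to the surrogates.} It remains to make the bound data-expressible. The maps $v$ and $\rho$ are Lipschitz on $[0,1]$ with constants $4$ and $8$ (one checks $|v'|\le4$, $|\rho'|\le8$), so on the event $\mathcal{G}\equiv\bigcap_j\{\,|\widehat p_j-p_j|<\epsilon_j\,\}$ one has $\mathsf{E}_2(\widehat p_j)-4\epsilon_j\le v(p_j)\le\mathsf{E}_2(\widehat p_j)+4\epsilon_j$ and $\rho(p_j)\le\mathsf{E}_3(\widehat p_j)+8\epsilon_j$. Feeding the lower bound on $v(p_j)$ into the denominator of $r$ and the upper bounds into its numerator gives $r\le\widetilde D_M(\epsilon)$ on $\mathcal{G}$; feeding the upper bound on $v(p_j)$ gives $\sqrt M\,\Delta/\sqrt{\sum_j w_j^2 v(p_j)}\ge\widetilde\lambda_M(\epsilon)\Delta$, whence by monotonicity of $\mathcal{F}_*$, $1-2\mathcal{F}_*(-\widetilde\lambda_M(\epsilon)\Delta)\le 1-2\mathcal{F}_*(-\sqrt M\,\Delta/\sqrt{\sum_j w_j^2 v(p_j)})$. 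Combining with the previous paragraph gives $\mathcal{L}\le\textrm{Prob}[\,|\widehat A_{RR}-A_{RR}|<\Delta\,]$ on $\mathcal{G}$. The choice $\epsilon_j=\mathsf{E}_2(\widehat p_j)/(4+s)$, $s>0$, makes $\mathsf{E}_2(\widehat p_j)-4\epsilon_j=\tfrac{s}{4+s}\mathsf{E}_2(\widehat p_j)$ and $\mathsf{E}_2(\widehat p_j)+4\epsilon_j=\tfrac{8+s}{4+s}\mathsf{E}_2(\widehat p_j)$ strictly positive whenever the $j$-th sample is non-constant, so the radicands defining $\widetilde D_M,\widetilde\lambda_M$ are positive and $\mathcal{L}$ is finite on $\mathcal{G}$ (on $\mathcal{G}$ every $\epsilon_j>0$, which already forces every sample to be non-constant). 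Finally, by independence over $j$, $\textrm{Prob}[\mathcal{G}]=\prod_j\textrm{Prob}[\,|\widehat p_j-p_j|<\epsilon_j\,]$; bounding each factor below by a Chernoff/Hoeffding estimate for the Bernoulli $\widehat p_j$ and peeling off the degenerate samples (all $M$ outcomes equal, probability $p_j^M+(1-p_j)^M$) reproduces $\mathcal{P}=\prod_j(1-2e^{-\epsilon_j^2M})-\prod_j(p_j^M+(1-p_j)^M)$.

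\emph{Main obstacle.} The delicate step is this last tally: $\epsilon_j$ is itself a function of the data through $\mathsf{E}_2(\widehat p_j)$, so the concentration inequality cannot be invoked with a deterministic threshold. Handling this cleanly — e.g.\ by conditioning on the non-degenerate event, on which $\widehat p_j\in[1/M,1-1/M]$ and hence $\epsilon_j$ is bounded below by an explicit constant, and only then applying Hoeffding — is where the real work lies, and it is also the step where the precise form of $\mathcal{P}$ (including the harmless factor of two in the exponent) gets pinned down. By contrast the Berry--Esseen part is routine once the two moments are computed; the only care needed there is to use the non-i.i.d.\ version of the theorem with its sharp constant and to handle the $<$ versus $\le$ bookkeeping at $x=\pm\Delta$ via continuity of the limiting Gaussian.
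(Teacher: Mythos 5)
Your proposal follows essentially the same route as the paper's own proof: the same decomposition of $\widehat A_{RR}-A_{RR}$ into independent centred summands, the non-i.i.d.\ Berry--Ess\'een bound with constant $0.56$, the Lipschitz replacement of $v(p_j),\rho(p_j)$ by $\mathsf{E}_2(\widehat p_j)\pm4\epsilon_j$ and $\mathsf{E}_3(\widehat p_j)+8\epsilon_j$ on the Hoeffding event, the same choice $\epsilon_j=\mathsf{E}_2(\widehat p_j)/(4+s)$, and the same subtraction of the degenerate all-equal-outcomes event to arrive at $\mathcal{P}$. The one ``obstacle'' you single out --- that $\epsilon_j$ is itself data-dependent so Hoeffding cannot be applied with a deterministic threshold --- is real, but the paper's proof simply applies Hoeffding as if the threshold were fixed and does not address it, so on that point your sketch is if anything more careful than the published argument.
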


The proof of this result builds on the Berry-Ess\'een theorem \cite{BE} and is given in Appendix \ref{thm:proof-central}. Interestingly, the only essential ingredient involved in this proof is the fact that we are trying to estimate a quantity (here a piece of a partition function) as a finite linear combination of Bernoulli random variables. For that reason, this proof  and a similar central-limit theorem are equally valid for \emph{any} quantum algorithm that aims at approximating a quantity $Q$ by an estimate of the form $\sum_{y} \Gamma_y X_y$, where each $X_y$ is a Bernoulli random variable. In particular, our analysis carries through to the algorithm proposed in Ref.\cite{AJL} to compute the Jones polynomial at non-trivial values of its parameter.

This result is interesting in that it actually allows to estimate with tunable statistical confidence and \emph{a posteriori}, i.e. after the quantum experiment is performed, the discrepancy between our estimate and the value we are trying to estimate.

\section{Link Invariants}\label{sect:ki}
%%%%%%%%%%%%%%%%%%%%
%connection of ising to knot invariant, outline of the section

Prior work \cite{Tutte} has provided polynomial time quantum algorithms for the Tutte polynomial including the calculation of the Jones polynomial at the specific values considered here.  In this section we note that  in fact these link invariants can be estimated with repeated application of \emph{constant} depth quantum circuits.

There exists several well-established connections between knot theory and statistical mechanics \cite{Wu}. One of them is the following. For every knot it is possible to construct a graph such that the partition function of a Potts model defined on that graph is a link invariant for certain (imaginary) temperatures. This invariant turns out to be the Jones polynomial evaluated at specific values, modulo a known calculable factor. As the quantum algorithm for computing partition functions described in Section \ref{sect:ctpf} is efficient for imaginary temperatures, it follows that it may also be used to distinguish among different link, when the associated statistical model only involves nearest neighbour interactions. In this section we outline the method to compute the statistical-mechanics knot invariant for any given link. We also compute these invariants for some primary linka with few crossings for which the Potts model involves only a few sites and is within reach of current technology.

%construction of the invariant
\begin{figure}[ht]
\begin{minipage}[b]{0.5\linewidth}
\centering
\includegraphics[scale=0.05]{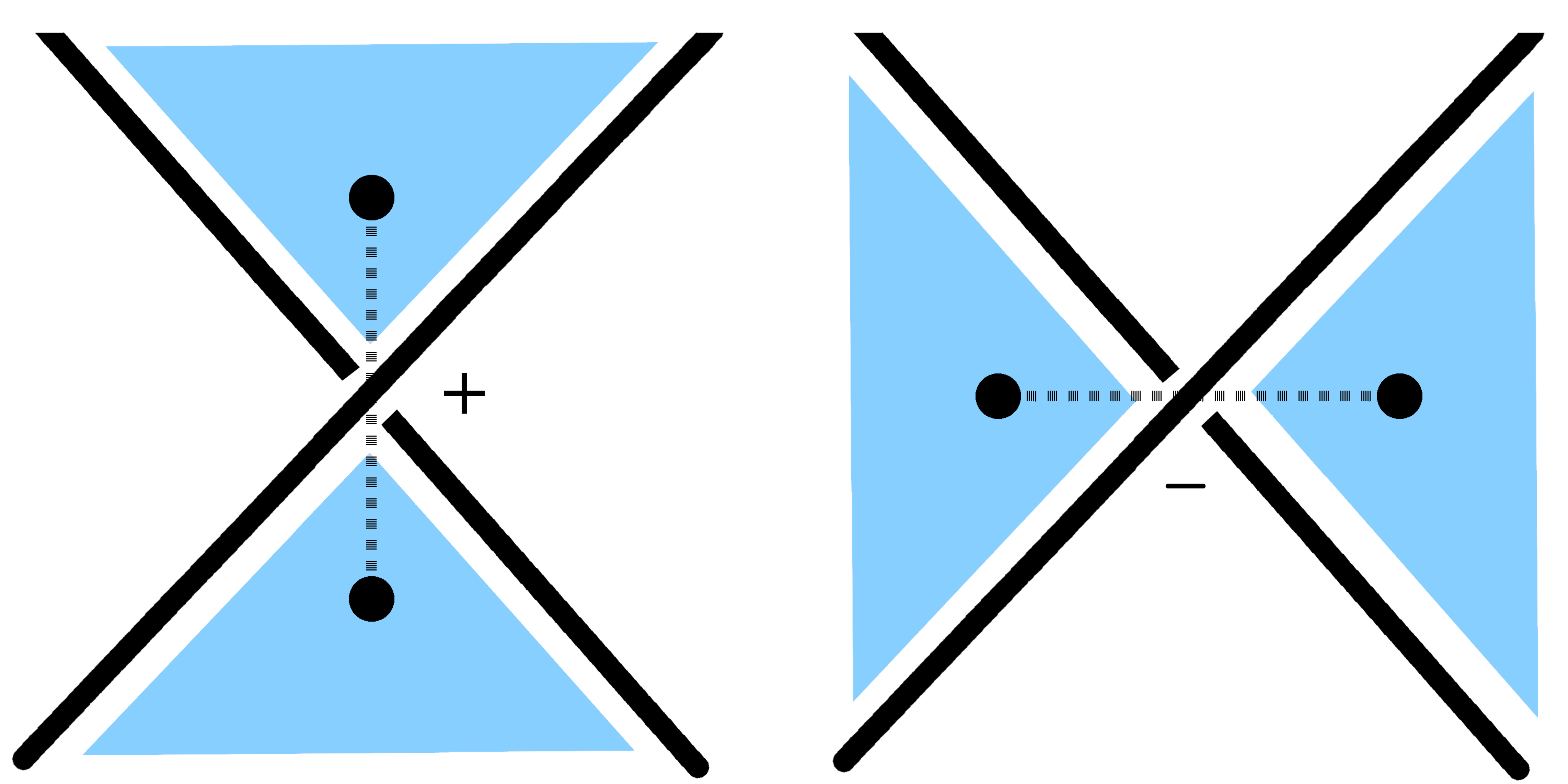}
\end{minipage}
\hspace{0.5cm}
\begin{minipage}[b]{0.5\linewidth}
\centering
\includegraphics[scale=0.18]{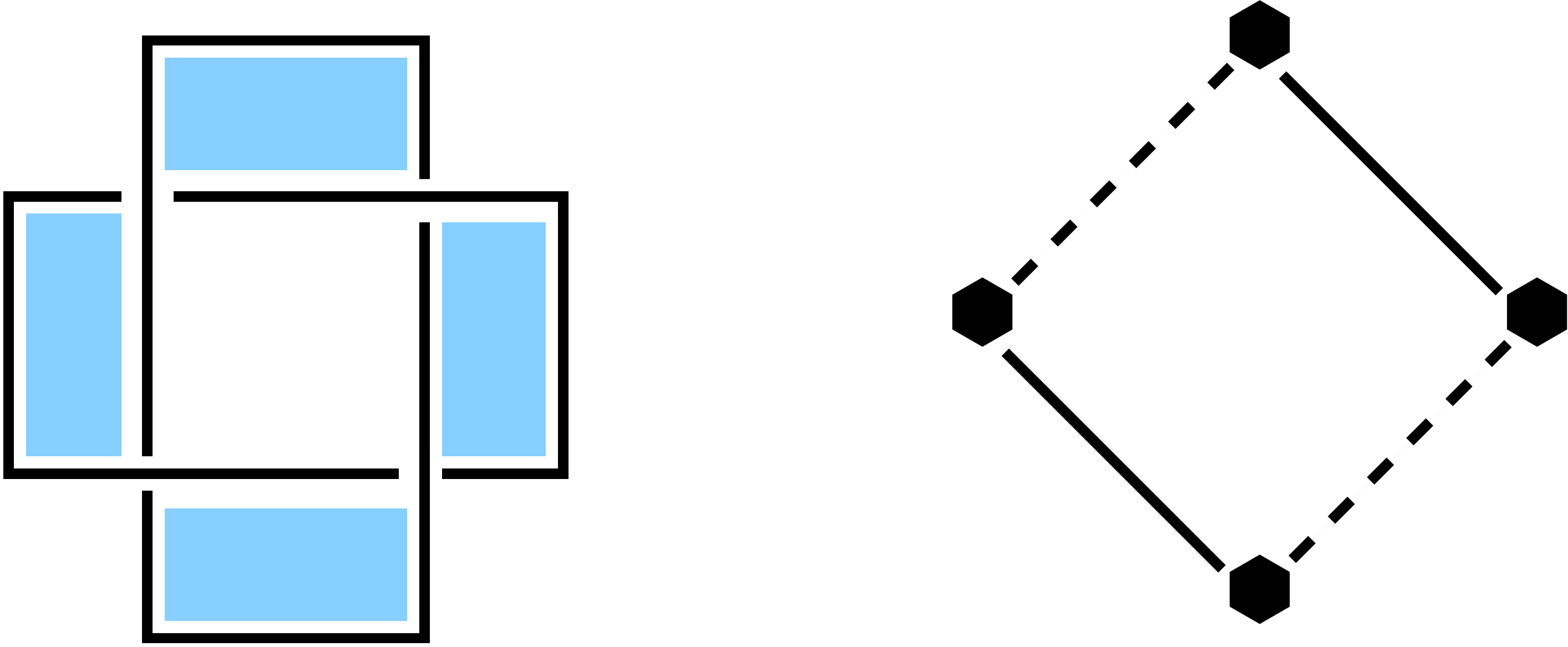}
\end{minipage}
\caption{Convention for determining the sign of the edge coupling assigned to each crossing (top). Example of the lattice obtained following the procedure outlined in the text for one of the possible shadings (bottom). Plain lines represent, say, postive couplings, while dashed lines represent negative couplings.}\label{convention}
\end{figure}
      
Let us start with a brief reminder on a recipe to construct statistical mechanical invariants, given a single component knot or a multicomponent link. We consider the planar projection of a given knot and shade the regions of the diagram in an alternating way such that there are no adjacent shaded regions (there are two ways to do this for any knot). We associate a lattice with vertices $\mathcal{V}$ and signed edges $\mathcal{E}$, $\Lambda=\left(E,V\right)$ to the diagram in the following way. Every shaded region of the diagram will be a vertex of $\Lambda$ and every crossing of the diagram that separates two shaded regions will be an edge linking the two vertices associated with those regions. The sign for the coupling of the edge is determined by the convention in Fig \ref{convention}. For every edge $i \in \mathcal{E}$ we associate a weight $\mathcal{W}^{\pm}_i\left(\sigma,\sigma '\right)$, where $\sigma,\sigma '$ are $q$-valued spins located at the vertices joined by the edge. Let us define a partition function given a set of weights $\mathcal{W}_i$ on $L$,

\begin{equation}
Z_L=\sum_{\left\lbrace \sigma\right\rbrace } \prod_{i\in E} \mathcal{W}_i\,,
\end{equation}\label{partitionknot}
where the sum is over all possible configurations of the spins on the vertices.

\begin{figure}
$
\begin{array}{ccc}
\includegraphics[width=0.9\columnwidth]{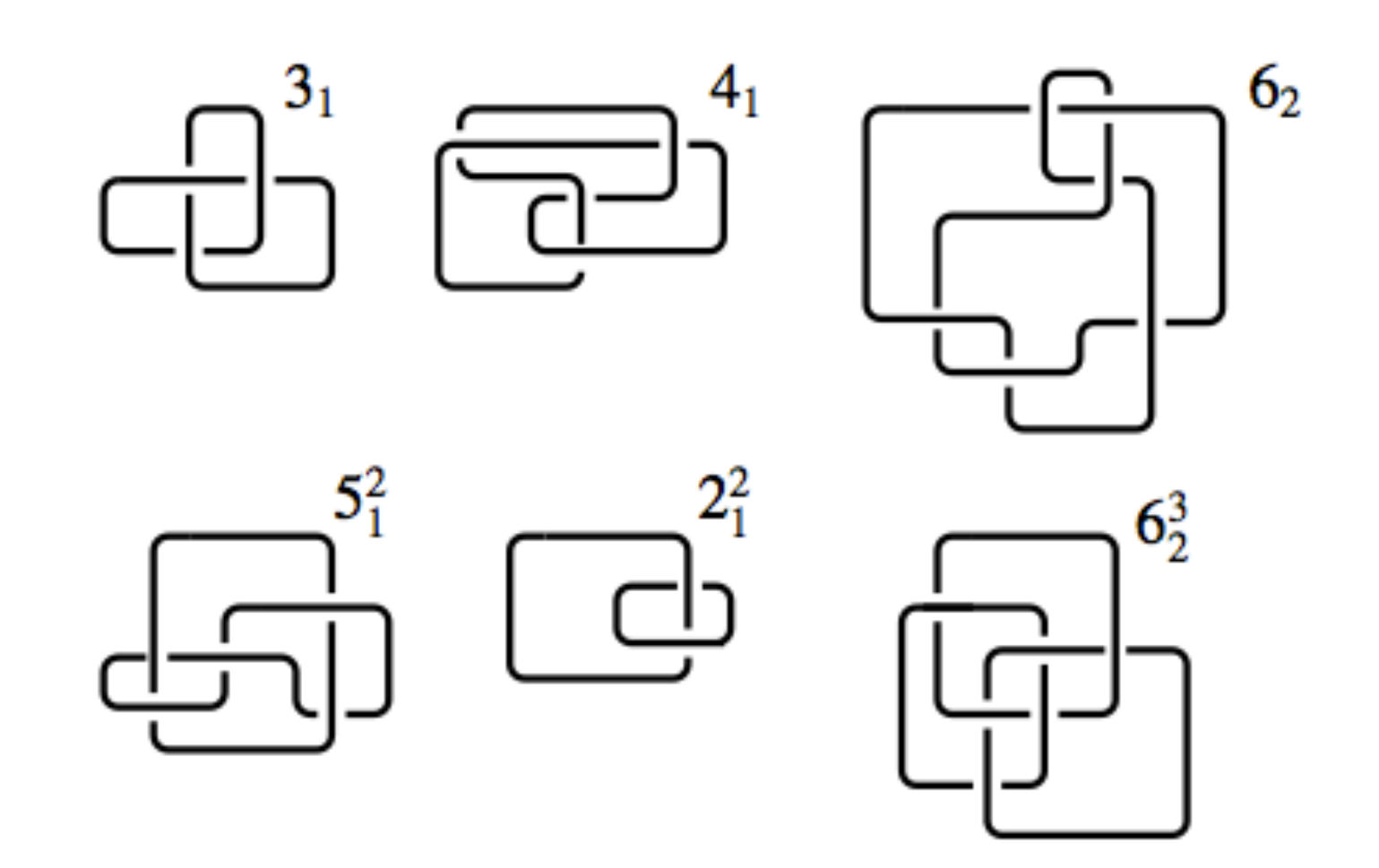}\\
\includegraphics[width=0.9\columnwidth]{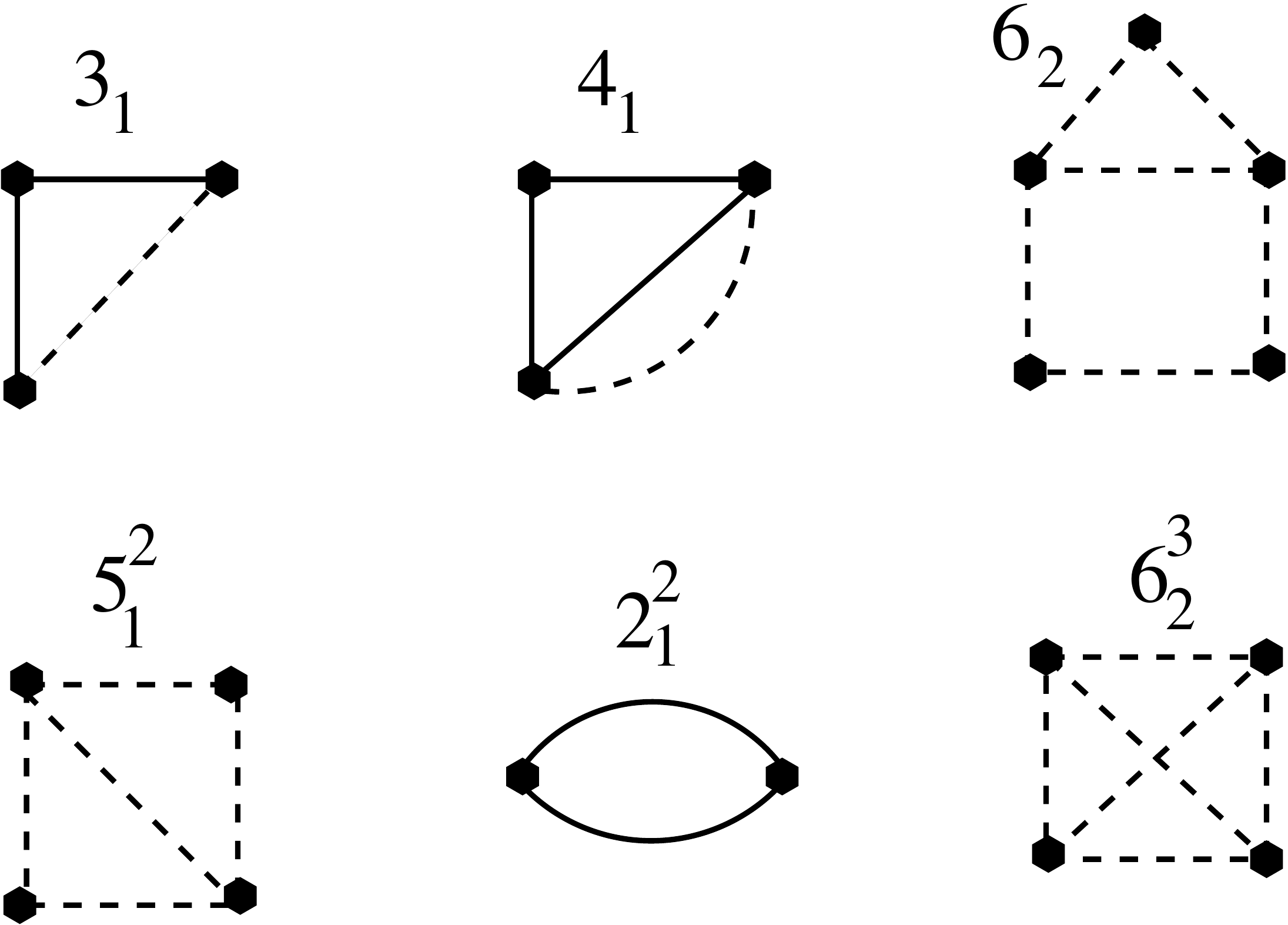}
\end{array}
$
\label{sixknots}
\caption{Planar diagrams for some examples of primary knots and links (top) and the associated partition functions (bottom). Of the two possible graphs for each knot (one for each choice of shading) we have chosen the less trivial one. All the knots lead to statistical mechanics models with nearest neighbor interactions except for the Borromean ring, $6^3_2$.}
\end{figure}

$Z_L$ is invariant under ambient isotopy provided the weights $\mathcal{W}_i$ satisfy certain conditions, the derivation of which is discussed in \cite{Wu}. It has been proven that the choice $\mathcal{W}^{\pm}_i=exp\left(\pm \beta \delta_{\sigma ,\sigma '}\right)$ where $\sigma=1, \ldots, q$ is compatible with these conditions if

\begin{equation}
\beta =\cosh^{-1}\left(\frac{q-2}{2}\right)\,,
\end{equation}
holds. In particular, the Potts partition function $Z_L$ for $q=1,2$,and $3$ at temperatures $\beta= i2\pi/3, i\pi/2$, and $i\pi /3$ respectively is a knot invariant.   Note that the existence of a quantum algorithm to compute the link invariants for these complex temperatures was already pointed out in Ref. \cite{Lidar}

We have determined the lattices $L$ for six examples of knots and links (see figure \ref{sixknots}) and computed $Z_L$ for a Potts model defined on $L$ with $q=1,2$ and $3$ for the values of $\beta$ where the partition function is a knot invariant (see table \ref{partitionvalues}). The invariant corresponding to the value $q=1,2$ are actually trivial. The case where $q=3$ is more interesting. A classical algorithm to compute this invariant exists which works in a time that scales polynomially with the number of crossings  \cite{Jaeger}. In turn, using a generalisation to three-level systems of the scheme presented in Section \ref{sect:ctpf} allows to estimate the quantum invariant $Z_L$ in \emph{constant} time with an additive error that scales like $1/\sqrt{R}$ where $R$ is the number of repetitions of the experiment, now independent of the number of crossings.

\begin{table}[h]
\begin{center}
    \begin{tabular}{|c|c|c|c|}
    \hline
          & $ q=1$ & $q=2$ & $q=3$   \\ \hline     
     $3_1$ & $e^{i\frac{5\pi}{6}}$ & $4e^{i\frac{5\pi}{8}}$ &  $\frac{3}{2}\left(7\sqrt{3}-i\right)e^{i\frac{1\pi}{4}}$    \\ \hline
     $4_1$ & $-e^{i\frac{\pi}{3}}$ & $4$ &  $-\frac{15}{2}\left(1-\sqrt{3}i\right)$  \\ \hline     
       $6_2$  & $-1$ & $-8e^{i\frac{\pi}{4}}$&$3\left(15-22i\right)$  \\ \hline   
      $5^2_1$  & $-e^{i\frac{5\pi}{6}} $ & $8e^{i\frac{3\pi}{8}}$ & $\frac{3}{2}\left(9\sqrt{3}+29i\right)e^{i\frac{3\pi}{4}}$  \\ \hline      
       $2^2_1$  & $-e^{i\frac{2\pi}{3}}$ & $0$ & $\frac{3}{2}\left(3+\sqrt{3}i\right)$ \\ \hline   
        $6^3_2$  & $-1$   & $8\sqrt{2}$& $-3\left(9\sqrt{3} + 4i\right)$  \\ \hline           
         \end{tabular}
         \caption{Knot invariants computed from the Potts model partition functions defined on the lattices in figure \ref{sixknots}. The temperatures at which the partition functions have been evaluated are given in the text.} 
        \label{partitionvalues}
\end{center}
      \end{table}

\section{Computational Power of Classical Models}
%%%%%%%%%%%%%%%%%%%%%%%%%%%%%%%%%%%%%%%%

The analysis presented in Section \ref{sec:anacon} demonstrates how one can sample from a family of quantum circuits with fixed topology in $d$ dimensions to construct a partition function on a classical spin system with fixed topology in $d+1$ dimensions.  One could ask whether the reverse can be done, i.e. given a classical partition function can one then reconstruct the outcomes of a related quantum circuit for a family of coupling parameters?  Even more, is is possible that given the ability to compute the partition function of a suitably large classical system and for a suitable set of temperatures, one can reconstruct the outcome of measurements on arbitrary quantum computations of polynomial length in some fixed register input size?  This has been partially answered in Ref. \cite{vdN} where the authors show that the problem of computing the partition function of several classical spin models including the planar Ising model with magnetic fields all with \emph{complex} couplings is BQP-complete.  Such classical models do arise for some problems, e.g. the use of the Potts model with complex couplings to compute link invariants as discussed in Sec. \ref{sect:ki}.  In Ref. \cite{GeraciandLidarII} it was further shown that there is an equivalence between classical partition functions with \emph{real} couplings and quantum amplitudes for a certain certain class of quantum circuits known as Clifford circuits.  When this mapping exists the graph underling the classical theory is planar with no magnetic fields and can be estimated with a polynomial time classical algorithm \cite{Welsh}.  Also, deciding if a certain quantum circuit belongs to this equivalence class is classically easy.  These results are consistent with the Gottesman-Knill theorem which states that Clifford circuits admit classical simulations in polynomial time \cite{MikeandIke}.

It is desirable to obtain the connection between classical partition functions with real couplings and the output of any polynomial sized quantum circuit.  We do so in this section and also describe some applications:  one for investigating quantum phase transitions given the ability to compute classical partition functions, and another for computing partition functions given the ability to prepare and measure corner magnetisation on physically prepared classical thermal states.

\subsection{Estimating quantum computations from Ising model partition functions}
\label{sect:compu-power}
We show the following:

\begin{theorem}
\label{BQP}
Estimation of the partition function $Z(\beta)$ of a two dimensional ferromagnetic, consistent Ising model at inverse temperature $\beta$ on a square lattice of size $n\times m$ with $m=O(poly(n))$ with non uniform couplings and magnetic fields with additive error $\delta(n,m,\beta)<\exp(nm(49\beta-190)/2)$  is BQP-hard, i.e. it is at least as hard as simulating an arbitrary polynomial time quantum algorithm on $n$ qubits. By ferromagnetic we mean the couplings $J_{i,j}$ in Eq. \ref{eq:def-Ising} are all positive and by consistent the magnetic fields $h_i$ are all non-negative or all non-positive.  To simulate a quantum algorithm means to do the following:  
For a unitary $W$ built from a quantum circuit composed of $O(poly(n))$ one and two qubit gates on a length $n$ register provide an estimate of a complex scattering matrix element satisfying
\[
|\widehat{\bra{ +_x^{\otimes n}} W \ket{+_x^{\otimes n}}}-\bra{ +_x^{\otimes n}} W \ket{+_x^{\otimes n}}|\leq\frac{1}{O(poly(n))}
\]
with a probability that is exponentially in $n$ close to $1$.

\end{theorem}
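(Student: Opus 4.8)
The plan is to run the dictionary of Section~\ref{sect:ctpf} in reverse, inserting a polynomial-interpolation step that trades the complex couplings it produces for real, ferromagnetic, consistent ones.

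First I would reduce to estimating a scattering amplitude. For a polynomial-size circuit $W$ on $n$ qubits, producing an estimate of $\bra{+_x^{\otimes n}}W\ket{+_x^{\otimes n}}$ to additive error $1/\mathrm{poly}(n)$ with probability exponentially close to $1$ is itself BQP-hard: given any BQP promise problem decided with bounded error by a circuit $V$, one builds $W$ on the register plus one ancilla, via a Hadamard-test gadget, so that $\Re\bra{+_x^{\otimes n}}W\ket{+_x^{\otimes n}}$ sits on a prescribed side of a fixed threshold according to the answer, and median amplification over $\mathrm{poly}(n)$ independent runs yields the exponential confidence. So it suffices to recover this amplitude.

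Next I would compile $W$ into a layered circuit $\prod_{t}\mathcal{L}(m-t)$ of depth $m=O(\mathrm{poly}(n))$ on a chain of $n$ qubits, built from a universal set drawn from the nearest-neighbour gates $C_{k,l}$, $U_k$ and $P_k$ of Section~\ref{sect:ctpf}. By Eq.~(\ref{eq:amp2}) together with the analytic-continuation identity of Section~\ref{sec:anacon}, at the Wick-rotated parameters $\alpha^\star=i\beta$ and $\theta^\star$ of (\ref{eq:alpha-theta-desired}) one has $\bra{+_x^{\otimes n}}W\ket{+_x^{\otimes n}}=e^{nm\,g(\theta^\star)}\,Z(\beta)/2^{n}$, where $Z(\beta)$ is the real-$\beta$ partition function of a classical Ising model on the $n\times m$ square lattice whose horizontal bonds and on-site fields are fixed (through $i\alpha^\star$) by the $C$ and $P$ gates and whose vertical bonds all carry the single value set by $\theta^\star$. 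The remaining work is to massage this model into the ferromagnetic, consistent, real form of the statement. I would take the entangling primitive to be $C=\mathrm{diag}(1,-1,-1,-1)$ in the $\{\ket{+}_k\}$ basis (locally equivalent to $CZ$, hence universal together with the $U_k$): this gives every active horizontal bond and field a common sign, the telescoping $P_k(\pm\pi/4)$ leaving only $O(n)$ boundary fields that are cheap to accommodate, so that after the Wick rotation all bulk terms form a ferromagnetic, consistent pattern. The only non-real data left are then the common vertical coupling and the common non-Clifford field strength (the non-Clifford rotation used inside $W$ can be taken to be a single fixed gate, applied at a chosen subset of sites). Hence $Z$, regarded as a function of each of those two parameters at a time, is a polynomial of degree $O(nm)$, so its value at the complex arguments we want is a fixed linear combination of its values over an $O((nm)^{2})$-size grid of couplings lying inside the ferromagnetic, consistent, real region — exactly the non-uniform couplings and fields the statement permits. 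Propagating an additive error $\eta$ per real evaluation through the $\ell^{1}$ norm of the interpolation weights (at most $e^{O(nm)}$, since one extrapolates toward the unit circle from a cluster of real nodes) and through the prefactor $2^{n}\,|e^{-nm\,\Re g(\theta^\star)}|$ gives amplitude error $O(\eta)\,e^{O(nm)}$; making the bounds on $|\Re g(\theta^\star)|$ and on the interpolation norm explicit in $\beta$ is where the numbers surface, and one checks that $\eta<\exp\!\big(nm(49\beta-190)/2\big)$ pushes the amplitude error below $1/\mathrm{poly}(n)$, after which median amplification finishes.

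The hard part will be the massaging step: making the classical model simultaneously real, ferromagnetic and consistent without breaking the embedded circuit. Generic gauge flips turn an all-ferromagnetic model into an all-antiferromagnetic one on the bipartite square lattice, but at the same time scramble the field signs, so signs cannot simply be gauged away; one must pin down a concrete universal gate set whose induced couplings already sit in the ferromagnetic, consistent sector up to cheap $O(n)$ boundary corrections, and then confine the interpolation nodes to that sector while controlling the resulting blow-up of the interpolation norm — which is precisely what fixes the explicit exponential error threshold appearing in the theorem.
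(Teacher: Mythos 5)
Your high-level architecture is the same as the paper's: compile $W$ into a layered nearest-neighbour circuit, read the amplitude as an Ising partition function at a fixed complex temperature with ferromagnetic consistent couplings, view it as a polynomial in $e^{-\beta}$, reconstruct that polynomial by Lagrange interpolation from real-temperature evaluations, and propagate the per-sample error through the interpolation weights. However, the two steps that actually make the theorem true are left unresolved. First, you explicitly defer the construction of a universal gate set whose induced classical couplings and fields all carry one sign, and the candidate you sketch does not get there: the single-qubit rotations $U_k$ of Section \ref{sect:ctpf} contribute vertical couplings $J_k^{\downarrow}=-\tfrac12\ln(\tan\theta_k)-i\tfrac{\pi}{4}$ with an irremovable imaginary offset, so the classical model they generate is not a fixed real ferromagnetic Hamiltonian multiplied by a single complex temperature, and your proposed two-parameter interpolation (vertical coupling and field strength varied separately) no longer matches the resource granted by the theorem, namely $Z(\beta)$ for a \emph{fixed} Hamiltonian as a function of one inverse temperature. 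The paper resolves this with Lemma \ref{thm:universal-gate-set}: the mirror-encoded, translationally invariant gate set $\{\mathsf{CP}_{\rm tot},\sigma^z_{\rm tot}(\pi/8),\mathsf{Had}_{\rm tot}\}$ has, via Eqs.\ (\ref{eq:Ising-form-sigma-cp}) and (\ref{eq:Ising-form-had}), classical counterparts whose couplings and fields are all positive integers (bounded by $4$ and $17$), so the entire amplitude becomes $e^{-iM'\pi/16}\mathcal{P}(e^{i\pi/16})/2^{n(M+2)}$ for a single polynomial $\mathcal{P}$ of degree $2M'\leq O(nM)$, and the real-temperature partition function supplies $\mathcal{P}(e^{-\beta})$ directly. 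Also note your displayed identity equating the physical amplitude $\bra{+_x^{\otimes n}}W\ket{+_x^{\otimes n}}$ with $e^{nm\,g(\theta^\star)}Z(\beta)/2^n$ at the Wick-rotated parameters inverts the logic of Section \ref{sec:anacon}: it is the analytic continuation $A(\alpha^\star,\theta^\star)$ of the amplitude function, not the physical amplitude, that equals the real-$\beta$ partition function; for BQP-hardness the extrapolation must run the other way, from real nodes to the complex point on the unit circle, as you do correctly state later.

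Second, the quantitative content of the theorem --- the threshold $\delta<\exp(nm(49\beta-190)/2)$ --- is asserted rather than derived. The bound on the $\ell^1$ norm of the extrapolation weights is not generically $e^{O(nm)}$ with a favourable constant; the paper obtains it only by a concrete choice of nodes $e^{-\beta_j}=j/K$ with $K=2M'+1$, the explicit evaluation $|\prod_{k\neq j}(j-k)|=j!\,(K-j-1)!$, and the integral estimate $\tfrac{K}{2}\int_0^1\ln[(\cos\tfrac{\pi}{16}-x)^2+\sin^2\tfrac{\pi}{16}]\,dx<-0.744K$, combined with the bound $M'\leq 50nM-12M-24n$ coming from the specific coupling magnitudes of the chosen gate set. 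Since your construction uses a different lattice ($n\times m$ rather than $2n\times M$), different gates, and a different interpolation scheme, the constants $49$ and $190$ would not emerge from it; a complete proof along your lines would have to redo this entire computation for your gate set, which is precisely the part you have not supplied.
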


\begin{proof}

The proof follows in several stages.  First we write an arbitrary polynomial sized quantum circuit in a convenient spatially translationally invariant form.  Then we show that the scattering matrix element is equivalent to a complex temperature classical Ising model on a square lattice.  Finally, we show that sampling the partition function over many real temperatures of a ferromagnetic Ising model, one can reconstruct the scattering matrix element.   

There are many possible equivalent quantum circuits which construct a given unitary.  We pick a quantum circuit with a coupling graph given by a one dimensional chain of qubits with open boundaries.  In order to perform a universal gate set, one needs a quantum circuit with gates either inhomogeneous in space or time or both.  We pick circuits which are homogenous in space only as they are simple to parameterize and it is pedagogically satisfying that each step in the quantum algorithm can be thought of as a Wick rotated transfer matrix generated by a spatially homogenous quantum Hamiltonian.   Several models exist for universal quantum computation which use 1D architectures with global interactions \cite{Benjamin, Fitzsimons}.  We pick a convenient one due to Raussendorf \cite{Raussendorf} which involves encoding quantum information in a 1D redundified data register, i.e. the data register is redundified in a second register which is spatially mirrored with respect to the first.  This method has the advantage that all gates acting on the system are translationally invariant and the initial state is translationally invariant, e.g. $\ket{+_x^{\otimes n}}$ .  The only requirements are uniform Ising interactions between nearest neighbours and global single qubit gates. Addressability is afforded by temporal addressing via judiciously chosen homogenous local operations.  Readout can be done again using global operations with the assistance of interspersed ancillary qubits or instead by using ancillary levels of each qubit \cite{PazSilva}.  The overall overhead incurred using global operations in this mirror encoded state is linear in $n$ \cite{Raussendorf}.

Consider a quantum register of an even number $n$ of logical qubits, encoded by a chain of $2n$ qubits. The encoding has a mirror structure, i.e. the wave function of the system is at all times of the form $\ket{\psi}_{1 \ldots n} \otimes \ket{\psi}_{2n \ldots n+1}$. The first ingredient in our proof of the BQP-hardness of the Ising model is the following lemma:

\begin{lemma}\label{thm:universal-gate-set} 

Let 
\beq
\begin{array}{lll}
&&\sigma^{\alpha}_{\rm{tot}}(\theta)=\prod_{j=1}^{2 n} e^{i \frac{\theta}{2} \sigma^{\alpha}_j},  \hspace{0.2cm} \alpha=x,y,z, \hspace{0.3cm}\\
&&\mathsf{CP}_{\rm{tot}}=\prod_{j=1}^{2n-1} \mathsf{CP}_{j,j+1}, \hspace{0.3cm}
\mathsf{Had}_{\rm{tot}}=\prod_{j=1}^{2 n} \mathsf{Had}_j,
\end{array}
\eeq

denote a set of translationally invariant (global) operations, where $\mathsf{Had}=e^{i\frac{\pi}{2\sqrt{2}}(\sigma^x_j+\sigma^z_j)}$ denotes a single qubit Hadamard gate and  $\mathsf{CP}=e^{i\pi \ket{11}\bra{11}}$ the controlled phase gate. The subset 
\beq\label{eq:uni-gate}
\mathfrak{G}= \{ \mathsf{CP}_{\rm{tot}}, \sigma^z_{\rm{tot}}(\pi/8),\mathsf{Had}_{\rm{tot}} \}
\eeq
 is universal for quantum computation.
\end{lemma}

\begin{proof}
This is proved in Appendix \ref{thm:universal-gate-set-proof}.
\end{proof}

This lemma implies that for any $\epsilon >0$, there exists a sequence of operators $\{ \mathcal{L}_t \in \mathfrak{G}: t=0 \ldots m-1 \}$, such that
\beq
|\widehat{\bra{ +_x^{\otimes n}} W \ket{+_x^{\otimes n}}}-\bra{ +_x^{\otimes 2 n}} \prod_{t=0}^{m-1} \mathcal{L}_t \ket{+_x^{\otimes 2 n}}|\leq \epsilon,
\eeq
where $m=O(\text{poly}(\log\frac{1}{\epsilon},n))$. Let $\sigma_{\text{tot}}$ label classical configurations for the $2n$-qubit chain (element of the computational basis). The action of $\sigma^z_{\text{tot}}(\pi/4)$ and $\mathsf{CP}_{\text{tot}}$ (up to a global phase) can be expressed as
\beq\label{eq:Ising-form-sigma-cp}
\begin{array}{lll}
\sigma^z_{\text{tot}}(\pi/8) \ket{\sigma_{\text{tot}}}&=& e^{i \frac{\pi}{16} \sum_{k=1}^n \sigma_k} \ket{\sigma_{\text{tot}}}, \hspace{0.3cm}\\
\mathsf{CP}_{\text{tot}} \ket{\sigma_{\text{tot}}}&=& 
e^{i \frac{\pi}{4} \big(\sum_{k=1}^{2n-1} (\sigma_k+\sigma_{k+1})+\sum_{k=1}^{2n-1} \sigma_k \sigma_{k+1} \big)}\\
&& \ket{\sigma_{\text{tot}}},
\end{array}
\eeq
while the matrix elements of a Hadamard gate (up to a global phase) read 
\beq\label{eq:Ising-form-had}
\bra{\sigma} \mathsf{Had} \ket{\sigma'}=\frac{1}{\sqrt{2}} e^{i \frac{\pi}{4}(\sigma+\sigma')} e^{i \frac{\pi}{4} \sigma \sigma'}.
\eeq

These expressions will help us to express the quantum amplitude $\bra{ +_x^{\otimes 2 n}} \prod_{t=0}^{m-1} \mathcal{L}_t \ket{+_x^{\otimes 2 n}}$ as an Ising partition function. It is convenient to introduce the following class of operators:
\bed
\mathcal{T}_s = 
\big( \mathsf{CP}_{\text{tot}} \big)^{e_0(s)}
\big( \sigma^z_{\text{tot}}(\pi/8) \big)^{e_1(s)}
2^n\mathsf{Had}^{1-\delta_{s,0}}_{\text{tot}},
\eed
where the exponents $e_0(s)$ and $e_1(s)$ take values in $\{0,1\}$. Up to constant factors, it is clear that the operators $\sigma^z_{\text{tot}}(\pi/8)$, $\mathsf{Had}_{\text{tot}}$ and $\mathsf{CP}_{\text{tot}}$ can each be expressed either as a single $\mathcal{T}$-type operator or as a product of at most 2 $\mathcal{T}$ operators. Consequently, we can write
\beq\label{eq:alt-expr-q-amplitude}
\bra{ +_x^{\otimes 2 n}} \prod_{t=0}^{m-1} \mathcal{L}_t \ket{+_x^{\otimes 2 n}}=
\frac{1}{2^{n M}} \bra{ +_x^{\otimes 2 n}} \prod_{s=0}^{M-1} \mathcal{T}_s \ket{+_x^{\otimes 2 n}},
\eeq
where $M\geq 1$.  If $M=1$ then the overlap is: $\bra{ +_x^{\otimes 2 n}} \prod_{t=0}^{m-1} \mathcal{L}_t \ket{+_x^{\otimes 2 n}}=2^{-n}Z_{1D}(\frac{i\pi}{16})$ where $Z_{1D}$ is the partition function for a classical Ising model in 1D with magnetic fields.  Since one dimensional Ising models are exactly solvable for any temperature, including complex temperatures, then so is the overlap.  Non exact estimations of scattering matrix element occur for $M>1$.    Since each layer operator $\mathcal{L}_t$ can be expressed as a product of at most two such operators $\mathcal{T}_s$, we see that $M$ is polynomial in $n$ (since we assume that $W$ is a polynomial depth quantum circuit). This last form of the quantum scattering amplitude, together with the identities  (\ref{eq:Ising-form-sigma-cp}, \ref{eq:Ising-form-had}) allow to express the quantum scattering amplitude as the partition function of an Ising model at imaginary temperature. Up to a global irrelevant phase, we have 
\beq\label{eq:q-amp-poly-form}
\bra{ +_x^{\otimes 2 n}} \prod_{t=0}^{m-1} \mathcal{L}_t \ket{+_x^{\otimes 2 n}}=\frac{1}{2^{n(M+2)}} \sum_{ \{ \sigma \}} e^{-\frac{i \pi }{16} H(\sigma)},
\eeq
where $H(\sigma)$ denotes the Hamiltonian of the form (\ref{eq:def-Ising}), defined on a square $(2n) \times M$ lattice. Simple inspection shows that all couplings (resp. fields) appearing in this Hamiltonian are positive integers, whose magnitude do not exceed 4 (resp. 17).

Let us now assume we are provided with the following resource:

$\mathtt{IsingEstimator}$:  Given an inverse temperature, $\beta$, and an inhomogeneous Ising Hamiltonian, defined on a two-dimensional square lattice of size $n_x \times n_y$,  a device provides an estimate $\widehat{Z}(\beta)$ for the partition function, $Z(\beta)$, that satisfies 
\beq
\textrm{Prob}[|\widehat{Z}(\beta)-Z(\beta)| \leq \epsilon \; \delta(n_x,n_y,\beta)]\geq \frac{3}{4},
\eeq
in a time that is polynomial in $n_x,n_y,\beta, 1/\epsilon$.

Our goal now is to study how we could design the function $\delta$ so that this resource allows for an efficient estimation of scattering amplitudes of quantum circuits. Since all magnetic fields and couplings appearing in the definition of the classical Hamiltonian associated with a quantum circuit are integers, the r.h.s of (\ref{eq:q-amp-poly-form}) can certainly be written as 
\bed
\frac{1}{2^{n(M+2)}} \sum_{ \{ \sigma \}} e^{-\frac{i \pi }{16} H(\{\sigma\})}= \frac{1}{2^{n(M+2)}} \sum_{k=-M'}^{+M'} c_k \; e^{i k \pi/16},
\eed
for some coefficients $c_k$. The value of the integer $M'$ is at most $\max_{\sigma} H(\sigma)$. The r.h.s. of the last equation can equivalently be written as
\bed
\frac{1}{2^{n(M+2)}} \sum_{ \{ \sigma \}} e^{-\frac{i \pi }{16} H(\sigma)}=\frac{e^{-iM' \pi/16}}{2^{n(M+2)}} \mathcal{P}(e^{i \pi/16}),  
\eed
where $\mathcal{P}$ is a degree-$(2 M')$ polynomial. For all $\beta \geq 0$, our resource allows to compute an estimate $\widehat{\mathcal{P}}(e^{-\beta}) \equiv e^{-\beta M'} \widehat{Z}(\beta)$ for $\mathcal{P}(e^{-\beta})$ that obeys $|\widehat{\mathcal{P}}(e^{-\beta})-\mathcal{P}(e^{-\beta})| \leq \epsilon e^{-\beta M'} \delta(2n,M,\beta)$. Using a Lagrange polynomial interpolation based on $K$ points $\{ \big(e^{-\beta_j},\widehat{\mathcal{P}} (e^{-\beta_j}) \big), j=0 \ldots K-1 \}$ ($K \geq 2M'+1$), we re-construct the polynomial $\mathcal{P}$ as 
\bed
\begin{array}{lll}
\widehat{\mathcal{P}}(z)&=& \sum_{j=0}^{K} \widehat{\mathcal{P}}(e^{-\beta_j}) \ell_j(z), \hspace{0.4cm} \\\ell_j(z)
&=&\prod_{k \neq j} \frac{z-e^{-\beta_k}}{e^{-\beta_j}-e^{-\beta_k}}, \hspace{0.2cm} z \in \mathbb{C}.
\end{array}
\eed  
This reconstructed polynomial is in turn used to estimate our quantum amplitude as $\widehat{\bra{ +_x^{\otimes 2 n}} \prod_{t=0}^{m-1} \mathcal{L}_t \ket{+_x^{\otimes 2 n}}}= \frac{e^{-iM' \pi/16}}{2^{n(M+2)}} \widehat{\mathcal{P}}(e^{i \pi/16})$. The error over this estimate can be bounded as
\bed
\begin{array}{ll}
&|\bra{ +_x^{\otimes 2 n}} \prod_{t=0}^{m-1} \mathcal{L}_t \ket{+_x^{\otimes 2 n}}-\widehat{\bra{ +_x^{\otimes 2 n}} \prod_{t=0}^{m-1} \mathcal{L}_t \ket{+_x^{\otimes 2 n}}}|\\
& \leq
\frac{1}{2^{n(M+2)}} \sum_{j=0}^{K-1} \epsilon \;  \delta(2n,M,\beta_j) e^{-M' \beta_j} |\ell_j(e^{i \pi/16})|.
\end{array}
\eed
It would be desirable to pick the integer $K$ and the temperatures $\beta_j$ in such a way that the r.h.s. of this last inequality is minimised. Presumably, calculus of variations might make this task doable. We have proceeded in a simpler way and made the choice 
\bed
e^{-\beta_j}=j/K, j=0 \ldots K-1.
\eed 
Then,
\bed
|\ell_j(e^{i \pi/8})|=\frac{\prod_{k \neq j} |K e^{i \pi/8}-k|}{\prod_{k \neq j} |j-k|}.
\eed
A closed form for the denominator on the r.h.s. of this expression can be easily worked out: 
\bed
|\prod_{k \neq j} (j-k)|= \prod_{k=0}^{j-1} (j-k) \times \prod_{k=j+1}^{K-1} (k-j)= j! \; (K-j-1)!
\eed
For the numerator, we observe that
\bed
\begin{array}{lll}
\prod_{k \neq j} |K e^{i \pi/16}-k|&=& \frac{K^K}{|K e^{i \pi/16}-j|}\\
&&\times \prod_{k=0}^{K-1} |e^{i \pi/16}-k/K|\\
&=&\frac{K^K}{|K e^{i \pi/16}-j|} \exp \big[\sum_{k=0}^{K-1}\\
&& \ln \sqrt{(\cos \frac{\pi}{16}-\frac{k}{K})^2+\sin^2 \frac{\pi}{16}}\big].
\end{array}
\eed
The argument of the last exponential is:
\bed
\begin{array}{ll}
&\frac{1}{2} \sum_{k=0}^{K-1} \ln \big[(\cos \frac{\pi}{16}-\frac{k}{K})^2+\sin^2 \frac{\pi}{16} \big] \times \frac{K}{K} \\
&<\frac{K}{2} \int_{0}^{1} \ln \big[ (\cos \frac{\pi}{16}-x)^2+\sin^2 \frac{\pi}{16} \big] \; dx\\
& < -0.744 K
\end{array}
\eed
Plugging these results in our bound for the error on the quantum amplitude, we find that, in the limit of large $K$,
\bed
\begin{array}{ll}
&|\bra{ +_x^{\otimes 2 n}} \prod_{t=0}^{m-1} \mathcal{L}_t \ket{+_x^{\otimes 2 n}}-
\widehat{\bra{ +_x^{\otimes 2 n}} \prod_{t=0}^{m-1} \mathcal{L}_t \ket{+_x^{\otimes 2 n}}}| \\
&<
\frac{\epsilon}{2^{n(M+2)}} \sum_{j=0}^{K-1} 
\frac{\delta(2n,M,\beta_j) (j/K)^{M'} K^K e^{-0.744 K} }{|K e^{i \pi/16}-j| j! (K-j-1)!}.
\end{array}
\eed

Considering the case where $K=2M'+1$, having an error 
%\bed
%\delta(2n,M,\beta_j) \leq e^{0.744 (2M'+1)} 2^{n(M+2)} |(2M'+1) e^{i \pi/16}-j| \frac{j! (2M'-j)!}{(2M'+1)^{-(2M'+1)}} ((2M'+1)/j)^{M'}
%\eed
\bed
\begin{array}{lll}
\delta(2n,M,\beta) &\leq& 
\sin \frac{\pi}{16} \; e^{(\beta+1.488)M'} 2^{n(M+2)} \\
&&\times\Gamma((2M'+1) e^{-\beta}+1) \Gamma((2M'+1) \\
&&\times~(1-e^{-\beta}))
(2M'+1)^{-2M'}
\label{deltabound}
\end{array}
\eed
is therefore sufficient for efficient reconstruction of quantum amplitudes. %N.B. the r.h.s decreases faster than exponentially with $n$ and $M$.
Note that the maximum energies from vertical and horizontal bonds in the lattice is $4(2n(M-1)+(2n-1)M)$ and the maximum local field energy is $2n(M-2)8+2n8+2nM+4M((2n-2)2+2)$.  Then we have the bound:  $M'\leq 50nM-12M-24n$. 
To work out how large $\delta(2n,M,\beta)$ is compared to the partition function, we can compute the the needed accuracy for a function of the error $\delta(2n,M,\beta')$ (we use a scaled temperature $\beta'=\beta/\ln 2$ to simplify the expression)
\[
\begin{array}{lll}
f_{\rm error}(2n,M,\beta')&\equiv& -\frac{\ln\delta(2n,M,\beta')}{\beta' 2nM}\\
&>& -\frac{M'\ln 2}{2nM}\\
&+&\frac{M'\ln 2 (0.7387+2\log_2(2M'+1)-2\log_2M')}{\beta' n M}.
\end{array}
\]
For large system sizes,
\[
f_{\rm error}(2n\gg 1,M\gg 1,\beta')>-25\ln 2+\frac{50\ln 2(2.7387)}{\beta'}.
\] 
%This function could then be compared to the free energy per spin, $f(2n,M,\beta')=-\frac{\ln Z(\beta')}{\beta' 2 n M}$ for disordered ferromagnetic Ising models on a square lattice.
Finally we get a bound for the permissible additive error in the estimation:
\begin{equation}
\delta(2n,M,\beta)<\exp(nM(49\beta-190))
\label{CtoQerrorbound}
\end{equation}

Writing $n_x=2n,n_y=M$, since estimating $Z(\beta)$ for a polynomial number (linear in $n_y$) of temperatures with additive error $\delta(n_x,n_y,\beta)$ on each provides the requisite estimate of the quantum scattering matrix element on a poly(n) sized quantum circuit, the complexity of the estimate of $Z(\beta)$ for an arbitrary temperature is BQP-hard.  This completes the proof of Theorem \ref{BQP}.
\end{proof}

We have found how much relative error we can tolerate in an estimation of a classical partition function and still accurately estimate quantum scattering amplitudes.  How does this compare to known accuracy of classical algorithms which provide estimates of these partition functions?    In Ref. \cite{JS} Jerrum and Sinclair construct a fully polynomial randomized approximation scheme (FPRAS) for computing the partition function of an arbitrary classical ferromagnetic Ising model that is consistent.  
%Ferromagnetic means that for the Hamiltonian written as Eq. \ref{eq:def-Ising} the couplings $J_{ij}$ are all non-negative and consistent means the fields $h_i$ are all non-positive or all non-negative \cite{GoldbergJerrum}. 
Specifically they provide a classical algorithm that computes an estimate $\hat{Z}(\beta)$ of the partition function $Z(\beta)=\sum_{\{ \sigma \}}e^{-\beta H(\{ \sigma \})}$ for the ferromagnetic Hamiltonian $H(\{ \sigma \})$ on $N$ spins, with a multiplicative error $\epsilon$ and success probability
\[
{\rm Prob}\Big[|\hat{Z}(\beta)-Z(\beta)|\leq \epsilon Z(\beta)\Big]\geq\frac{3}{4}
\]
in a run time polynomial in $N,1/\epsilon$.  This probability of success can be boosted to $1-\delta$ in a number $\log(1/\delta)$ of repetitions \cite{JS}. Since the classical Hamiltonian in Eq. \ref{eq:q-amp-poly-form} is ferromagnetic, then when $\delta(2n,M,\beta)\geq Z(\beta)$,$\mathtt{IsingEstimator}$ is no more powerful than FPRAS.  In other words, if the tolerable error of $\mathtt{IsingEstimator}$ could be equal to or greater than $Z(\beta)$ for the relevant temperatures needed to reconstruct the scattering matrix element, then BQP-hard problems can be computed in polynomial time via FPRAS.  This is not expected to be the case so we almost certainly have the requirement that the inequality in Eq. \ref{deltabound} is $\delta(2n,M,\beta)< Z(\beta)$ over some significant range of temperatures and that it is smaller by an exponential in the problem size $M'$.  Note it 
is known that the problem of \emph{exactly} computing the partition function for even a ferromagnetic classical Ising model is $\#$P-complete \cite{JS}.  This complexity class is the same as that for counting the number of satisfying assignments of a Boolean function and counting optimal Traveling Salesman tours.  Approximating the partition function with multiplicative error for an anti-ferromagnetic Ising model on a square lattice is NP-hard and for the ferromagnetic model but with general fields is approximation preserving reducible to the complexity class $\#$BIS \cite{Goldberg}.  The latter is as hard as computing the number of independent sets, (an independent set is a set of vertices that does not contain both endpoints of any edge), in a bipartite graph which is thought to be of intermediate complexity between $\#$P and FPRAS.

\subsection{Ising models to compute quantum ground state overlaps}\label{sect:TM-spec}
%%%%%%%%%%%%%%%%%%%%%%%%%%%%%%%%%%%%%%%%
We now consider an application of the mapping between classical partition functions and quantum scattering matrix applitudes:  measuring ground state wavefunction overlaps of quantum Hamiltonians.   It has been argued in Ref. \cite{Zanardi} that wave function overlaps, termed fidelity overlap, can be a good witness to quantum phase transitions when the ground states straddle a phase transition point.  In an ideal laboratory, this problem could be split in two: prepare two quantum registers in the desired states and measure the overlap using, for example, the protocols in Sec. \ref{sect:imp}. A possibility for the preparation step is to initialise the quantum system in the ground state $\ket{\Psi_0}$ of some simple hamiltonian $\hat{H}_0$, and to evolve this Hamiltonian to the target Hamiltonian $\hat{H}^{\star}$. A fundamental result of quantum mechanics, known as the adiabatic theorem, is that if the Hamiltonian is modified slowly enough, the state obtained at the end of the evolution will be very close to the true ground state $\ket{G}$ \cite{Amb-Reg}. Crucially, the time of the evolution need only grow \emph{polynomially} with the inverse of the minimum gap of the system, $\gamma$. 

The purpose of this section is to exhibit situations for which the adiabatic evolution need not be actually implemented. We are going to show that, in a precise sense, ``time can be replaced with space". Roughly speaking, we are going to show that, instead of performing measurements on a quantum system of a given size, say "$\mathsf{size}$" that has been \emph{evolved} for a time "$\mathsf{time}$", we can equivalently measure partition functions of classical Ising models prepared on a system of size $O(\mathsf{size} \times \mathsf{time})$. 

To make things precise, we will focus on the quantum transverse Ising model, described by the Hamiltonian\footnote{We use the hat on the operator to emphasise that this is a quantum Hamiltonian.}:
\beq\label{eq:q-Ising}
\hat{H}^{\star}=-h_{\perp} \sum_{i \in \Lambda} \sigma^x_i \; -  J \sum_{\langle i, j \rangle \\ \in E(\Lambda)} \sigma_i^z \; \sigma_j^z-h \sum_{i \in \Lambda} \sigma_i^z,
\eeq
where $\Lambda$ denotes some $d$-dimensional lattice, and $E(\Lambda)$ denotes the set of edges of $\Lambda$. We are going to view this Hamiltonian as a particular member of a family of time-dependent operators labelled by some time index, $t$. This family is 
\beq\label{eq:ham-family}
\hat{H}(t)=\hat{H}_0+\hat{H}_1(t), \hspace{0.5cm} t \in [0:T]
\eeq
where 
\beq
\begin{array}{lll}
\hat{H}_0&=&-h_{\perp} \sum_{i \in \Lambda} \sigma^x_i, \hspace{0.5cm} \\
\hat{H}_1(t)&=&-\frac{t}{T} J \hspace{-0.1cm} \sum_{\langle i, j \rangle \in E(\Lambda)} \sigma_i^z \; \sigma_j^z- \frac{t}{T} h \sum_{i \in \Lambda} \sigma_i^z.
\end{array}
\eeq
Without loss of generality, we will assume that $h_{\perp} >0$. In that case, $\ket{\Phi_0} \equiv \ket{+_x^{\otimes |\Lambda|}}$ is of course the (unique) ground state of $\hat{H}_0$. Evidently, $\hat{H}(T)=\hat{H}^\star$. The starting point of our construction is a discretisation of an adiabatic evolution

\begin{theorem}\label{thm:Ising-adiab-combined} 

Let $T$ satisfy the inequality
\beq\label{eq:adiab1}
T \geq T_*(\hat{H},\delta)= \frac{10^5}{\delta^2} \frac{\big( |h| \cdot |\Lambda|+ |J| \cdot |E(\Lambda)|\big)^3}{\gamma^4},
\eeq
where $\gamma=\text{min}_{t \in [0:T]} \text{gap} \; \hat{H}(t)$, where $\text{gap} \; \hat{H}(t)$ denotes the difference between the two lowest eigenvalues of $\hat{H}(t)$. Let $L$ denote a positive integer, and let us define the discretisation step as 
\beq\label{eq:def-tau}
\tau\equiv T/L.
\eeq 
The quantity by which the state $U_{L-1} U_{L-2} \ldots U_0 \ket{+_x^{\otimes |\Lambda|}}$ deviates from the true ground state $\ket{G}$ of $H^\star$ is at most 
\beq\label{eq:approx-final}
\begin{array}{lll}
\Delta &=& \delta+ T \sqrt{\frac{2 \big( |h| \cdot |\Lambda|+ |J_{\parallel}| \cdot |E(\Lambda)|\big)}{L}}\\
&&+ K L  \big( |h| \cdot |\Lambda|+ |J_{\parallel}| \cdot |E(\Lambda)|\big) \cdot |h_{\perp}| \cdot |\Lambda| \tau^2,
\end{array}
\eeq
where $K$ is some constant. Each unitary $U_k$ is defined as 
\beq
U_k=e^{-i \tau \hat{H}_0} e^{-i \tau \hat{H}_1(k\tau)}.
\eeq

\end{theorem}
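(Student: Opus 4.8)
The plan is to wedge the discrete product $U_{L-1}U_{L-2}\ldots U_0\ket{+_x^{\otimes|\Lambda|}}$ between $\ket G$ and itself through a chain of triangle inequalities isolating three independent error sources, each controlled by a standard estimate. Write $\ket{\Phi_0}=\ket{+_x^{\otimes|\Lambda|}}$, which since $h_\perp>0$ is the exact, non-degenerate ground state of $\hat H(0)=\hat H_0$, while $\hat H(T)=\hat H^\star$ has ground state $\ket G$ (well defined because $\gamma>0$). Abbreviate $W\equiv|h|\cdot|\Lambda|+|J|\cdot|E(\Lambda)|$ and record the elementary operator-norm inputs $\|\hat H_1(T)\|\le W$, $\|\dot{\hat H}_1(t)\|\le W/T$, $\|\hat H_0\|=h_\perp|\Lambda|$ and $\|[\hat H_0,\hat H_1(t)]\|\le 2h_\perp|\Lambda|\,W$, all immediate from $\|\sum_i\sigma^z_i\|=|\Lambda|$, $\|\sum_{\langle ij\rangle}\sigma^z_i\sigma^z_j\|=|E(\Lambda)|$, $\|\sum_i\sigma^x_i\|=|\Lambda|$. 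For the first step, let $\ket{\psi_T}$ denote the state produced by the exact continuous Schr\"odinger evolution generated by $\hat H(t)$ on $[0,T]$ starting from $\ket{\Phi_0}$; the hypothesis $T\ge T_*(\hat H,\delta)$ is exactly the condition under which the quantitative adiabatic theorem quoted in the statement gives $\|\ket{\psi_T}-\ket G\|\le\delta$, accounting for the first term of $\Delta$.

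For the second step, approximate $\ket{\psi_T}$ by the piecewise-constant evolution $\ket{\psi_T'}\equiv e^{-i\tau\hat H((L-1)\tau)}\cdots e^{-i\tau\hat H(0)}\ket{\Phi_0}$ on the uniform mesh $\{k\tau\}$. Writing the exact propagator as a composition of interval propagators and using that a product of unitaries is $1$-Lipschitz in each factor, $\|\ket{\psi_T}-\ket{\psi_T'}\|$ is at most $\sum_{k=0}^{L-1}$ of the per-interval discrepancy between $e^{-i\tau\hat H(k\tau)}$ and the exact propagator over $[k\tau,(k+1)\tau]$, and the latter is bounded by $\int_{k\tau}^{(k+1)\tau}\|\hat H(s)-\hat H(k\tau)\|\,ds\le\tfrac12\tau^2\|\dot{\hat H}_1\|\le\tfrac12\tau^2 W/T$. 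Summing the $L$ terms yields a bound of order $\tau W/2=TW/(2L)$; the form $T\sqrt{2W/L}$ appearing in $\Delta$ is obtained either by accepting this (valid, if not tight) estimate, or — closer in spirit to the $1/\sqrt{T}$ scaling of the adiabatic theorem itself — by applying a \emph{discrete} adiabatic estimate directly to the step-wise Hamiltonian path, which naturally produces the $1/\sqrt{L}$ dependence. Either way this is the second term.

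For the third step, split each step: replace $e^{-i\tau(\hat H_0+\hat H_1(k\tau))}$ by $U_k=e^{-i\tau\hat H_0}e^{-i\tau\hat H_1(k\tau)}$. Again by $1$-Lipschitzness of the unitary product, the additional error is at most $\sum_{k=0}^{L-1}\|e^{-i\tau(\hat H_0+\hat H_1(k\tau))}-e^{-i\tau\hat H_0}e^{-i\tau\hat H_1(k\tau)}\|$, and the second-order Lie product estimate bounds each summand by $\tfrac12\tau^2\|[\hat H_0,\hat H_1(k\tau)]\|+O(\tau^3)\le\tau^2 h_\perp|\Lambda|\,W+O(\tau^3)$. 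Summing over the $L$ steps gives $KL\,W\,h_\perp|\Lambda|\,\tau^2$, with $K$ absorbing the numerical constant and the $O(\tau^3)$ remainder; this is the third term. Adding the three pieces by the triangle inequality produces the claimed bound on $\Delta$.

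The operator-norm inputs and the Lie-splitting step are routine; the delicate point is the middle term. To land on a clean bound of the advertised shape one must decide between the elementary time-ordering estimate and a genuine discrete adiabatic theorem, and in the latter case track the joint dependence on $T$, $L$ and the minimal gap $\gamma$ and check uniformity along the interpolating path $\hat H(t)$. One must also keep in mind throughout that ``deviation from the true ground state'' is measured up to an irrelevant global phase, and that it is precisely $\gamma>0$ that makes $\ket G$ a well-defined target; these caveats aside, the argument is a bookkeeping of three controlled approximations.
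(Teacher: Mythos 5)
Your three-way decomposition --- adiabatic theorem for the continuous evolution, discretisation onto the mesh $\{k\tau\}$, then splitting each $e^{-i\tau\hat H(k\tau)}$ into $U_k=e^{-i\tau\hat H_0}e^{-i\tau\hat H_1(k\tau)}$ --- is exactly the architecture of the paper's proof in Appendix \ref{app:proof-adiab-combined}, and your first and third steps coincide with the paper's: the Ambainis--Regev adiabatic bound supplies the $\delta$, and the Baker--Campbell--Hausdorff/Lie-product estimate supplies the $K L W h_{\perp}|\Lambda|\tau^2$ term (with $W\equiv |h|\cdot|\Lambda|+|J|\cdot|E(\Lambda)|$). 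The one place you diverge is the middle term, and it matters for the precise form of the bound. The paper does not use the elementary Duhamel estimate you propose; it invokes Lemma 1 of van Dam, Mosca and Vazirani, which states that if two time-dependent Hamiltonians differ by at most $\epsilon$ in operator norm at every time, the induced evolutions differ by at most $\sqrt{2T\epsilon}$; taking $\epsilon=\tau W$ yields precisely $T\sqrt{2W/L}$. Your Duhamel bound $TW/(2L)$ is correct but has a different functional form: it implies the stated term only when $TW/(2L)\le T\sqrt{2W/L}$, i.e.\ when $L\ge W/8$, so ``accepting this valid if not tight estimate'' does not literally establish the theorem for every positive integer $L$ (though it does so in the only regime where the theorem is useful, and is in fact sharper there). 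Your alternative of a ``discrete adiabatic theorem'' is neither needed nor what the paper does --- the adiabatic theorem is applied once, to the continuous path, and the discretisation is handled purely as a perturbation of the generator. With the van Dam lemma substituted for your Duhamel step, your argument becomes the paper's proof.
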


This theorem, whose proof is given in Appendix \ref{app:proof-adiab-combined}, will help us to study  fidelity overlaps, 
\[
 f=\bra{\tilde{G}} G \rangle,
 \]
 where $\ket{G}$ is the ground state of $\hat{H}^{\star}$ and $\ket{\tilde{G}}$ is the ground state of some other Hamiltonian $\hat{\tilde{H}}^{\star}$.  For $T(T')$ and $L(L')$ large enough to build an approximation to $\ket{G}(\ket{\tilde{G}})$, $f$ can be replaced in good approximation with 
\beq\label{eq:overlap}
f \simeq \bra{+_x^{\otimes |\Lambda|}}
W^{\dagger}_{0} W^{\dagger}_{1} \ldots W^{\dagger}_{L'-1}\times~U_{L-1} U_{L-2} \ldots U_0 \;
\ket{+_x^{\otimes |\Lambda|}}.
\eeq
For the transverse Ising model exemplified here the fidelity estimate which gives witness to a quantum phase transition is for the the case $\ket{G}$ being the ground state of $\hat{H}^\star$ with couplings $h=0$ and $\ket{\tilde{G}}$ being the ground state of $\hat{\tilde{H}}^\star$ with couplings $h'=0$, $J'=J$, and $h'_{\perp}=h_{\perp}+\delta h_{\perp}$.  Near the critical point, $h_{\perp}=J$, there is a strong dip in the fidelity especially pronounced for $\delta_{\perp} h/J\sim 0.2$  \cite{Gu}.
\begin{figure}[ht]
\begin{centering}
\includegraphics[scale=0.14]{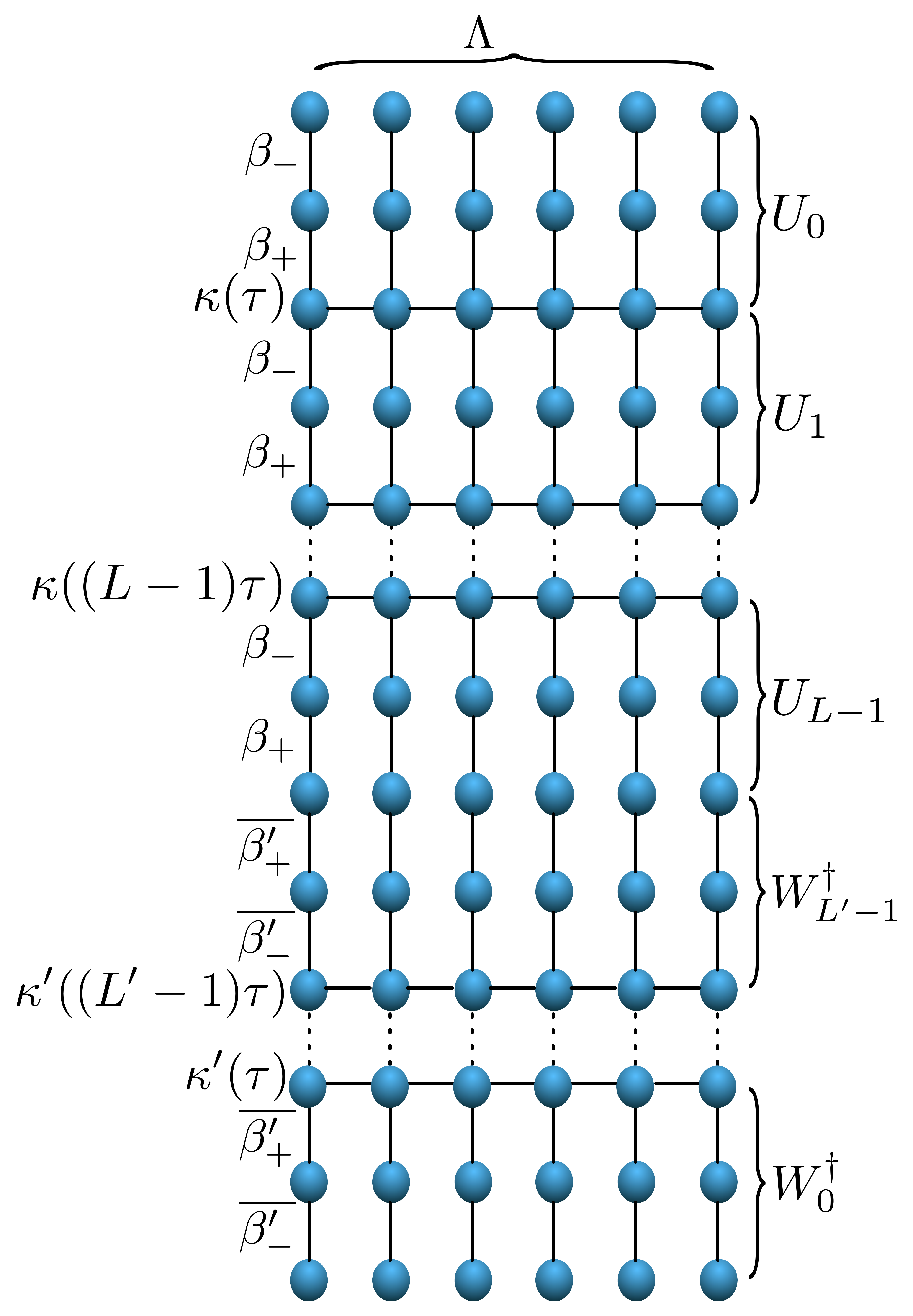}
\caption{Representation of the $d+1$ dimensional classical Ising spin lattice with couplings that encode information of the wavefunction overlap on a $d$ dimensional quantum spin lattice $\Lambda$.  Here the overlap is $\langle \tilde{\Psi}^{\star}| \Psi^\star\rangle$ which is an approximation to the fidelity $f=\langle \tilde{G}| G\rangle$, where $\ket{G}$ is the ground state of a Hamiltonian $\hat{H}$ and $\ket{\tilde{G}}$ is the ground state of $\hat{\tilde{H}}^\star$.  The sequence $U_{L-1}\ldots U_0$ provides for adiabatic evolution, in small time steps $\tau$, of a time dependent Hamiltonian $\hat{H}(t)$ from the product state $\ket{+_x^{\otimes |\Lambda|}}$ to $\ket{\Psi^{\star}}$ (which is an approximation to $\ket{G}$), and similarly for the sequence $W_{L'-1}\ldots W_0$, in steps $\tau'$, which builds an approximation $\ket{\tilde{\Psi}^{\star}}$ of $\ket{\tilde{G}}$ from $\ket{+_x^{\otimes |\Lambda|}}$.    Note that the number of gates $L$ and $L'$ to reach target ground states could differ as will the couplings generically.  Each gate is a composition of diagonal gates with dimensionless coupling $\kappa(t)$ and two non diagonal gates with dimensionless couplings $\beta_{\pm}$.  The temporal evolution of quantum gates can be represent on a classical spin lattice of one extra dimension with bond couplings as indicated on the left.  For the Hamiltonian in Eq. \ref{eq:ham-family} $\kappa(k\tau)$ means dimensionless row couplings $\beta k$ between nearest neighbour spins, local fields of strength $\beta k h/J$, and couplings $\beta_{\pm}$between rows. The parameters for adiabatic evolution to $\ket{\tilde{\Psi}^\star}$ are indicated with primes.
% (b)  A similar representation for computing the state overlap $\langle \Psi_g|\Phi_g\rangle$. where $\ket{\Phi_g}$ is the ground state of a Hamiltonian $\hat{H}(T)$ and $\ket{\Psi_g}$ is the ground state of $\hat{H'}(T')$.  These target Hamiltonians could straddle a quantum phase transition and the computation of the overlap gives a witness to the transition \cite{Zanardi}.
}\label{qsimsfig}
\end{centering}
\end{figure}

We are going to use a classical argument of quantum field theory \cite{Kogut}, in a simple form adapted to our purposes, and show that the overlap (\ref{eq:overlap}) can be expressed as a partition function for a $d+1$-dimensional many body system at finite (complex) temperature, described by a suitable \emph{classical} Ising Hamiltonian.  The operator $e^{-i \tau  \hat{H}_0}$ can be expressed as the transfer matrix of a \emph{classical} system, using the identity \cite{Frad-Suss}
\beq\label{eq:two-spin-tm}
T(\beta)=\sum_{\sigma \sigma'} e^{\beta \sigma \sigma'} \ket{\sigma} \bra{\sigma'}=e^{\beta} (\gras{1}+e^{-2\beta} \sigma^x). 
\eeq
Since on the other hand,
\bed
e^{-i \tau h_{\perp} \sigma^x}=\cos(\tau h_{\perp}) (\gras{1}- i \tan{(\tau h_{\perp})} \sigma^x),
\eed
it would be natural to make the identification $e^{-2 \beta}=-i \tan{\tau h_{\perp}}$, giving $\beta=i \frac{\pi}{4}-\frac{1}{2} \ln \tan{(\tau h_{\perp})}$, in order to relate the quantities to a classical model. Rather we are going to express the single-site unitary operator $e^{-i \tau h_{\perp} \sigma^x}$ in terms of \emph{two} operators $T$. For $\epsilon >0$, let us define $\beta_{\pm}(\epsilon)$ through
\beq\label{eq:cons:betapm}
e^{-2\beta_{\pm}(\epsilon)}=\mp i(1 \pm \epsilon).
\eeq
One checks that 
\beq
\begin{array}{ll}
T(\beta_+(\epsilon)) \; T(\beta_-(\epsilon))=&(2-\epsilon^2) e^{(\beta_{+}(\epsilon)+\beta_{-}(\epsilon))} \\&\times[\gras{1}-i \frac{2 \epsilon}{2-\epsilon^2} \sigma^x].
\end{array}
\eeq
This choice of using two transfer matrices is not strictly necessary but it guarantees that the amount by which $\beta_+$ and $\beta_-$ need deviate from the imaginary axis is small which makes the connection to the traditional classical to quantum mappings \cite{Kogut} more transparent.
%*****

%So,
%\beq
%(2-\epsilon^2)^{-1} e^{-(\beta_{+}(\epsilon)+\beta_{-}(\epsilon))} T(\beta_+(\epsilon)) \; T(\beta_-(\epsilon))= [\gras{1}-i \frac{2 \epsilon}{2-\epsilon^2} \sigma^x].
%\eeq
%We choose $\epsilon$ so that 
%\beq\label{eq:eps-tau}
%\frac{2 \epsilon}{2-\epsilon^2}=\tan(\tau h_{\perp}),
%\eeq
%and we multiply both sides by $\cos(\tau h_{\perp})$. This gives
%\beq
%\cos(\tau h_{\perp}) (2-\epsilon^2)^{-1} e^{-(\beta_{+}(\epsilon)+\beta_{-}(\epsilon))} T(\beta_+(\epsilon)) \; T(\beta_-(\epsilon))=\cos(\tau h_{\perp})  [\gras{1}-i \frac{2 \epsilon}{2-\epsilon^2} \sigma^x]= e^{-i \tau h_{\perp} \sigma^x}.
%\eeq

%*****

So, for 
\beq\label{eq:eps-tau}
\frac{2 \epsilon}{2-\epsilon^2}=\tan(\tau h_{\perp}),
\eeq
we see that the operator $e^{-i \tau \hat{H}_0}$ can be expressed as a product of two \emph{classical} Ising transfer matrices: 
\bed
e^{-i \tau \hat{H}_0}= \Bigg[\sqrt{\frac{1-\epsilon^2}{\epsilon^4+4}}\Bigg]^{|\Lambda|} \prod_{x \in \Lambda} 
T_x(\beta_+(\epsilon)) \; \prod_{y \in \Lambda} T_y(\beta_-(\epsilon)).
\eed

This latter identity allows to express each operator $U_k$ in terms of classical Ising transfer matrices. Introducing closure relations and bearing in mind that the operator $\hat{H}_1(t)$ is diagonal in computational basis, the matrix elements of each operator $U_k$ can now be expressed as a sum over paths on three copies of the lattice $\Lambda$:
\bed
U_k= \sum_{\sigma(k)}
\sum_{\sigma(k+1)}
\sum_{\sigma(k+2)}
e^{\mathscr{L}_{\sigma(k),\sigma(k+1),\sigma(k+2)}}\ket{\sigma (k+2)}
\bra{\sigma(k)},
\eed
Here $\sigma(k)$ denotes a classical spin configurations over one copy of $\Lambda$, and the interaction $\mathcal{L}$, defined over a lattice $\Lambda \times \Lambda \times \Lambda$, is
\bed
\begin{array}{lll}
\mathcal{L}_{\sigma(k),\sigma(k+1),\sigma(k+2)}&=&
\beta_- \sum_{j \in \Lambda} \sigma_j(k) \sigma_j(k+1)\\
&+&\beta_+ \sum_{j \in \Lambda} \sigma_j(k+1) \sigma_j(k+2)\\
&-&i\frac{k\tau^2}{T}
\big(J  \sum_{\langle i, j \rangle \in E(\Lambda)} \sigma_i(k) \sigma_j(k) \\
&+& h \sum_{j \in \Lambda} \sigma_j(k) \big).
\end{array}
\eed
This interaction looks like a classical spin interaction with alternating complex couplings $\beta_+,\beta_-$ in the ``time' direction which transfers between different copies of the lattice $\Lambda$ and complex coupling within the lattice $\Lambda$.  We would like to be able to chose variable couplings along the ``space" and ``time" directions so we define a new interaction (assuming $J\neq 0$)
\bed
\begin{array}{lll}
\mathcal{H}_{\sigma(k),\sigma(k+1),\sigma(k+2)}&=&
\beta_- \sum_{j \in \Lambda} \sigma_j(k) \sigma_j(k+1)\\
&+&\beta_+ \sum_{j \in \Lambda} \sigma_j(k+1) \sigma_j(k+2)\\
&+&\beta k (\sum_{\langle i, j \rangle \in E(\Lambda)} \sigma_i(k) \sigma_j(k) \\
&+& \frac{h}{J} \sum_{j \in \Lambda} \sigma_j(k) \big).
\end{array}
\eed
A similar Hamiltonian can be written to represent evolution by gates $W_k^{\dagger}$:
\bed
\begin{array}{lll}
\mathcal{H'}_{\sigma(k),\sigma(k+1),\sigma(k+2)}&=&
\beta'_+ \sum_{j \in \Lambda} \sigma_j(k) \sigma_j(k+1)\\
&+&\beta'_- \sum_{j \in \Lambda} \sigma_j(k+1) \sigma_j(k+2)\\
&+&\beta' (L'+L-1-k)\\
& \times& (\sum_{\langle i, j \rangle \in E(\Lambda)} \sigma_i(k+3) \sigma_j(k+3) \\
&+& \frac{h'}{J'} \sum_{j \in \Lambda} \sigma_j(k+3) \big).
\end{array}
\eed
Now we can write a Hamiltonian on the \emph{enlarged} lattice $\hat{\Lambda}=\{1, \ldots, 2(L+L')+1 \} \times \Lambda$,
\bed
\begin{array}{lll}
-H(\{\sigma \})&=&\sum_{k=0}^{L-1} \mathcal{H}_{\sigma(2k+1),\sigma(2k+2),\sigma(2k+3)}\\
&+&\sum_{k=L}^{L'+L-1} \mathcal{H'}_{\sigma(2k+1),\sigma(2k+2),\sigma(2k+3)},
\end{array}
\eed
which takes exactly the form of a classical $(d+1)$-dimensional Ising Hamiltonian but with complex couplings.  The associated partition function depends on the vector of couplings $\vec{\beta}\equiv\{\beta_+,\beta_-,\beta'_+,\beta'_-,\beta,\beta'\}$
\[
Z(\vec{\beta})=\sum_{\{\sigma\}}e^{-H(\{\sigma\})}
\]
and is a sum over classical configurations defined over $\hat{\Lambda}$.
 
Substituting this expression in Eq.(\ref{eq:overlap}), we see that the fidelity overlap can be approximated by 
\beq\label{eq:corr-action}
f= \frac{1}{2^{|\Lambda |}}\Big[\sqrt{\frac{1-\epsilon^2}{\epsilon^4+4}}\Big]^{L |\Lambda|} \Big[\sqrt{\frac{1-\epsilon'^2}{\epsilon'^4+4}}\Big]^{L' |\Lambda|} \; Z(\vec{\beta}^\star).
\eeq
where, $\epsilon'$ is a solution to $2 \epsilon'/(2-\epsilon'^2)=\tan(\tau' h'_{\perp})$ appropriate for the quantum Hamiltonian $\hat{\tilde{H}}^\star$ and the vector of complex variables $\vec{\beta^\star}\equiv\{\beta^\star_+,\beta^\star_-,\beta'^\star_+,\beta'^\star_-,\beta^\star,\beta'^\star\}$ is
\begin{equation}
\begin{array}{lll}
\beta_{\pm}^{\star}&=&\pm\Bigg(\frac{i\pi}{4}+\frac{1}{2}\log\Big(1\pm \Big(\sqrt{3-\cos(2 \tau h_{\perp})}\\&&\csc(\tau h_{\perp})/\sqrt{2}-\cot(\tau h_{\perp})\Big)\Big)\Bigg)\\
\beta_{\pm}^{'\star}&=&\mp\Bigg(\frac{i\pi}{4}+\frac{1}{2}\log\Big(1\pm \Big(\sqrt{3-\cos(2 \tau' h'_{\perp})}\\&&\csc(\tau' h'_{\perp})/\sqrt{2}-\cot(\tau' h'_{\perp})\Big)\Big)\Bigg)\\
\beta^{\star}&=&\frac{iJT}{L^2}\\
\beta'^{\star}&=&-\frac{iJ'T'}{L'^2},
\label{betavariables}
\end{array}
\end{equation}
which were obtained by solving Eqs. \ref{eq:cons:betapm},\ref{eq:eps-tau}.  Note for $\tau h_{\perp}\ll 1$, $\beta_{\pm}^{\star} = \pm i \frac{\pi}{4}\pm \frac{\tau h_{\perp}}{2}- O(\tau^2 h_{\perp}^2).$ which is the statement that the analytic continuation is performed to nearly purely imaginary couplings strengths along the ``time" direction.

As described in Sec. \ref{sec:anacon} we can write the partition function in a power series in exponentials of the coupling parameters.  For simplicity we assume $J=J'=1, h'=h=0$ in which case:
\begin{equation}
\begin{array}{lll}
Z(\vec{\beta})&=&\sum_{g_1=-m_1}^{m_1}\sum_{g_2=-m_2}^{m_2}\sum_{g_3=-m_3}^{m_3}\sum_{g_4=-m_4}^{m_4}\\
&&\sum_{g_5=-m_5}^{m_5}\sum_{g_6=-m_6}^{m_6}c_{g_1,g_2,g_3,g_4,g_5,g_6}\\
&&\times~e^{\beta g_1}e^{\beta' g_2}e^{\beta_+ g_3}e^{\beta_- g_4}e^{\beta'_+ g_5}e^{\beta'_- g_6},
\label{Zpower}
\end{array}
\end{equation}
where:
\begin{equation}
\begin{array}{lll}
&&m_1=m_2= L|\Lambda|,\quad m_3 =m_4= L'|\Lambda|,\quad \quad \\
&&m_5 =  L(L-1)(|\Lambda|-1)/2,\quad \\
&&m_6 =  L'(L'-1)(|\Lambda|-1)/2.
\end{array}
\label{mss}
\end{equation}
In Appendix \ref{sect:overlapreconstruct} it is shown how the coefficients $c_{g_1,g_2,g_3,g_4,g_5,g_6}$ can be obtained by sampling the partition function for $O(poly(L^2|\Lambda|,L'^2|\Lambda|))$ number of \emph{real} coupling strengths $\vec{\beta}$ which then gives an estimate $\widehat{Z}(\vec{\beta})$.
In order to obtain an estimate $\hat{f}$ of the fidelity overlap, we then need to perform an analytic continuation:
\begin{equation}
\hat{f}=\frac{1}{2^{|\Lambda |}}\Big[\sqrt{\frac{1-\epsilon^2}{\epsilon^4+4}}\Big]^{L |\Lambda|} \Big[\sqrt{\frac{1-\epsilon'^2}{\epsilon'^4+4}}\Big]^{L' |\Lambda|} \; \widehat{Z}(\vec{\beta}^\star).
\end{equation}
Suppose we demand the error in the estimation of the fidelity to be $\epsilon=O(1/poly(|\Lambda|)$:
\[
|\hat{f}-f|\leq \epsilon,
\]
and that the additive error in the estimation of the classical partition function satisfies
\beq
\textrm{Prob}[|\widehat{Z}(\vec{\beta})-Z(\vec{\beta})| \leq \epsilon \; \delta(\vec{\beta})]\geq \frac{3}{4}.
\eeq
Then it is shown in Appendix \ref{sect:overlapreconstruct} that the following precision will suffice:
%\begin{align}
%\label{eq_delta_result}
%\delta(\vec{\beta})\leq \sum_{j=1}^6 n_j \left(g(\beta_j)+ \frac{\beta_j}{2}\right) + \sum_{j=1}^4 \log{(n_j+1)} + \sum_{j=1}^6 g(\beta_j)+K\;\;,
%\end{align}
%where:
%\begin{equation}
%\begin{array}{lll}
%g(\beta_j)&=&(1-e^{-\beta_j})\log{(1-e^{-\beta_j})}-\beta_j e^{-\beta_j}-\frac{7}{8}\\
%K&=& 4\log{\sin{\frac{\pi}{4}}}+2\log{J T}-2\log{L L'}\;\;.
%\end{array}
%\end{equation}
%A less conservative, but more compact, result is given by the following expression, valid in the thermodynamic limit:
\begin{equation}
\begin{array}{lll}
\delta(\vec{\beta})&\leq& 16 TT'L (L-1)L'(L'-1) |\Lambda|^6 \\
&&\times~e^{(\beta_++\beta_--6.4)L|\Lambda|}e^{(\beta'_++\beta'_--6.4)L'|\Lambda|}\\
&&\times~e^{(\frac{\beta}{2}-1.6)L(L-1)(|\Lambda|-1)}\\
&&\times~e^{(\frac{\beta'}{2}-1.6)L'(L'-1)(|\Lambda|-1)}.
\end{array}
\end{equation}
To summarize, the required precision in the partition function estimation shows an exponential dependence on quadratic and cubic quantities in the system size. The origin of this dependence lies on the number of  Fourier frequencies needed to reconstruct the partition function.  This  number is obtained by summing the amplitude of the bonds in the lattice (represented in  Fig. \ref{qsimsfig}) associated with each coupling. The interactions corresponding to vertical (horizontal) bonds need a number of Fourier frequencies which is quadratic (cubic) in the system size. This different behaviour  ultimately comes from the chosen adiabatic time dependence on the total Hamiltonian of the system (Eq. \ref{eq:ham-family}).

The reconstruction of the fidelity overlap described above required sampling over a large range for the six ``temperatures" $\vec{\beta}=\{\beta^1,\beta^2,\beta^3,\beta^4,\beta^5,\beta^6\}$.  We can further ask how precisely we need to sample the classical partition function if we only sample over a finite intervals.  Since, for finite systems, the partition function is analytic this is indeed possible.  Defining the interval for each temperature as
\begin{equation}
\Delta_j=\frac{e^{-\beta^j_{\rm min}}-e^{-\beta^j_{\rm max}}}{2m_j}.
\end{equation}
 it is shown in Appendix \ref{sect:overlapreconstruct} that for $\Delta_j\ll 1$ the required precision scales as
\begin{equation}
\delta(\vec{\beta})\le 16TT'L(L-1)L'(L'-1)  \prod_{j=1}^6 e^{(\frac{\beta^j}{2}-1+\log{\frac{\Delta_j}{2}})2m_j}.
\end{equation}
Hence one incurs a penalty exponential in the system size to sample only over a small temperature interval.

We comment that for the sake of simplicity we have restricted our analysis to some homogeneous quantum Ising models in $d$ dimensions, models which already have a well known correspondence to the $d+1$ dimensional classical Ising model \cite{Wu}.  Indeed one may wonder why go through this laborious reconstruction technique involving sampling classical partition functions over six temperature parameters when the quantum phase transition in the $d$ dimensional quantum  transverse Ising model can be simply probed by computing the classical partition function on a $d+1$ dimensional lattice around the critical temperature.  However our construction is more general and allows analysis of quantum models which do not have a well defined classical correspondence.    For example, extensions to disordered quantum spin Hamiltonians of the form, say,
\[
\begin{array}{lll}
 \hat{H}&=&-\sum_{i} (h_i^x \sigma_i^x+h_i^y \sigma_i^y+h_i^z \sigma_i^z)\\
 &&- \sum_{\langle i,j \rangle} (J^x_{i,j} \sigma^x_i \sigma^x_j+J^y_{i,j} \sigma^y_i \sigma^y_j+J^z_{i,j} \sigma^z_i \sigma^z_j)
 \end{array}
 \]
is straightforward \footnote{Again, one could use the Baker-Campbell-Hausdorff expansion to decompose the evolution operator associated with this Hamiltonian. Then, it would be enough to express $\sigma^y$ in terms of $\sigma^x$ and $\sigma^z$ operators using an Euler angle decomposition.}.

Finally, while fidelity overlaps could be estimated using the method of mapping to a generic quantum circuit presented in Sec. \ref{sect:compu-power}, the method described in this section is much more efficient in resource scaling since the gates are applied directly using the transfer matrix formalism rather than mapping to a fixed library of quantum gate in an encoded circuit.  Furthermore, the required accuracy of estimation of the partition function is exponentially better than the bound computed in that case (Eq. \ref{CtoQerrorbound}).

%
%  The mapping illustrated here for the quantum transverse Ising model could be extended to other quantum spin Hamiltonians in $d$ dimensions by breaking short time evolution by the time dependent Hamiltonian into two steps by Trotter expansion, one diagonal in the computational basis with couplings within the lattice and other off diagonal and composed of local gates which can be written in terms of transfer matrices on the associated $d+1$ dimensional classical spin lattice.  

 \subsection{Corner magnetisation and estimating partition functions}
 \label{cornermag}
 
The foregoing analysis illustrates the computational power of accurate evaluation of Ising partition functions. We can wonder what is the computational power of more modest tasks, such as estimating the mean values of specific observables. We have studied a simple instance of this problem. As it turns out, very simple tasks already have computional power. For instance, the ability to accurately estimate single site magnetisations on random Ising models lead to random approximation schemes for partition functions. This is the content of the following theorem.

\begin{theorem}

Consider the Ising model on a two-dimensional square lattice $\Lambda$, described by the Hamiltonian:
\beq
H(\sigma)=-J\sum_{\langle i, j \rangle} \sigma_i \sigma_j-h\sum_{i \in \Lambda} \sigma_i.
\eeq
For any $\epsilon$, inverse temperature $\beta$, and magnetic field strength $h$ it is possible to provide an estimate $\hat{Z}(\beta,h)$ for the Ising partition function $Z(\beta,h)$ satisfying
\beq
{\rm Prob}[|\hat{Z}(\beta,h)-Z(\beta,h)| \leq \epsilon \; Z(\beta,h)] \geq 3/4,
\eeq
in a time that scales at most polynomially with $\beta,\epsilon^{-1}, |h|$, and the size of the system if we are able to perform corner magnetisation measurements on specific non-homogeneous Ising systems with a relative precision that need not be lower than the inverse of some polynomial in $|h|$, $\epsilon^{-1}$ and the size of the system.

\end{theorem}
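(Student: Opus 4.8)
The plan is to express $\log Z(\beta,h)$ as a telescoping sum whose terms are either closed-form or integrals of local correlation functions, and then to realise every correlation function that appears as a \emph{corner} magnetisation of a suitably enlarged, non-homogeneous Ising model that the assumed device can measure.

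First I would set up a peeling (self-reducibility) identity. Order the $N$ sites $v_1,\dots,v_N$ of $\Lambda$ so that, for the subgraph $\Lambda_k$ induced by $\{v_1,\dots,v_k\}$, each $v_k$ sits at a corner of $\Lambda_k$ (on the square lattice one may strip the boundary row by row, so that $v_k$ has degree one or two in $\Lambda_k$). Summing out the last spin gives the exact relation
\beq
\frac{Z(\Lambda_k)}{Z(\Lambda_{k-1})}=\mean{2\cosh\!\big(\beta h+\beta J\sum_{j\sim v_k}\sigma_j\big)}_{\Lambda_{k-1}},
\eeq
the average being taken in the smaller model, and $Z(\Lambda_1)=2\cosh\beta h$. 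Since $\sum_{j\sim v_k}\sigma_j\in\{-2,0,2\}$, the right-hand side lies in $[\,2,\,2\cosh(\beta(|h|+2|J|))\,]$, so each factor's logarithm is $O(\beta(|h|+|J|))$ and the telescoping reproduces $\log Z$ with well-controlled conditioning. Reading each factor off directly as a closed-form function of one or two correlators would force a measurement accuracy exponential in $\beta$ (through the chain rule), so instead I would compute each ratio by thermodynamic integration: the $2\cosh$ weight is a positive function of $(\sigma_a,\sigma_b)$, hence equal to $e^{c_0+c_a\sigma_a+c_b\sigma_b+c_{ab}\sigma_a\sigma_b}$ with real $c$'s, so the ratio equals $e^{c_0}\,\tilde Z/Z(\Lambda_{k-1})$ where $\tilde Z$ is the Ising partition function on $\Lambda_{k-1}$ with two extra fields at $a,b$ and one extra bond on $(a,b)$; integrating $\partial_{\beta h_j}\log Z=\mean{\sigma_j}$ and $\partial_{\beta J_{jl}}\log Z=\mean{\sigma_j\sigma_l}$ along a path that switches these on has total length $O(\beta(|h|+|J|))$ and bounded integrands, so a polynomial number of integration nodes, each evaluated to relative precision $1/\mathrm{poly}(\beta,|h|,N,\epsilon^{-1})$, suffices.

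The heart of the argument is then to show that every one- or two-point correlator $\mean{\sigma_i}_{\mathcal M}$ or $\mean{\sigma_i\sigma_j}_{\mathcal M}$ of an intermediate model $\mathcal M$ is recoverable from a corner magnetisation. To do this I would attach an auxiliary spin $z$ carrying a fixed nonzero field $\nu$ and small couplings $\mu_i,\mu_j$ to $i$ and $j$; because $i$ and $j$ are kept on a common face of the planar graph $\mathcal M$ throughout the peeling (removing the corner $v_k$ makes its neighbours adjacent on the outer face), $z$ can be drawn at a corner. A short perturbative expansion around $\mu=0$, where $z$ decouples, gives $\partial_{\mu_i}\mean{\sigma_z}|_{0}=\mathrm{sech}^2\!\nu\,\mean{\sigma_i}_{\mathcal M}$ and shows $\partial_{\mu_i}\partial_{\mu_j}\mean{\sigma_z}|_{0}$ to be an explicit, $\nu$-dependent linear combination of $\mean{\sigma_i\sigma_j}_{\mathcal M}$ and $\mean{\sigma_i}_{\mathcal M}\mean{\sigma_j}_{\mathcal M}$; estimating these derivatives by finite differences of $\mean{\sigma_z}$ over $\mu_i,\mu_j\in\{0,\pm\mu_0\}$ and inverting the (well-conditioned, for $\nu=O(1)$ and $\mu_0=1/\mathrm{poly}$) linear relation delivers the correlator.

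Finally I would assemble the complexity and confidence bound: the whole procedure issues $\mathrm{poly}(N,\beta,|h|,\epsilon^{-1})$ corner-magnetisation queries, each on a planar non-homogeneous Ising model differing from $\Lambda$ only by an auxiliary spin and a handful of rescaled couplings, and each required only to inverse-polynomial relative precision; boosting each query to failure probability $1/\mathrm{poly}(N)$ by a median of $O(\log N)$ repetitions and applying a union bound yields overall success probability $\ge 3/4$, while propagating the errors through the bounded telescoping sum and exponentiating gives $|\hat Z-Z|\le\epsilon Z$. I expect the main obstacle to be the geometric bookkeeping of the middle step: one must fix the peeling order and the successive planar drawings so that \emph{every} correlator demanded by the thermodynamic integration lives on a pair of sites sharing a face (so the auxiliary corner spin can be attached without breaking planarity) while keeping all intermediate fields and couplings bounded, and one must check that the finite-difference inversion never forces a super-polynomial precision.
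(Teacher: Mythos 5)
Your overall architecture --- site-by-site self-reducibility, thermodynamic integration of $\log Z$, and reduction of every required expectation value to a corner magnetisation --- is a genuinely different route from the paper's, which telescopes in the field strength from the exactly solvable $h=0$ point following Jerrum and Sinclair, estimates each ratio $Z(h_k)/Z(h_{k-1})$ by Monte Carlo, and produces the required Boltzmann samples by measuring a corner magnetisation, drawing that spin exactly from the measured marginal, clamping it, and recursing (Bayes' theorem plus a total-variation/finesse error analysis). However, your version has a genuine gap at its central step: the auxiliary-spin gadget does not extract the two-point function. Summing out the auxiliary spin $z$ exactly, and writing $T=\tanh(\beta\nu)$, $s_i=\tanh(\beta\mu_i)$, $s_j=\tanh(\beta\mu_j)$, $m_{i}=\mean{\sigma_{i}}_{\mathcal M}$, $m_{j}=\mean{\sigma_{j}}_{\mathcal M}$ and $C=\mean{\sigma_i\sigma_j}_{\mathcal M}$, one finds
\beq
\mean{\sigma_z}=\frac{T+m_is_i+m_js_j+T\,C\,s_is_j}{1+Tm_is_i+Tm_js_j+C\,s_is_j}\equiv\frac{N}{D},
\eeq
so that $\partial\mean{\sigma_z}/\partial C=s_is_j(TD-N)/D^2=s_is_j(T^2-1)(m_is_i+m_js_j)/D^2$. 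The coefficient of $s_is_j$ in the Taylor expansion about the origin is $-2T(1-T^2)m_im_j$: the $TC$ contributions from numerator and denominator cancel exactly, so $\partial_{\mu_i}\partial_{\mu_j}\mean{\sigma_z}\big|_{0}$ contains \emph{only} the product $\mean{\sigma_i}\mean{\sigma_j}$ and not $\mean{\sigma_i\sigma_j}$, contrary to what you assert. Worse, when $m_i=m_j=0$ (which occurs for the symmetric intermediate models on your integration path) $\mean{\sigma_z}\equiv T$ is independent of $C$ to \emph{all} orders, so no finite-difference scheme on this gadget can recover $\mean{\sigma_a\sigma_b}_t$ --- and that correlator is unavoidable in your scheme, since the $2\cosh$ weight produced by peeling a degree-two corner necessarily carries a bond term $c_{ab}\sigma_a\sigma_b$ with $c_{ab}\neq 0$.

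The natural repair essentially imports the paper's clamping idea: obtain $\mean{\sigma_a}$ by bringing $a$ to a corner of the sublattice and measuring it directly, and obtain $\mean{\sigma_a\sigma_b}$ from $\mean{\sigma_a\sigma_b}=\sum_{\pm}\mean{\sigma_a}_{\sigma_b=\pm1}\Pr[\sigma_b=\pm1]$ with $\Pr[\sigma_b=\pm1]=(1\pm\mean{\sigma_b})/2$, i.e.\ by first measuring the corner magnetisation at $b$ and then at $a$ in the two models with $\sigma_b$ clamped; clamping merely shifts the local fields of $b$'s neighbours, so the measured systems remain non-homogeneous square-lattice Ising models of the kind the theorem assumes. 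With that substitution the rest of your argument (boundedness of each log-ratio, polynomially many quadrature nodes, inverse-polynomial precision per query, median boosting and a union bound) goes through and yields a proof parallel to, though organised quite differently from, the one in Appendix F. Two smaller points to clean up: your measured systems with the extra bond $(a,b)$ and the auxiliary vertex are planar but not square-lattice, and $\sum_{j\sim v_k}\sigma_j$ ranges over $\{-2,-1,0,1,2\}$ once degree-one peelings occur, not $\{-2,0,2\}$.
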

\begin{proof}
The proof is given in Appendix \ref{sect:FPRAS}.
\end{proof}

This result might appear surprising since it applies to even to antiferromagnetic Ising models whereas,  as discussed above, a multiplicative approximation of the partition function in that case is an NP-hard problem.  However, corner measurement is a quantum process which assumes the thermal state of the classical Hamiltonian has been prepared.  Some earlier work \cite{Yung, LidarBiham, vandenNestII} provides quantum algorithms to simulate thermal states of classical spin models.  However as mentioned in Sec. \ref{sec:anacon}, generically these algorithms scale exponentially in the system size, and given the complexity of multiplicative approximations of antiferromagnetic partition functions we would not expect a drastic improvement in thermal state preparation by quantum algorithms in that case.  Whether efficient quantum algorithms exist for preparing ferromagnetic thermal states is as far as we know an open problem but if so than corner magnetisation measurement could prove a useful diagnostic for such algorithms since classical FPRAS is available.  Finally, we add that recently quantum algorithms for FPRAS were found which exhibit a quadratic speed up over the classical counterparts \cite{Wocjan}.  These algorithms are rather different in spirit from measuring corner magnetisation as instead of using mixed states they use a combination of Grover search and phase estimation to prepare pure states of many qubit systems which coherently encode probability distributions of various classical spin configurations.

\section{Conclusions}
%%%%%%%%%%%%

In conclusion, we have presented schemes allowing for the measurement of partition functions and mean values of classical many-body systems, at complex temperatures. Although we have mainly focused on Ising Hamiltonians, these schemes can be generalised to other systems, such as the $q$-state Potts model for instance. We have presented two applications of these schemes. 

First, we have studied the possibility to use it in order to compute real temperature partition functions. Although our findings yielded results as poor as previous attempts made by other authors, it is interesting to have found similar results using a different route, in particular one that involves reconstructing partition functions for all temperatures as opposed to a single temperature. We have also seen how experimental data allow to \emph{a posteriori} sharpen error estimates, through a central-limit theorem. This theorem has a validity that extends beyond the present context. Some of its implications will be discussed elsewhere \cite{IB}. To the best of our knowledge, the problem of determining whether quantum mechanics can be used (or not) to efficiently compute partition functions of classical models, or even FPRAS thereof, is still open.  As a second application, we have seen how some link invariants could be deduced from the ability to detect imaginary temperature partition functions, again using constant depth quantum circuits.

These applications all rely on two kinds of schemes, one whose implementation could, in principle, only require a constant time, another involving a time evolution. All schemes translate naturally into global operations and measurements supplemented by edge addressability.  This is natural for certain architectures such as cold trapped atoms in optical lattices \cite{Greiner}, or superconducting qubit arrays \cite{Nori}.  Furthermore, this kind of quantum processing can be made fault tolerant without demanding more addressability as shown in \cite{GlobalQC}. 
% The constant-time schemes, allowing to study ground states of quantum hamiltonians, have been designed with cold atomic implementations in mind. It would be interesting to investigate whether similar schemes could be implemented with other platforms, such as anyons for example \cite{Nayak}. 

We have considered the dual of the first application mentioned, and studied the possibility to efficiently simulate a quantum computer, given the ability to estimate \emph{real temperature} disordered Ising partition functions. We have found that quantum amplitude of a depth-$D$ quantum circuit, acting on $n$ qubits, could be reliably estimated if suitably associated disordered Ising models could evaluated with a precision that essentially grows exponentially with $D$ and $n$. The problem of simulating quantum circuits from statistical mechanical partition functions, estimated with a looser precision (polynomial, say) is, just as open its dual.  One implication is that given the power to compute classical partition functions in $d+1$ dimensions, in certain cases one can compute quantities relevant to quantum phase transitions in $d$ dimensions.  This argument involved viewing the overlap of two ground states of a quantum Hamiltonian as the scattering matrix element for a quantum computation which can then be estimated by computing classical Ising model partition functions with real couplings.  The method was illustrated for the particular case of the quantum transverse Ising model in one dimension and while that model already has a well know classical correspondence, the technique extends to a variety of other quantum spin Hamiltonians in a straightforward manner.  This mapping could provide new ways to perform quantum simulation, via either quantum or classical algorithms for estimating Ising model partition functions.  Given some of the difficulties that beset fault tolerant implementations of quantum simulations \cite{Deutsch, Brown} new approaches are certainly desirable.

Finally, we have seen how the ability to prepare thermal states and perform single qubit measurements immediately implies the existence random approximation schemes. This observation naturally leads to wonder what is the quantum complexity of the preparation of classical thermal state. In view of recent inapproximability results \cite{Galanis}, it would be very interesting to solve this question in the case of the anti-ferromagnetic Ising model for instance.

\section{Acknowledgements}
%%%%%%%%%%%%%%%%

We would like to thank A. Riera, M. Bremner, T. Cubitt, G. De Las Cuevas, J.I. Latorre, D. P\'erez-Garc\'ia, J. Twamley, and M. van den Nest for discussions. S.I. acknowledges financial support from the Ramon y Cajal program (RYC-2009-04318).  G.K.B. received support from the European Community's Seventh Framework Programme (FP/2007Ð2013) under grant agreement no. 247687 (Integrating Project AQUTE).  G.K.B., M.C., and J.T. received support through the ARC via the Centre of Excellence in Engineered Quantum Systems (EQuS), project number CE110001013.
\appendix

\section{Disordered Systems}\label{appendix_a-priori}
%%%%%%%%%%%%%%%%%%%%%%%%%%%%%%%%%%%%%%%%%

\emph{Preliminary}: We found it convenient to use a slight variation of the detection schemes described in Section \ref{sect:ctpf} and consider single qubit gates described by conjugation of a phase gate by the Hadamard gate:
\begin{equation}
G(\theta)= \mathsf{Had} \left(\begin{array}{cc}1&0\\0&e^{i\theta}\end{array}\right) \mathsf{Had} \;\;.
\end{equation}
For $\theta^\star=-i\log\tanh\beta J$, this single qubit gate turns out to be equal to $T(\beta J)/2 \cosh(\beta J)$, where $T(\beta J)$ is the two-spin Ising transfer matrix introduced in Eq.(\ref{eq:two-spin-tm}).

In this appendix, we are interested in two-dimensional Ising models, of size $n \times m$, with random bond interactions having strengths taking values in $\{ -1,+1 \}$. The magnetic field felt by each spin is also assumed to be random and takes value in $\{-1,0,+1\}$. For a \emph{fixed} configuration of bonds and magnetic fields, the partition function can be evaluated for a specific range of complex temperature.  This is done via instantaneous measurements on a two-dimensional lattice of quantum particles, or through the time evolution of a one-dimensional quantum system. 

The one-step protocol doesn't pose any particular problem for disordered systems. From quantum amplitudes of the form given by Eq.(\ref{eq:ima-pf-simple}) evaluated at specific angles, one can reconstruct the partition function through analytic continuation. Namely,
\beq\label{eq:simple-reconst}
Z(\beta)=A(i \beta)=\sum_{j_1=0}^{2N_1}  w^{(N_1)}( i \beta- \alpha_{j_1}) A(\frac{2 j_1 \pi}{N_1}),   
\eeq
where $w^{(N_1)}$ is defined by Eq.(\ref{def:omega}), and where $N_1$ is polynomial in $n$ and in $m$. 

The case of the time evolved scheme is slightly more complicated than in Section \ref{sect:ctpf}. Reproducing the reasoning presented in that section, one can find an appropriate sequence of controlled phase gates (\ref{eq:def-controlled-phase}) and $G-$gates that provides relevant quantum amplitudes. The real partition functions are again obtained after Fourier transform and analytic continuation. It turns out that three parameters are enough for that. One, $\alpha$, takes into account constant-time interactions and magnetic fields. The two others, $\theta^+$ and $\theta^-$, are respectively related to ferromagnetic and antiferromagnetic interactions between particles corresponding to consecutive time-slices. More precisely, one can see that the kind of partition functions we wish to consider can be written as 
\beq\label{eq_reconstructed}
\begin{array}{lll}
Z(\beta)&=&2^m(e^{\beta}+e^{-\beta})^{N_2^++N_2^-}\sum_{{\nu_1}=-N_1}^{N_1}\sum_{{\nu^+_2}=0}^{N^+_2}\\
&&\sum_{{\nu^-_2}=0}^{N^-_2}c_{{\nu_1} {\nu^+_2}{\nu^-_2}} e^{{\nu_1}\beta}(\tanh{\beta})^{{\nu^+_2}+{\nu^-_2}}\;\;,
\end{array}
\eeq
where $N_1,N^+_2,N^-_2$ are again polynomial in $n$ and in $m$. Actually, $N_1=2 nm-n$ represents a bound on the total number of ``horizontal" bonds plus the number of sites, while $N_2^+$ (resp. $N_2^-$) represents the number of ferromagnetic (resp. antiferromagnetic) edges connecting spins at different time-slices ("vertical" bonds) ($N_2^++N_2^-=m(n-1)$). The coefficients $c_{{\nu_1} {\nu^+_2}{\nu^-_2}}$ are essentially Fourier transforms of quantum amplitudes $A(\alpha,\theta^+,\theta^-)$ detected at selected angles $\alpha,\theta^+,\theta^- \in (0,2\pi]$:
\beq\label{eq:petit-Fourier}
\begin{array}{lll}
c_{{\nu_1} {\nu^+_2}{\nu^-_2}}&=&\frac{(-1)^{{\nu^-_2}} }{(2N_1+1)(N^+_2+1)(N^-_2+1)}\\
&&\times\sum_{j_1=0}^{2N_1}\sum_{j_2^+=0}^{N^+_2}\sum_{j_2^-=0}^{N^-_2}\\
&&\times~e^{-2\pi i ( \frac{\nu_1 j_1}{2N_1+1}+\frac{\nu^+_2  j^+_2}{N^+_2+1}+\frac{\nu^-_2 j^-_2}{N^-_2+1})} \\
&&\times~A(\frac{2 j_1 \pi}{2N_1+1}, \frac{ 2 j^+_2 \pi}{N^+_2+1},\frac{ 2 j^-_2 \pi}{N^-_2+1}).
\end{array}
\eeq
One can note how the particular form of the $G-$gate (which does not involve terms of the form $e^{-i\theta}$) allows for the Fourier transform in $\theta^\pm$ to be restricted to positive frequencies. Again, plugging Eq.(\ref{eq:petit-Fourier}) into Eq.(\ref{eq_reconstructed}) allows one to express the partition function as a function of the ``experimental" data:
\beq\label{eq:composite-reconst}
\begin{array}{lll}
Z(\beta)&=& \frac{2^m (e^\beta+e^{-\beta})^{N_2^++N_2^-}}{(2N_1+1)(N_2^++1)(N_2^-+1)} \\
&&\times\sum_{j_1=0}^{2N_1}\sum_{j_2^+=0}^{N^+_2}\sum_{j_2^-=0}^{N^-_2} A(j_1,j_2^+,j_2^-)\\
&&\times(e^\beta e^{-\frac{2 i \pi j_1}{2N_1+1}})^{-N_1} S^{(2N_1)}(e^\beta e^{-\frac{2 i \pi j_1}{2N_1+1}})\\
&&\times~S^{(N)}(\tanh\beta e^{-\frac{2 i \pi j_2^+}{N_2^++1}})\\
&&\times~S^{(N)}(-\tanh\beta e^{-\frac{2 i \pi j_2^-}{N_2^-+1}}),
\end{array}
\eeq
where $S^{(N)}(q) \equiv (1-q^{N+1})/(1-q)$.

The restricted set of possible values for the couplings and magnetic fields implies that the partition function of the disordered Ising model we are considering can be written as
\beq
Z(\beta)=\sum_{k=-N}^{N} \xi_k e^{-k \beta}
\label{eq_coefficients_xi}
\eeq
where again $N$ scales polynomially with the system size, and where each $\xi_k$ is positive integer whose magnitude is at most equal to the number of possible configurations for the system, i.e. $\xi_k \leq 2^{n m}, \forall k$.  This implies  they can be represented exactly with $nm$ bits. Thus, the estimation of each coefficient $\xi_k$ with $nm$ bits of accuracy, i.e. with a variance $\mathsf{E}_2(\xi_k)$ lower than one would allow for an exact reconstruction of the partition function for \emph{all} temperature. Yet another Fourier transform shows that 
\beq\label{eq_coefficients}
\xi_k=\frac{1}{2N+1} \sum_{j=0}^{2N+1} Z(i \frac{2 j \pi}{2N+1})e^{-i \frac{2 j \pi}{2N+1}}. 
\eeq 
Combining this latter relation with Eq.(\ref{eq:simple-reconst}) for instance, it is possible to see that in order to get $\xi_k$ with $nm$ bits of accuracy, one would need to estimate the quantum amplitudes themselves with $O(nm)$ bits of accuracy. Unfortunately, we do not know how to do that efficiently. In our scheme, the quantum amplitudes are obtained from repeated Bernoulli trials. It therefore seems that $O(2^{nm})$ trials are then necessary. A similar conclusion is reached when the time evolution protocol in one lower dimension is used.\\
\begin{figure}
\begin{center}
\includegraphics[width=\columnwidth]{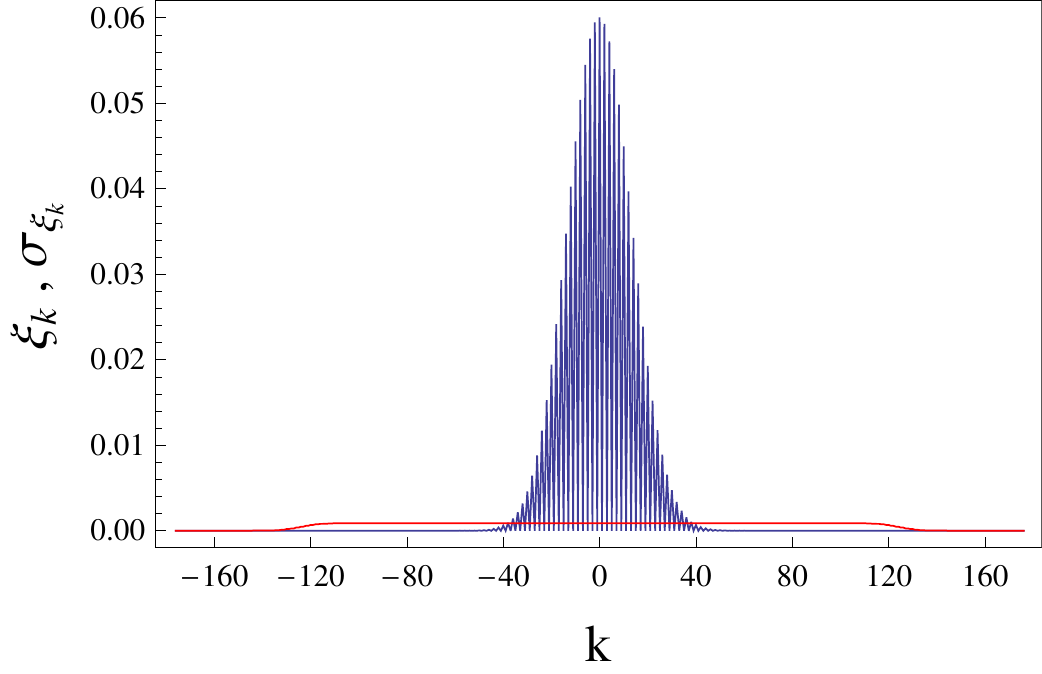}
\end{center}
\caption{(color online) Plot of the coefficients $\xi_k$ (in blue, see Eq. \ref{eq_coefficients_xi}) and an upper bound on their standard deviation $\sigma_{\xi_k}$ (in red) as a function of $k$ for an $8\times 8$ Ising model with $50\%$ positive/negative bonds and uniform magnetic field (set to $1$). From this plot we can qualitatively justify the performances of the algorithm in the small and high temperature limits. The low temperature limit behaviour has to be found in the range of $k$ where the coefficients $\xi_k$ start to be non-zero. This range does not correspond to the maximum possible value of $k$ due to the fact that, in the present model, spin configurations cannot minimize each local hamiltonian. This does not allow to take advantage of the enhanced precision of the protocol for big $k$ and it is the reason for the poor performances of the algorithm at small temperatures. As the temperature increases, the whole  range of $k$ starts to become important, so that we can focus on the intermediate values of $k$, where the bigger coefficients $\xi_k$ are. As evident from the plot, in this regime the relative error is quite small explaining the good high temperatures performances of the protocol.
The value of $k$ where the standard deviation is equal to the relative coefficient $\xi_k$ sets the limit for a possible estimate of an upper bound on the ground state energy.}
\label{plot_XIk}
\end{figure}
We now give some more qualitative insight on the performance of the protocol by analyzing a particular instance of the reconstruction (through the time evolving algorithm) of the coefficients $\xi_k$ (Eq. \ref{eq_coefficients_xi}) for an $8\times 8$ Ising model with $50\%$ positive/negative bonds and uniform magnetic field (set to $1$). In Fig. \ref{plot_XIk} we plot the coefficients $\xi_k$ together with an upper bound on their standard deviation as a function of $k$.

For small temperatures only  coefficients $\xi_k$ with big $k$ are important as it is evident from the series in Eq. \ref{eq_coefficients_xi}.
As shown in the plot, in the ``big $k$'' range, two facts are evident: the standard deviation goes to zero and the coefficients $\xi_k$ are exactly zero. The reason behind the behaviour of the standard deviation is found by algebrically expanding equation \ref{eq_reconstructed} and noticing that the coefficients (responable for the amplification the experimental errors) multiplying big powers of $e^\beta$  are small. On the other hand, the behaviour of the coefficients $\xi_k$ for big $k$ is a natural feature of the disorderd model consider here. More specifically, it simply reflects the impossibility for the ground state spin configuration to minimize each local term of the Hamiltonian, namely, to satisfy each bond and align with the  magnetic field everywhere. The low temperature properties of the model then appear around the values of $k$ where the coefficients $\xi_k$ start to be non-zero. Unfortunately, in that regime the error is no longer approaching zero, explaining why, in this case, the protocol does not perform well at low temperatures. Conversely, for a uniform Ising model, the coefficient $\xi_k$ would be nonzero for the biggest possible $k$.  This explains why we could obtain good results in the low temperature limit for the uniform case (see Fig. \ref{fig:simulations}). 

By inspecting Eq. \ref{eq_coefficients_xi} one is easily convinced that the coefficients $\xi_k$ for smaller $k$ become more important as the temperature increases. In this regime, the standard deviation is basically constant owitnessing properties of the counting process needed to calculate the coefficients $\xi_k$, again obtained by expanding eq \ref{eq_reconstructed} in powers of $e^\beta$. As one can infer by the plot, the relative error is quite small for these intermediate values of $k$, justifying the better high temperatures performances of the protocol.

\begin{figure}
\begin{center}
\includegraphics[width=\columnwidth]{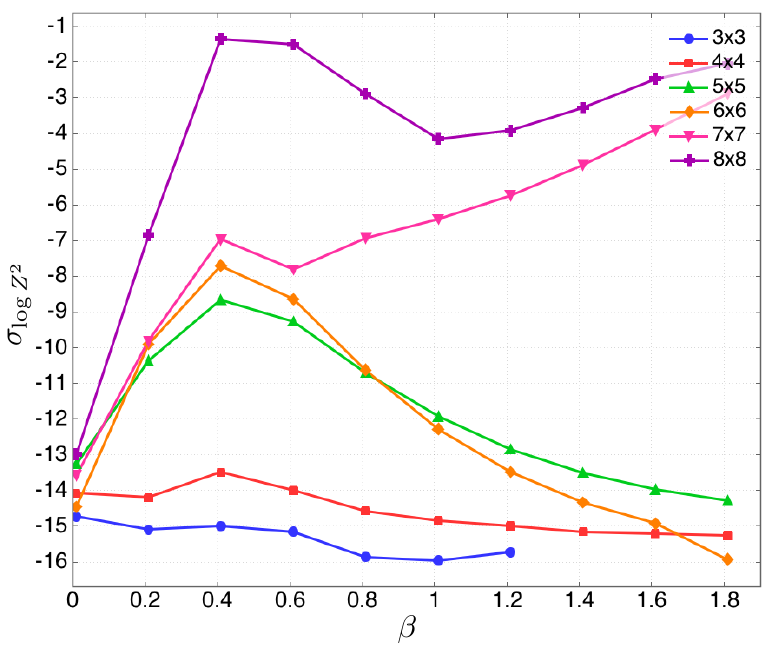}
\end{center}
\caption{Performance of the reconstruction of the squared partition function for the classical ferromagnetic Ising model with open boundary conditions using Protocol 1.  Relative errors in the logarithm of the reconstruction of the squared partition function are plotted as a function of the inverse temperature. Each curve is the relative error for a different system size. For all system sizes the error shows a peak near the critical point. However, for systems larger than $7\times 7$ the error grows quickly as the temperature approaches zero. }
\label{plotZ2}
\end{figure}

Now it is natural to consider the possibility to use our Fourier sampling scheme to estimate an upper bound of the ground state energy. This relies on restating the problem of finding the ground state energy as the problem of finding the maximum $k$ for which $\xi_k\neq 0$. Following this statement,  in order to find an estimate for the upper bound for the ground state energy, we want, roughly speaking, to look at the condition by which the standard deviation on the coefficients $\xi_k$ is not bigger than the coefficients themselves. In the plot presented here, an upper bound on the ground state energy is then obtained by looking at the point where the two curves intersect. As one can see, the result for this instance is very good, but, generically speaking, the impossibility to rule out worst cases scenarios does not allow us to give more quantitative results.

%In the same way that protocol 2 can be used to estimate the real temperature partition function via measurements of the quantum overlap $\braket{\Phi}{\Psi}$ and an analytic continuation, Protocol 1 can be used to estimate the square of the real temperature partition function via measurements of the square of the overlap $|\braket{\Phi}{\Psi}|^2$. The only difference is that as the function to reconstruct is squared, the frequencies of the modes in the Fourier series that we construct from experimental data is doubled. Hence, for protocol 1 more measurements are needed, double the amount needed in protocol 2. We have checked numerically that the reconstruction of the function and its analytical continuation works in the same way and that the errors are of the same order of magnitude.

Finally we elaborate on the statement made in Sec. \ref{sect:imp} that two different measurement protocols can be used to calculate partition functions.  Indeed in the same way Protocol 2 can be used to estimate the real temperature partition function via measurements of the quantum overlap $\braket{\Phi}{\Psi}$ and an analytic continuation, Protocol 1 can be used to estimate the square of the real temperature partition function via measurements of the square of the overlap $|\braket{\Phi}{\Psi}|^2$. The only difference is that as the function to reconstruct is squared, the frequencies of the modes in the Fourier series that we construct from experimental data is doubled. Hence, for Protocol 1 more measurements are needed, double the amount needed in Protocol 2. We have reconstructed the square of the partition function of a classical Ising model in 2D and performed the analytic continuation.  For a study of the errors in the reconstruction see Fig.\ref{plotZ2}.

%
% {\color{red} I would not give more details about the temperature independent case.}
%
%{\color{red} Last bit: brief analysis of the error on $\xi_k$ as a function of $k$ using the equations (\ref{eq:simple-reconst}) and/or (\ref{eq:composite-reconst}). How? Suggestion.
%I'd first consider only the one-step protocol. $\# 1$ Use these equations and the fact that each quantum amplitude $A$ is an independent variable in order to deduce an expression for the variance. N.B. You might need to split your variance into real-real and imaginary-imaginary contributions as in the main text. Do *not* write down the expressions for the variance in the text; the reader doesn't need to see them. $\#2$ Consider a fairly large lattice, $10 \times 10$ should be enough, and plot this variance on $Z(\beta)$ as a function of $\beta$. $\#3$ Discuss-interpret \emph{off} the plot rather than from formulae. It will be easier for the reader. You can repeat these steps 1-2-3 for the scheme involving a time evolution if you want. After this last bit, I would suggest a full stop to this appendix.}

%{\color{red} I would skip completely the reconstruction of Ising models with rational bonds.}

\section{Proof of theorem \ref{thm:error-stat} }\label{thm:proof-central}
%%%%%%%%%%%%%%%%%%%%%%%%%%%%%%%%%%
Let us start with the following classical result \cite{BE}

\begin{theorem}[Berry-Ess\'een] 

Let $W_0 \ldots W_{L-1}$ denote $L$ independent random variables such that $\mean{W_j}=0, 0<\mean{W_j^2} < \infty, \mean{|W_j|^3}  < \infty$,  $j \in \{0 \ldots L-1\}$. The cumulative distribution function $\mathcal{F}_W$ of 
\bed
W \equiv \frac{W_0+ \ldots + W_{L-1}}{(\mean{W^2_0}+ \ldots + \mean{W^2_{L-1}})^{1/2}}
\eed 
satisfies the inequality
\beq\label{eq:BE-bound}
|| \mathcal{F}_W-\mathcal{F}_{*}||_{\infty} \;  \leq \; C_{BE} \; \sum_{l=0}^{L-1} \mean{|W_l|^3} / \big( \sum_{l=0}^{L-1} \mean{W_l^2} \big)^{3/2},
\eeq
where $\mathcal{F}_{*}$ denotes the cumulative distribution of a zero-mean unit-variance Gaussian. The value of the constant $C_{BE}$ is at most $0.56$ \cite{Shev}.

\end{theorem}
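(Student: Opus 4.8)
The plan is to prove this quantitative central limit theorem by the classical Fourier route: control the characteristic functions of the normalised sum, and then transfer that control to the cumulative distribution functions via a smoothing inequality. Write $\sigma_j^2 = \mean{W_j^2}$, $\rho_j = \mean{|W_j|^3}$, $s^2 = \sum_{j=0}^{L-1} \sigma_j^2$, and $B = \sum_{j=0}^{L-1} \rho_j$, so that $W = s^{-1} \sum_j W_j$ and the quantity to be bounded is the Lyapunov ratio $\psi \equiv B/s^3$ appearing on the right of (\ref{eq:BE-bound}). Let $f_j(t) = \mean{e^{i t W_j/s}}$ be the characteristic function of the rescaled $j$-th summand, $f(t) = \prod_j f_j(t)$ that of $W$ (by independence), and $g(t) = e^{-t^2/2}$ that of the limiting Gaussian. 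The argument then splits into two independent estimates: a general inequality converting $\|\mathcal{F}_W - \mathcal{F}_*\|_\infty$ into an integral of $|f(t) - g(t)|/|t|$, and a direct bound on $|f(t) - g(t)|$ extracted from the moment hypotheses.

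First I would establish Esseen's smoothing lemma: for any distribution function $F$ with characteristic function $\hat F$, any distribution function $G$ with density bounded by $m$ and characteristic function $\hat G$, and any $T>0$,
\bed
\|F - G\|_\infty \leq \frac{1}{\pi} \int_{-T}^{T} \left| \frac{\hat{F}(t) - \hat{G}(t)}{t} \right| dt + \frac{c_0 \, m}{T},
\eed
with $c_0$ a universal constant. This is obtained by convolving $F - G$ with a fixed nonnegative kernel whose Fourier transform is supported in $[-T,T]$ (the Fej\'er kernel being the standard choice): the compact support lets one recover information about $\hat F - \hat G$ only on $[-T,T]$, while the smoothing error is controlled through the density bound $m$. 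Applied with $F = \mathcal{F}_W$, $G = \mathcal{F}_*$ (so $\hat F = f$, $\hat G = g$, and $m = (2\pi)^{-1/2}$), this isolates the Fourier-side problem.

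Next I would bound $|f(t) - g(t)|$ on the range $|t| \leq T$. Using $\mean{W_j}=0$ and a Taylor expansion of $e^{ix}$ with remainder controlled by $|x|^3$, each factor obeys $f_j(t) = 1 - \sigma_j^2 t^2/(2s^2) + R_j(t)$ with $|R_j(t)| \leq \rho_j |t|^3/(6 s^3)$. Since $|f_j(t)| \leq 1$ automatically and each $b_j \equiv e^{-\sigma_j^2 t^2/(2s^2)} \leq 1$, the telescoping inequality $|\prod_j a_j - \prod_j b_j| \leq \sum_j |a_j - b_j|$ applies, and combined with the scalar estimate $|e^{-u}-(1-u)| \leq u^2/2$ for $u \geq 0$ it reduces the problem to summing the remainders. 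Choosing $T$ of order $1/\psi$ keeps every factor well inside the unit disc, makes the remainders sum to $O(\psi |t|^3)$, and yields a bound of the form $|f(t)-g(t)| \leq C\,\psi\,|t|^3 e^{-t^2/4}$ for $|t| \leq T$, which after division by $|t|$ is integrable.

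Finally I would combine the two pieces: insert the characteristic-function bound into the smoothing lemma, evaluate the finite Gaussian moment $\int |t|^2 e^{-t^2/4}\,dt$, and pick $T \sim 1/\psi$ so that the tail term $c_0 m/T \sim \psi$ matches the order of the integral term, giving $\|\mathcal{F}_W - \mathcal{F}_*\|_\infty \leq C_{BE}\,\psi$, which is exactly (\ref{eq:BE-bound}). The hard part will \emph{not} be proving that some universal $C_{BE}$ exists — the route above delivers it readily, with $C_{BE}$ of order a few units — but rather obtaining the sharp value $C_{BE} \leq 0.56$. That refinement demands replacing the crude telescoping and Taylor bounds by optimised ones, a carefully tuned (non-Fej\'er) smoothing kernel, and a delicate analysis of the regime where $|t|$ is comparable to $1/\psi$; these optimisations are precisely the content of the works of Berry, Esseen and, for the current best constant, Shevtsova \cite{Shev}, which the paper cites rather than reproduces. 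I would therefore present the proof with an explicit but non-optimal constant and defer the sharp constant to those references.
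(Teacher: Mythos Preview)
Your outline is a correct sketch of the standard Fourier-analytic proof of the Berry--Ess\'een theorem (Esseen's smoothing lemma plus a third-moment Taylor bound on the characteristic function, with the Lyapunov ratio $\psi$ setting the truncation scale), and you are right that the sharp constant $0.56$ is a separate, much more delicate optimisation due to Shevtsova.

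However, the paper does not prove this theorem at all. It is quoted as a ``classical result'' with a citation to Feller \cite{BE} for the proof and to \cite{Shev} for the constant, and is then used as a black box in the proof of the paper's own Theorem~\ref{thm:error-stat}. So there is no ``paper's own proof'' to compare against: the Berry--Ess\'een theorem here plays the role of an imported lemma, and the work in Appendix~\ref{thm:proof-central} lies entirely in applying it to the specific random variables $W_j(k)$ and in replacing the unknown second and third moments by empirical estimates with Hoeffding-controlled confidence. Your write-up is therefore more than what the paper attempts for this statement; if you want to match the paper, a one-line citation suffices.
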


We are going to use this theorem to study the behaviour of the random variable $dA$, defined as
\beq\label{eq:redef-dA}
dA =\sum_{j=0}^{2 N} \sum_{k=1}^{M}  W_j(k).
\eeq
where 
\beq
W_j(k)=\frac{1}{M} \Re w^{(N)}( i \beta- \alpha_j) \big( \Re A(\alpha_j)-  X_j(k) \big).
\eeq

If the quantum experiments are perfect, then $\Re A(\alpha_j)=\mean{X_j(k)}, \forall k=1 \ldots M$ and $\mean{W_j(k)}=0$ indeed. Let us assume that $0 < \mean{W_j(k)^2} \; \forall j=0 \ldots 2N$, as in the assumptions appearing in the Berry-EssŽen theorem. From a physical point of view, we expect this assumption to be generically satisfied. Indeed, $\mean{W_j(k)^2}=0$ means that $p_j=0$ or that $p_j=1$. In that case, the contribution $\sum_k \frac{1}{M} \Re w^{(N)}( i \beta- \alpha_j) \big( \Re A(\alpha_j)-  X_j(k) \big)$ is always strictly zero, and can therefore not be a source of errors. We will therefore assume that
\beq\label{eq:cond-pj}
\exists \; \delta_*>0 \; \textrm{s.t.} \; \delta_* < p_j < 1-\delta_* \; \forall j=0 \ldots 2N.
\eeq 
Let us introduce the quantity
\beq
\lambda_M=1/\big( \sum_{j,k} \mean{W_j(k)^2} \big)^{1/2}.
\eeq
The random variable $\lambda_M dA$ can certainly be identified with the random variable $W$ appearing in the Berry-Ess\'een theorem and $\forall \Delta >0$,

\bed
\begin{array}{ccc}
&& \textrm{Proba}[|dA| < \Delta ]= \textrm{Proba}[|\lambda_M dA| < \lambda_M \Delta ]= \\
&& \big(\mathcal{F}_{\lambda_M dA}(\lambda_M \Delta)-\mathcal{F}_{*}(\lambda_M \Delta)\big)+ \\
&& \big(\mathcal{F}_{*}(-\lambda_M \Delta)-\mathcal{F}_{\lambda_M dA}(-\lambda_M \Delta) \big) 
+ \big(\mathcal{F}_{*}(\lambda_M \Delta)-\mathcal{F}_{*}(-\lambda_M \Delta)\big) \\
&& \geq 1-2 \mathcal{F}_{*}(-\lambda_M \Delta)-2 || \mathcal{F}_{\lambda_M dA}-\mathcal{F}_{*}||_{\infty} \\
&& \geq 1-2 \mathcal{F}_{*}(-\lambda_M \Delta)-2 C_{BE} D_M
\end{array}
\eed
\bed
\geq 1-2 \mathcal{F}_{*}(-\lambda_M \Delta)-1.12 D_M,
\eed

where
\beq
D_M=
\frac{\sum_{j,k} \mean{|W_j(k)|^3}}{\big( \sum_{j,k} \mean{W_j(k)^2} \big)^{3/2}}.
\eeq

This latter bound is not useful as such because the quantities $\mean{|W_j(k)|^3}$ and $\mean{W_j(k)^2}$, on which $\lambda_M$ and $D_M$ depend, are unknown. For that reason, we will seek to replace $\mean{|W_j(k)|^3}$ and $\mean{W_j(k)^2}$ by appropriate estimates, constructed from experimental observations. In order to lighten a bit the notation, we introduce the (shifted) Bernoulli random variable $B_j(k) \equiv \Re A(\alpha_j)-X_j(k)$. By assumption, for a fixed value of $j$, all $X_j(k)$ are i.i.d. and $\mean{B_j(k)}=0$. Clearly, $\mean{B_j(k)^2} = \mathsf{E}_2(p_j)$ and $\mean{|B_j(k)|^3} = \mathsf{E}_3(p_j)$. If we denote by $p_j$ the probability that $X_j(k)=-1$, it is clear that
\bed
\Re A(\alpha_j)=\mean{X_j(k)}=-p_j+(1-p_j)=1-2 p_j,
\eed
and that 
\bed
\mathsf{E}_2(p_j) =4 p_j (1-p_j). 
\eed
Similarly, 
\bed
\mathsf{E}_3(p_j) =8 p_j (1-3 p_j+4 p_j^2-2 p_j^3).
\eed

Let $\widehat{p}_j$ denote an estimate for $p_j$ constructed from observations as:
\beq\label{eq:def-pjhat}
\frac{1}{M} \sum_{k=1}^M X_j(k)=1-2\widehat{p}_j.
\eeq
Applying Hoeffding's inequality to the case of $M$ identical Bernoulli trials shows that
\bed
\textrm{Proba}[|\widehat{p}_j-p_j| \leq \epsilon] \geq 1-2 e^{-2 \epsilon^2 M}  \; \forall \epsilon >0.
\eed
$\widehat{p_j}$ can be used to construct estimates for $\mean{B_j(k)^2}$ and $\mean{|B_j(k)|^3}$ as 
\bed
\mathsf{E}_2(\widehat{p}_j) \equiv 4 \widehat{p}_j (1- \widehat{p}_j),
\eed
\bed
\mathsf{E}_3(\widehat{p}_j) \equiv 8 \widehat{p}_j (1- 3 \widehat{p}_j+4 \widehat{p}^2_j-2 \widehat{p}^3_j).
\eed

Since $\mathsf{E}_2$ and $\mathsf{E}_3$ are continuous differentiable functions over $[0,1]$, we have that, whenever $|\widehat{p}_j-p_j| \leq \epsilon$, then 
\bed
\big |\widehat{\mathsf{E}}_2(\widehat{p}_j)-\mathsf{E}_2(p_j) \big| \leq \max_{0 \leq p \leq 1} \big| \frac{d}{dp} \mathsf{E}_2(p)  \big| \times \epsilon= 4 \epsilon,
\eed
and 
\bed
|\widehat{\mathsf{E}}_3(\widehat{p}_j)-\mathsf{E}_3(p_j)| \leq \max_{0 \leq p \leq 1} \big| \frac{d}{dp} \mathsf{E}_3(p)  \big| \times \epsilon= 8 \epsilon.
\eed

Let $\epsilon_j$ denote a set of $2N+1$ positive numbers. We see that whenever $|\widehat{p}_j-p_j| \leq \epsilon_j \; \forall j=0 \ldots 2N$, which occurs with probability at least 
\bed
\prod_{j=0}^{2N+1}  \big( 1-2 e^{-2 \epsilon_j^2 M}\big), 
\eed
the numerator of $D_M$ is upper bounded by the quantity
\bed
\sum_{j,k} |\Re w^{(N)}( i \beta- \alpha_j)|^3 \big( \widehat{\mathsf{E}}_3(\widehat{p}_j)+8 \epsilon_j \big),
\eed
while the quantity $\sum_{j,k} \mean{W_j(k)^2}$, appearing in the denominator of $D_M$, is lower bounded by
\bed
\mathcal{V}_M=\sum_{j,k} |\Re w^{(N)}( i \beta- \alpha_j)|^2 \big( \widehat{\mathsf{E}}_2(\widehat{p}_j)-4 \epsilon_j \big)  
\eed
So, whenever this latter quantity is strictly positive and $|\widehat{p}_j-p_j| \leq \epsilon_j \; \forall j=0 \ldots 2N$, the quantity
\bed
\widetilde{D}_M(\{\epsilon_j\})=
\frac{\sum_{j,k} |\Re w^{(N)}( i \beta- \alpha_j)|^3 \big( \widehat{\mathsf{E}}_3(\widehat{p}_j)+8 \epsilon_j \big)}
{ \big(\sum_{j,k} |\Re w^{(N)}( i \beta- \alpha_j)|^2 \big( \widehat{\mathsf{E}}_2(\widehat{p}_j)-4 \epsilon_j \big)  \big)^{3/2}}
\eed
\bed
=\frac{1}{\sqrt{M}}
\frac{\sum_{j=0}^{2N} |\Re w^{(N)}( i \beta- \alpha_j)|^3 \big( \widehat{\mathsf{E}}_3(\widehat{p}_j)+8 \epsilon_j \big)}
{ \big(\sum_{j=0}^{2N} |\Re w^{(N)}( i \beta- \alpha_j)|^2 \big( \widehat{\mathsf{E}}_2(\widehat{p}_j)-4 \epsilon_j \big)  \big)^{3/2}}
\eed
upper bounds $D_M$.

Also, whenever $|\widehat{p}_j-p_j| \leq \epsilon_j \; \forall j=0 \ldots 2N$, the quantity 
\bed
\widetilde{\lambda}_M(\{\epsilon_j\})=
\frac{M}{\sqrt{ \sum_{j,k} |\Re w^{(N)}( i \beta- \alpha_j)|^2 \big( \widehat{\mathsf{E}}_2(\widehat{p}_j)+4 \epsilon_j \big)}}
\eed
\bed
=\frac{\sqrt{M}}{\sqrt{ \sum_{j=0}^{2N} |\Re w^{(N)}( i \beta- \alpha_j)|^2 \big( \widehat{\mathsf{E}}_2(\widehat{p}_j)+4 \epsilon_j \big)}}
\eed
lower bounds $\lambda_M$. Of course, whenever $\widetilde{D}_M(\{\epsilon_j\}) \geq D_M$ and $\widetilde{\lambda}_M(\{\epsilon_j\}) \leq \lambda_M$, we have that 
\bed
1-2 \mathcal{F}_{*}(-\lambda_M \Delta)-2 C_{BE} D_M \geq
\eed
\beq
1-2 \mathcal{F}_{*}(-\widetilde{\lambda}_M(\{\epsilon_j\}) \Delta)-2 C_{BE} \widetilde{D}_M(\{\epsilon_j\})
%\equiv \mathcal{L}(\{X_j(k)\}).
\eeq

One possibility to ensure that $\mathcal{V}_M \geq 0$ is to pick 
\beq\label{def:epsilon-j}
\epsilon_j=\frac{1}{4+s} \mathsf{E}_2(\widehat{p}_j),
\eeq
where $s>0$ is a constant we are free to choose at our convenience. It is not possible to ensure that $\mathcal{V}_M$ is always \emph{strictly} positive. Indeed, from Eq.(\ref{eq:def-pjhat}), we see that in the event where $X_j(1)=\ldots=X_j(M) \; \forall j=0 \ldots 2N$, we have that $\widehat{p}_j=0$ or $\widehat{p}_j=1$, implying that $\mathsf{E}_2(\widehat{p}_j)=0 \; \forall j$ and that $\mathcal{V}_M=0$. Then $\widehat{D}_M$ would be infinite, a situation where we are not able to construct a useful estimator. For that reason, we define our estimator for $D_M$ as follows: 
\beq
\widehat{D}_M(\{\epsilon_j\})= \left\{
\begin{array}{rl}
%& \frac{1}{\sqrt{M}}
%\frac{1}{\sqrt{1-\frac{4}{4+s}}}
%\frac{\sum_{j=0}^{2N} \nu_j^3 \big( \mathsf{E}_3(\widehat{p}_j)+8 \epsilon_j \big)}
%{ \big(\sum_{j=0}^{2N} \nu_j^2 \mathsf{E}_2(\widehat{p}_j)  \big)^{3/2}}
& \widetilde{D}_M(\{\epsilon_j\})
\; \text{if} \; \mathcal{V}_M \neq 0, \\
& 0  \; \text{if} \; \mathcal{V}_M=0.
\end{array} \right.
\eeq
Our estimator for $\lambda_M$ is defined as 
\beq
\widehat{\lambda}_M(\{\epsilon_j\})= \left\{
\begin{array}{rl}
& \widetilde{\lambda}_M(\{\epsilon_j\})
\; \text{if} \; \mathcal{V}_M \neq 0, \\
& -\infty  \; \text{if} \; \mathcal{V}_M=0.
\end{array} \right.
\eeq

Fortunately, the probability of a pathological situation,
\bed
\Proba{\mathcal{V}_M=0}=\prod_{j=0}^{2N} \big((p_j)^M+(1-p_j)^M\big).
\eed
is exponentially small in $M$ whenever $0 < p_j < 1$ for at least some $j$. 

Let us estimate \emph{the probability to get a valid and useful bound} $\mathcal{L}$. We consider the following four events:
\bed
\mathcal{A}= \{\mathcal{V}_M \neq 0 \}.
\eed
\bed
\mathcal{B}= \{ |\widehat{p}_j-p_j| \leq \epsilon_j  \forall j \}.
\eed
\bed
\mathcal{C}= \{ D_M \leq \widehat{D}_M(\{\epsilon_j\}) \}.
\eed
\bed
\mathcal{D}= \{ \lambda_M \geq \widehat{\lambda}_M(\{\epsilon_j\}) \}.
\eed
We are interested in the event $\mathcal{A} \cap \mathcal{C} \cap \mathcal{D}$. Obviously, 
\bed
\Proba{\mathcal{A} \cap \mathcal{C} \cap \mathcal{D}}=
\Proba{\mathcal{C} \cap \mathcal{D}}
\eed
\bed
-\Proba{ \mathcal{C} \cap \mathcal{D} | \textrm{not} \mathcal{A}} \Proba{\textrm{not} \mathcal{A}}
\eed
and
\bed
\Proba{\mathcal{C} \cap \mathcal{D}}
\geq \Proba{ \mathcal{C} \cap \mathcal{D} \cap \mathcal{B} }. 
\eed
Therefore,
\bed
\Proba{\mathcal{A} \cap \mathcal{C} \cap \mathcal{D}} 
\geq \prod_{j=0}^{2N} \big( 1-2 e^{-\epsilon_j^2 M} \big)-\prod_{j=0}^{2N} \big( p_j^M+(1-p_j)^M  \big),
\eed

which tends to $1$ exponentially as $M$ grows large.

%\emph{Note:} For reasons that will be clarified below, whenever the event "not $\mathcal{A}$" occurs, we choose to construct our estimates as $\widehat{\lambda}_M=0$, and $\widehat{D}_M=0$. 

\emph{In summary, the random variable
\bed
1-2 \mathcal{F}_{*}(-\widehat{\lambda}_M(\{\epsilon_j\}) \Delta)-2 C_{BE} \widehat{D}_M(\{\epsilon_j\}),
\eed
with $\epsilon_j$ defined by Eq.(\ref{def:epsilon-j}), lower bounds the quantity $\Proba{|dA| < \Delta}$ with probability at least}
\bed
\mathcal{P}(\{\epsilon_j \},M,N) \equiv \prod_{j=0}^{2N} \big( 1-2 e^{-\epsilon_j^2 M} \big)-\prod_{j=0}^{2N} \big( p_j^M+(1-p_j)^M  \big),
\eed

\section{Proof of Lemma \ref{thm:universal-gate-set}}\label{thm:universal-gate-set-proof}
%%%%%%%%%%%%%%%%%%%%%%%%%%%%%%%%%%

We begin with the discrete gate set 
\beq\label{eq:inter-gate-set}
\begin{array}{lll}
\mathfrak{G}_0&=&\{ \mathsf{Z}_k(\pi/4), \mathsf{Had}_k, k=1 \ldots n \ \} \; \\
&&\cup
\{ \mathsf{CNOT}_{k,k+1}, k=1 \ldots n-1 \}
\end{array}
\eeq
acting on an $n$ qubit register that is universal for quantum computation \cite{Mor}.  By the Solovay-Kitaev \cite{nielsen} theorem an arbitrary polynomial sized quantum circuit can be efficiently approximate from this gate set with a polynomial overhead. To realize this using global operations in the mirror encoding of Raussendorf, one makes frequent use of the global shift operator $\mathsf{G}_{\text{tot}}= \sigma^z_{\text{tot}}(\pi) \sigma^y_{\text{tot}}(\pi/2) \mathsf{CP}_{\textrm{tot}}$, with the property that $\mathsf{G}_{\text{tot}}^{2n+1}$ is a reflection of the state of the chain about its middle. An arbitrary $Z$ rotation on logical qubit $k$ can be physically implemented as \cite{Raussendorf}
\beq\label{eq:indi-gate-1}
\begin{array}{lll}
\mathsf{Z}_k^{\text{logi}}(\alpha)&=&e^{i \frac{\alpha}{2} (\sigma^z_k+\sigma^z_{n-k+1})}\\
&=&\mathsf{G}_{\text{tot}}^{n+1-k} \; \sigma^{y}_{\text{tot}}(\pi) \; \mathsf{G}\sigma^{y}_{\text{tot}}(\pi)\mathsf{G}^{k-1}\\
&&\times\sigma^{z}_{\text{tot}}(-\alpha/2)
\mathsf{G}_{\text{tot}}^{n+1-k} \; \sigma^{y}_{\text{tot}}(\pi) \; \mathsf{G} \; \sigma^{y}_{\text{tot}}(\pi) \; \\&&\times\mathsf{G}^{k-1} \; \sigma^{z}_{\text{tot}}(\alpha/2).
\end{array}
\eeq

Similarly, an $X$ rotation on logical qubit $k$ is
\bed\label{eq:indi-gate-2} 
\begin{array}{lll}
\mathsf{X}_k^{\text{logi}}(\alpha)&=&e^{i \frac{\alpha}{2} (\sigma^x_k+\sigma^x_{n-k+1})}\\
&=&\mathsf{G}_{\text{tot}}^{n-k} \;
 \sigma^{y}_{\text{tot}}(\pi) \;
  \mathsf{G} \; 
  \sigma^{y}_{\text{tot}}(\pi) 
  \; \mathsf{G}^{k} \; 
  \sigma^{z}_{\text{tot}}(-\pi/2) \;\\
 &&\times~ \sigma^{y}_{\text{tot}}(\alpha/2) \;
    \sigma^{z}_{\text{tot}}(\pi/2) \;
     \mathsf{G}_{\text{tot}}^{n-k} \;
  \sigma^{y}_{\text{tot}}(\pi) 
  \mathsf{G}\sigma^{y}_{\text{tot}}(\pi) \\
 &&\times~ \mathsf{G}^k
    \sigma^{z}_{\text{tot}}(-\pi/2) \;  
  \sigma^{y}_{\text{tot}}(-\alpha/2)
    \sigma^{z}_{\text{tot}}(\pi/2).
    \end{array}
 \eed

Finally, an entangling gate between logical qubits $k$ and $k+1$ can be implemented as
\beq\label{eq:indi-gate-3}
\begin{array}{lll}
\mathsf{V}_{k,k+1}^{\text{logi}}(\alpha)&=&e^{i \alpha \big( \sigma^z_k \otimes \sigma^x_{k+1}+ \sigma^z_{k+n} \otimes \sigma^x_{k+n-1} \big)}\\
&=&\mathsf{G}^k \mathsf{X}_k^{\text{logi}}(\alpha) \mathsf{G}^{\dagger k}.
\end{array}
\eeq
Since $V_{k,k+1}(\pi/4)\textsf{Had}_{k+1}Z_k(\pi/2)Z_{k+1}(\pi/2)\textsf{Had}_{k+1}=\textsf{CNOT}_{k,k+1}$ then the gate set
\beq\label{eq:inter-gate-set}
\begin{array}{lll}
\mathfrak{G}_1&=&\{ \mathsf{Z}^{\text{logi}}_k(\pi/4), \mathsf{Had}^{\text{logi}}_k, k=1 \ldots n \ \} \; \\
&&\cup
\{ V_{k,k+1}(\pi/4), k=1 \ldots n-1 \}
\end{array}
\eeq
is universal for quantum computation. 
Now the Hadamard gate can be related to $X$ and $Z$ rotations through the identity $\textsf{Had}=\sigma^z(\pi/2) \sigma^x(\pi/2) \sigma^z(\pi/2)$. 
Also we note the following relations: $[\sigma^z(\pi/8)]^{31}=\sigma^z(-\pi/8)$, and $\sigma^{y}(\pm \pi/4)=\sigma^{x}(-\pi/2)\sigma^{z}(\mp \pi/4)\sigma^{x}(\pi/2)$ and also $\sigma^{x}(\pm \pi/2)=\sigma^{z}(\pm\pi/2)\textsf{Had}\sigma^{z}(\pm \pi/2)$.  Then from Eqs.(\ref{eq:indi-gate-1},\ref{eq:indi-gate-2},\ref{eq:indi-gate-3}), we see that it is enough to be able to implement
\bed
\mathfrak{G}=\{\mathsf{CP}_{\text{tot}}, \sigma^z_{\text{tot}}(\pi/8), \mathsf{Had}_{\text{tot}} \}
\eed
in order to achieve universal quantum computation. 

\section{Proof of Theorem \ref{thm:Ising-adiab-combined}}\label{app:proof-adiab-combined}
%%%%%%%%%%%%%%%%%%%%%%%%%%%%%%%%%%%%%%%%%%%%%

Our starting point is the following direct consequence of the adiabatic theorem, as stated in \cite{Amb-Reg}.

\begin{lemma}
\label{gap}
Let $\gamma=\text{min}_{t \in [0:T]} \text{gap} \; \hat{H}(t)$, where $\text{gap} \; \hat{H}(t)$ denotes the difference between the two lowest eigenvalues of $\hat{H}(t)$, and let $\ket{\Phi'}$ denote the quantum state obtained by the \emph{continuous} evolution induced on $\ket{\Phi_0}$ by the Hamiltonian family (\ref{eq:ham-family}). Let also $|\Lambda|$ and $|E(\Lambda)|$ denote respectively the number of sites and edges of the lattice $\Lambda$. The distance between $\ket{\Phi'}$ and the true ground state $\ket{G}$ is at most $\delta$ whenever T satisfies
\beq\label{eq:adiab1}
T \geq T_*(\hat{H},\delta)= \frac{10^5}{\delta^2} \frac{\big( |h| \cdot |\Lambda|+ |J| \cdot |E(\Lambda)|\big)^3}{\gamma^4}.
\eeq

\end{lemma}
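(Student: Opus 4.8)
The plan is to obtain Lemma \ref{gap} as a direct specialisation of the Ambainis--Regev formulation of the adiabatic theorem \cite{Amb-Reg} to the linearly interpolated family (\ref{eq:ham-family}). That formulation states that, after rescaling the physical time to a dimensionless parameter $s=t/T\in[0,1]$ and writing $\tilde{H}(s)=\hat{H}(sT)$, the continuously evolved state lands within distance $\delta$ of the instantaneous ground state whenever
\beq
T \geq \frac{10^5}{\delta^2}\left( \frac{\max_s \|\tilde{H}'(s)\|^3}{\gamma^4}+\frac{\max_s\|\tilde{H}'(s)\|\cdot \max_s\|\tilde{H}''(s)\|}{\gamma^3}\right),
\eeq
where $\gamma$ lower bounds the spectral gap along the whole path and primes denote $d/ds$. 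My task therefore reduces to bounding the two derivative norms for our specific Hamiltonian.

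First I would carry out the rescaling. Under $s=t/T$ the family (\ref{eq:ham-family}) becomes $\tilde{H}(s)=\hat{H}_0 - s\big( J\sum_{\langle i,j\rangle}\sigma^z_i\sigma^z_j + h\sum_{i}\sigma^z_i\big)$, which is \emph{affine} in $s$. Differentiating, $\tilde{H}'(s)=-J\sum_{\langle i,j\rangle}\sigma^z_i\sigma^z_j - h\sum_i \sigma^z_i$ is independent of $s$, and crucially $\tilde{H}''(s)=0$. Hence the second term of the Ambainis--Regev bound, being proportional to $\max_s\|\tilde{H}''(s)\|$, drops out entirely, and only the first-derivative contribution survives.

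Next I would estimate $\|\tilde{H}'(s)\|$. Using $\|\sigma^z_i\|=\|\sigma^z_i\sigma^z_j\|=1$ together with the triangle inequality for operator norms, one obtains $\|\tilde{H}'(s)\| \leq |J|\cdot|E(\Lambda)| + |h|\cdot|\Lambda|$, summing over the $|E(\Lambda)|$ edge terms and the $|\Lambda|$ site terms. Since this is $s$-independent, $\max_s\|\tilde{H}'(s)\|^3 \leq \big(|h|\cdot|\Lambda|+|J|\cdot|E(\Lambda)|\big)^3$. Substituting into the displayed bound, with the $\tilde{H}''$ term absent, yields exactly $T \geq \frac{10^5}{\delta^2}\frac{(|h|\cdot|\Lambda|+|J|\cdot|E(\Lambda)|)^3}{\gamma^4}=T_*(\hat{H},\delta)$, which is (\ref{eq:adiab1}).

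The only genuine care points, rather than obstacles, are bookkeeping ones: checking that the distance convention in \cite{Amb-Reg} (the $\ell_2$ distance between normalised states) is the one quoted in the lemma, and that the gap entering the cited theorem is indeed the \emph{minimum} of $\text{gap}\,\hat{H}(t)$ over the whole interval, matching the definition of $\gamma$ used here. The positivity of $\gamma$ along the path --- that no level crossing with the ground state occurs as the transverse field $h_\perp$ competes with the Ising and longitudinal terms --- is an \emph{assumption} of the lemma (encoded in $\gamma>0$) rather than something to be established, so no explicit spectral analysis of $\hat{H}(t)$ is required.
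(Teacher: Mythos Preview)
Your argument is essentially the same as the paper's: both rescale to $s=t/T$, invoke the Ambainis--Regev adiabatic condition, and specialise to the affine family (\ref{eq:ham-family}), where the second-derivative term drops out and the first-derivative norm is bounded by $|h|\cdot|\Lambda|+|J|\cdot|E(\Lambda)|$. The only cosmetic difference is that the paper quotes the Ambainis--Regev bound with a $\max$ of the two terms rather than a sum, but since $\tilde{H}''(s)=0$ here the distinction is immaterial.
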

\begin{proof}
Let us introduce the parameter $s=t/T$. Theorem 2.1 of Ref.\cite{Amb-Reg} provides the following sufficient condition for adiabaticity\footnote{In the following $||A||_{\infty}$ will denote the operator norm of an operator $A$, i.e. $||A||_{\infty}=\sup_x \frac{||A x||_2}{||x||_2}$.}:
\beq\label{eq:gen-adiab-cond}
\begin{array}{lll}
T& \geq& T_*(\hat{H},\delta)\\
&=&\frac{10^5}{\delta^2} 
\max_{0 \leq s \leq 1} \max \{ \frac{|| \frac{d}{ds} \hat{H} ||_{\infty}^3}{\gamma^4}, \frac{||\frac{d}{ds} \hat{H} ||_{\infty} \cdot ||\frac{d^2}{ds^2} \hat{H} ||_{\infty}}{\gamma^3}\}
\end{array}
\eeq
valid for any time-dependent hamiltonian $\hat{H}(t)$. Adapting this condition to the special case of Hamiltonians (\ref{eq:ham-family}), we see the r.h.s of (\ref{eq:adiab1}) certainly upper bounds the r.h.s of (\ref{eq:gen-adiab-cond}).
\end{proof}

We wish to discretise the time evolution of our system. Instead of considering the time-dependent evolution associated with the Hamiltonians $\hat{H}(t)$, we will deal with $L$ consecutive \emph{constant} unitary operators, $\mathcal{U}_k=\text{Exp}\big(-i \; \tau \; \hat{H}_0-i \; \tau \; \hat{H}_1(k \tau) \big), k=0 \ldots L-1$, where we define the discretisation step as
\beq
\tau\equiv T/L.
\eeq 
We wish to work with the state $\ket{\Phi^\star}=\mathcal{U}_{L-1} \ldots \mathcal{U}_0 \ket{+_x^{\otimes |\Lambda|}}$ rather than with the state $\ket{\Phi'}$. Of course when $L$ grows large we expect this substitution to have negligible effect. But we need to be precise and quantify the induced error. The following lemma addresses this issue.
\begin{lemma}
\label{thm:Ising-adiab} 
The distance between $\ket{G}$ and $\ket{\Phi^\star}$ is bounded as
\beq\label{eq:adiab2}
|| \ket{\Phi^\star}-\ket{G} || \leq \delta+ T \sqrt{\frac{2 \big( |h| \cdot |\Lambda|+ |J| \cdot |E(\Lambda)|\big)}{L}},
\eeq
whenever $T \geq T_*(\hat{H},\delta)$.
\end{lemma}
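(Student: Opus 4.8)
The plan is to split the error with the triangle inequality into an \emph{adiabatic} part and a \emph{discretisation} part:
\[
\big\| \ket{\Phi^\star}-\ket{G}\big\| \;\le\; \big\| \ket{\Phi^\star}-\ket{\Phi'}\big\| + \big\|\ket{\Phi'}-\ket{G}\big\|,
\]
and to dispose of the second term immediately by Lemma \ref{gap}: since the hypothesis $T\ge T_*(\hat H,\delta)$ is exactly the one required there, $\|\ket{\Phi'}-\ket{G}\|\le\delta$ (up to an irrelevant global phase). Everything then reduces to bounding $\|\ket{\Phi^\star}-\ket{\Phi'}\|$, i.e. the error committed by replacing the exact time-ordered propagator of $\hat H(t)$ on $[0,T]$ by the product $\mathcal U_{L-1}\cdots\mathcal U_0$ of the ``frozen Hamiltonian'' evolutions $\mathcal U_k=e^{-i\tau\hat H(k\tau)}$, $\tau=T/L$ — both acting on the same initial vector $\ket{\Phi_0}=\ket{+_x^{\otimes|\Lambda|}}$.

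For that I would use a telescoping (Duhamel-type) identity. Writing $V(t_2,t_1)$ for the exact propagator of $\hat H(t)$ and $W_k=\mathcal U_{k-1}\cdots\mathcal U_0$ (so $W_0=\one$, $W_L=\mathcal U_{L-1}\cdots\mathcal U_0$), one has
\[
V(T,0)-W_L=\sum_{k=0}^{L-1} V\!\big(T,(k+1)\tau\big)\,\Big[V\!\big((k+1)\tau,k\tau\big)-\mathcal U_k\Big]\,W_k .
\]
All factors $V(\cdot,\cdot)$ and $W_k$ are unitary, so applying this to $\ket{\Phi_0}$ and taking norms gives $\|\ket{\Phi^\star}-\ket{\Phi'}\|\le\sum_{k=0}^{L-1}\big\|V((k+1)\tau,k\tau)-\mathcal U_k\big\|_\infty$. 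Each summand is a one-step freezing error, which I would control by the elementary bound $\big\|V((k+1)\tau,k\tau)-e^{-i\tau\hat H(k\tau)}\big\|_\infty\le\int_{k\tau}^{(k+1)\tau}\big\|\hat H(s)-\hat H(k\tau)\big\|_\infty\,ds$; this is obtained by differentiating $s\mapsto V(s,k\tau)^\dagger e^{-i(s-k\tau)\hat H(k\tau)}$, using Hermiticity of $\hat H$, and integrating from $k\tau$.

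It then remains to make the integrand explicit. Only $\hat H_1$ is time dependent, and $\hat H(s)-\hat H(k\tau)=\hat H_1(s)-\hat H_1(k\tau)=-\frac{s-k\tau}{T}\big(J\sum_{\langle i,j\rangle\in E(\Lambda)}\sigma^z_i\sigma^z_j+h\sum_{i\in\Lambda}\sigma^z_i\big)$, whose operator norm is at most $\frac{s-k\tau}{T}\big(|J|\,|E(\Lambda)|+|h|\,|\Lambda|\big)\le\frac{\tau}{T}\big(|J|\,|E(\Lambda)|+|h|\,|\Lambda|\big)$ on the $k$-th interval. Integrating over that interval and summing over the $L$ steps (using $L\tau=T$) produces a bound of the form $\mathrm{const}\cdot T\big(|h|\,|\Lambda|+|J|\,|E(\Lambda)|\big)/L$; a crude majorisation in the only regime of interest (where $L$ is large enough for the discretisation error to be small) puts it into the advertised shape $T\sqrt{2\big(|h|\,|\Lambda|+|J|\,|E(\Lambda)|\big)/L}$, and together with the $\delta$ from the adiabatic step this is exactly (\ref{eq:adiab2}).

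The one point that needs care — and the only reason the argument is not completely routine — is that the estimates must be \emph{uniform in the extensivity of the system}. Since $\hat H(t)$ has operator norm growing with $|\Lambda|$, a naive Trotter-type treatment (splitting $\mathcal U_k$ as $e^{-i\tau\hat H_0}e^{-i\tau\hat H_1(k\tau)}$ and paying a commutator $[\hat H_0,\hat H_1]$) would bring in an extra factor $|h_\perp|\,|\Lambda|$ and spoil the bound. Working with the frozen \emph{full} Hamiltonian $e^{-i\tau\hat H(k\tau)}$ is what makes each step's error depend only on $\|\tfrac{d}{dt}\hat H_1\|_\infty\le\big(|h|\,|\Lambda|+|J|\,|E(\Lambda)|\big)/T$ and not on $\hat H_0$ at all; the separate Trotter error (the $\tau^2$ term appearing in Theorem \ref{thm:Ising-adiab-combined}) is postponed to the next step of that proof. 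Beyond this, only the bookkeeping of the numerical constant should require attention.
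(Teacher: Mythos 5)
Your proof is correct and reaches the stated bound, but it takes a genuinely different route for the discretisation term. The skeleton (triangle inequality, then Lemma \ref{gap} to dispose of $\|\ket{\Phi'}-\ket{G}\|\le\delta$) coincides with the paper's. For $\|\ket{\Phi^\star}-\ket{\Phi'}\|$, however, the paper does not telescope: it observes that $\mathcal{U}_{L-1}\cdots\mathcal{U}_0$ is the exact propagator of the piecewise-constant Hamiltonian $t\mapsto\hat H(k(t)\tau)$, notes $\|\hat H(t)-\hat H(k(t)\tau)\|_\infty\le\tau\big(|h|\,|\Lambda|+|J|\,|E(\Lambda)|\big)\equiv\epsilon$, and invokes Lemma 1 of van Dam--Mosca--Vazirani, which turns a uniform-in-$t$ Hamiltonian discrepancy $\epsilon$ directly into the propagator bound $\sqrt{2T\epsilon}=T\sqrt{2(|h|\,|\Lambda|+|J|\,|E(\Lambda)|)/L}$ — exactly the second term of (\ref{eq:adiab2}), with no constant to chase. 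Your Duhamel/telescoping argument is more elementary and self-contained, and in fact yields the \emph{linear} bound $\int_0^T\|\hat H(s)-\hat H(k(s)\tau)\|_\infty\,ds\le T\big(|h|\,|\Lambda|+|J|\,|E(\Lambda)|\big)/(2L)$, which is strictly tighter than the paper's $\sqrt{2T\epsilon}$ whenever either bound is below the trivial value $2$. The one loose end is your final ``crude majorisation'': passing from the linear bound to the advertised square-root form requires $L\gtrsim |h|\,|\Lambda|+|J|\,|E(\Lambda)|$, and for smaller $L$ you should note explicitly that the right-hand side of (\ref{eq:adiab2}) already exceeds the trivial bound on the distance between unit vectors, so the lemma holds vacuously there. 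With that one sentence added, your argument is complete; your closing remark about keeping $\hat H_0$ out of the per-step error (deferring the Trotter splitting and its $[\hat H_0,\hat H_1]$ cost to the next lemma) correctly identifies where the paper also draws that line.
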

\begin{proof}
The triangular inequality yields
\beq\label{eq:vd-mosca}
|| \; \ket{\Phi^\star}-\ket{G} || \leq || \; \ket{\Phi'}-\ket{G} || +|| \; \ket{\Phi^\star}-\ket{\Phi'} ||.
\eeq
The first term of the r.h.s of this expression is of course bounded by $\delta$. To bound the second, we use Lemma 1 of \cite{vanDam}, which states that if two time-dependent Hamiltonians $H_a(t), H_b(t), 0 \leq t \leq T$ differ at most by $\epsilon$ in operator norm for every $t$, then the difference between the unitary evolutions they induce, $\mathcal{U}_a(T),\mathcal{U}_b(T)$ satisfy $||\mathcal{U}_a(T)- \mathcal{U}_b(T)||_{\infty} \leq \sqrt{2 T \epsilon} $. For every $t \in [0,T]$, let $k(t) \in \{0, \ldots, L-1 \}$ such that $k(t) \tau \leq t \leq (k(t)+1) \tau$. Clearly, $|| \hat{H}(t)- \hat{H}( k(t) \tau) || \leq \tau (|h| \cdot |\Lambda|+ |J| \cdot |E(\Lambda)| )$. Identifying the r.h.s. of this inequality with $\epsilon$ and bearing in mind the definition of $\tau$, one bounds the second term of the r.h.s. of (\ref{eq:vd-mosca}) in the desired way.
\end{proof}

Next, we split each unitary $\mathcal{U}_k$ into a part that depends only on $\hat{H}_0$ and a part that depends only on $\hat{H}_1(k \tau)$: for $\tau$ small enough, each unitary $\mathcal{U}_k$ can be safely replaced by the operator
\beq
U_k=e^{-i \tau \hat{H}_0} e^{-i \tau \hat{H}_1(k\tau)}.
\eeq
Indeed, the Baker-Campbell-Hausdorff identity \cite{vanDam} implies that
\beq
\label{BCH}
|| \mathcal{U}_k-U_k ||_{\infty} \leq K \big( |h| \cdot |\Lambda|+ |J| \cdot |E(\Lambda)|\big) \cdot \big( |h_{\perp}| \cdot |\Lambda| \big) \tau^2,
\eeq
for some \emph{constant} $K$. Then we arrive at the following:
\begin{lemma}

The quantity by which the state $U_{L-1} U_{L-2} \ldots U_0 \ket{+_x^{\otimes |\Lambda|}}$ deviates from the true ground state of $H^\star$ is at most 
\beq\label{eq:approx-final}
\begin{array}{lll}
\Delta &=& \delta+ T \sqrt{\frac{2 \big( |h| \cdot |\Lambda|+ |J_{\parallel}| \cdot |E(\Lambda)|\big)}{L}}\\
&&+ K L  \big( |h| \times |\Lambda|+ |J_{\parallel}| \times |E(\Lambda)|\big) \times |h_{\perp}| \cdot |\Lambda| \tau^2.
\end{array}
\eeq

\end{lemma}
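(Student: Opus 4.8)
The plan is to derive this bound directly from the two preceding lemmas by a single telescoping estimate, with no new analytic input. First I would apply the triangle inequality to split the quantity of interest as
\[
\big\| U_{L-1} \cdots U_0 \ket{+_x^{\otimes |\Lambda|}} - \ket{G} \big\| \leq \big\| \big( U_{L-1} \cdots U_0 - \mathcal{U}_{L-1} \cdots \mathcal{U}_0 \big) \ket{+_x^{\otimes |\Lambda|}} \big\| + \big\| \mathcal{U}_{L-1} \cdots \mathcal{U}_0 \ket{+_x^{\otimes |\Lambda|}} - \ket{G} \big\|.
\]
The second term is exactly $\| \ket{\Phi^\star} - \ket{G} \|$, which Lemma \ref{thm:Ising-adiab} already bounds by $\delta + T\sqrt{2(|h|\cdot|\Lambda| + |J|\cdot|E(\Lambda)|)/L}$ whenever $T \geq T_*(\hat{H},\delta)$; this accounts for the first two summands in $\Delta$ (identifying $J$ with $J_{\parallel}$).

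For the first term I would use the standard product-replacement identity
\[
U_{L-1} \cdots U_0 - \mathcal{U}_{L-1} \cdots \mathcal{U}_0 = \sum_{k=0}^{L-1} \big( \mathcal{U}_{L-1} \cdots \mathcal{U}_{k+1} \big)\, \big( U_k - \mathcal{U}_k \big)\, \big( U_{k-1} \cdots U_0 \big),
\]
which is verified by telescoping. Since each $\hat{H}_0$ and each $\hat{H}_1(k\tau)$ is Hermitian, every $U_k = e^{-i\tau\hat{H}_0}e^{-i\tau\hat{H}_1(k\tau)}$ and every $\mathcal{U}_k = e^{-i\tau(\hat{H}_0 + \hat{H}_1(k\tau))}$ is unitary, so all the flanking factors have operator norm $1$. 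Submultiplicativity of the operator norm, together with $\ket{+_x^{\otimes|\Lambda|}}$ being a unit vector, then gives
\[
\big\| \big( U_{L-1} \cdots U_0 - \mathcal{U}_{L-1} \cdots \mathcal{U}_0 \big) \ket{+_x^{\otimes |\Lambda|}} \big\| \leq \sum_{k=0}^{L-1} \| U_k - \mathcal{U}_k \|_{\infty}.
\]
Finally I would bound each summand by the Baker--Campbell--Hausdorff estimate (\ref{BCH}), which is uniform in $k$, obtaining $\sum_{k=0}^{L-1} \| U_k - \mathcal{U}_k \|_{\infty} \leq K L (|h|\cdot|\Lambda| + |J|\cdot|E(\Lambda)|)\cdot|h_{\perp}|\cdot|\Lambda|\,\tau^2$, which is precisely the third summand in $\Delta$. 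Adding the three contributions yields the claim.

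There is no serious obstacle here; the only points that deserve a line of care are (i) checking that the $U_k$ really are unitary, so that the telescoping estimate loses nothing, and (ii) confirming that the constant $K$ in (\ref{BCH}) can be taken independent of $k$ — which holds because $\|\hat{H}_1(k\tau)\|_{\infty} \leq |h|\cdot|\Lambda| + |J|\cdot|E(\Lambda)|$ and $\|\hat{H}_0\|_{\infty} \leq |h_{\perp}|\cdot|\Lambda|$ uniformly in $k$, so the second-order remainder controlling the BCH discrepancy is governed by these $k$-independent quantities. It is also worth recalling that $T$, $L$ and $\tau$ are tied by $\tau = T/L$, so after the fact one chooses $L$ large (with $T$ taken near the adiabatic threshold $T_*$) to make all three error terms, hence $\Delta$, as small as desired.
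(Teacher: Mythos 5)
Your proposal is correct and follows essentially the same route as the paper, which simply states that the result follows by combining the BCH estimate (\ref{BCH}) with Lemmata \ref{gap} and \ref{thm:Ising-adiab}; your telescoping identity and the unitarity observation are exactly the (unstated) details behind that combination. The only thing you add beyond the paper is the explicit verification that the per-step BCH error accumulates at most additively over the $L$ steps, which is the standard and intended argument.
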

\begin{proof}
The result follows by combining the inequality in Eq. \ref{BCH} with the Lemmata 
\ref{gap},\ref{thm:Ising-adiab}.
\end{proof}

\section{Approximation of fidelity overlaps}\label{sect:overlapreconstruct}

In this section we describe how to reconstruct fidelity overlap which is proportional to a partition function with complex couplings by sampling from partition functions with real couplings.  We begin by rewriting Eq. \ref{Zpower} using more compact notation:
\begin{align}
\begin{array}{ll}
Z(\vec{\beta})=B(\vec{\beta})&\displaystyle{\sum_{g_1=-n_1}^{0}\sum_{g_2=-n_2}^{0}\sum_{g_3=-n_3}^{0}\sum_{g_4=-n_4}^0\sum_{g_5=-n_5}^{0}\sum_{g_6=-n_6}^{0}   }\\
&\tilde{c}_{g_1,g_2,g_3,g_4,g_5,g_6}e^{\sum_{j=1}^6 \beta^j g_j},
\end{array}
\end{align}
where:
\[
\begin{array}{lll}
\vec{\beta}&=&\{\beta^1,\beta^2,\beta^3,\beta^4,\beta^5,\beta^6\}\equiv\{\beta_+,\beta_-,\beta'_+,\beta'_-,\beta,\beta'\},\\
n_1&=&n_2= 2 L|\Lambda|,\quad n_3 =n_4= 2 L'|\Lambda|,\quad \quad \\
n_5 &=&  L(L-1)(|\Lambda|-1),\quad n_6 =  L'(L'-1)(|\Lambda|-1).
\end{array}
\]
and where $B(\vec{\beta})=\prod_{j=1}^6 B_j(\beta^j)$ with $B_j(\beta^j)=e^{\frac{1}{2}n_j \beta^j}$ and $\tilde{c}$ is just a relabeling of $c$ with each index $g_j$ ranging from $[-n_j,0]$ rather than $[-n_j/2,n_j/2]$ (recall $n_j=2m_j$ defined in Eq. \ref{mss}).

Let us define the polynomial:
\begin{equation}
\begin{array}{lll}
p(\vec{x})&=&p(x_1,x_2,x_3,x_4,x_5,x_6)\\
&=&\displaystyle{\sum_{i_1=0}^{n_1}\sum_{i_2=0}^{n_2}\sum_{i_3=0}^{n_3}\sum_{i_4=0}^{n_4}\sum_{i_5=0}^{n_5}\sum_{i_6=0}^{n_6}  }\tilde{c}_{i_1,i_2,i_3,i_4,i_5,i_6} x_1^{i_1}x_2^{i_2}x_3^{i_3}x_4^{i_4}x_5^{i_5}x_6^{i_6},
\end{array}
\end{equation}
where $\vec{x}=\{x_1,x_2,x_3,x_4,x_5,x_6\}\in\mathbb{R}^6$.
Introducing the notation:
\[
x(\cdot)=e^{-(\cdot)}\;\;,
\]
%\begin{align}
%x_1(\beta_+)&=e^{-\beta_+}\\
%x_2(\beta_-)&=e^{-\beta_-}\\
%x_3(\overline{\beta}'_+)&=e^{-\overline{\beta}'_+}\\
%x_4(\overline{\beta}'_-)&=e^{-\overline{\beta}'_-}\\
%x_5(\beta)&=e^{-\beta}\\
%x_6(\overline{\beta}')&=e^{-\overline{\beta}'}\\
%\end{align}
one has the trivial relation:
\begin{equation}
\label{eq_PolyPartition}
\begin{array}{ll}
p(x_1(\beta_+),x_2(\beta_-),x_3(\overline{\beta}'_+),x_4(\beta'_-),x_5(\beta),x_6(\beta'))=B^{-1}(\vec{\beta})Z(\vec{\beta}).
\end{array}
\end{equation}
Note that, for physical temperatures, the domain of the polynomial is such that $x_j>0$ and $||x_j||\leq 1$ for $j=1,\dots,6$.
We now want to reconstruct the polynomial $p(\vec{x})$ from a set of $N$ data values $p(\vec{x}_{\vec{i}})$ with $\vec{x}_{\vec{i}}\equiv\{x_{1,i_1},x_{2,i_2},x_{3,i_3},x_{4,i_4},x_{5,i_5},x_{6,i_6}\}\in\Gamma$ where $\Gamma$ is a certain lattice of points in $\mathbb{R}^6$. Although several options are available \cite{Chung}, in our case the polynomial is such that the simplest possible option can be used: a rectangular mesh lattice as:
\[
\Gamma=\{x_{1,i_1=1},\dots,x_{1,i_1=n_1+1}\}\times\cdots\times\{x_{6,i_6=1},\dots,x_{6,i_6=n_6+1}\}.
\]
This is justified by the fact that, as we constructed it, the polynomial $p(\vec{x})$ has degree at most $n_j$ in $x_j$ ($j=1,\dots,6$). This means that $p(\vec{x})$ actually lies in the product space $\Pi_{n_1}\times\cdots\times\Pi_{n_6}$, where $\Pi_n$ indicates the space of univariate polynomials of degree at most $n$.
Explicitly, the data values are written as:
\[
\begin{array}{lll}
p(\vec{x}_{\vec{i}})&\equiv& p(x_{1,i_1},x_{2,i_2},x_{3,i_3},x_{4,i_4},x_{5,i_5},x_{6,i_6})\\
&\equiv& p_{i_1 i_2 i_3 i_4 i_5 i_6}\;\;.
\end{array}
\]
The reconstructed polynomial can then be written as:
\begin{align}
\label{eq_coeff_poly}
p(\vec{x})=\sum_i p_{i_1 i_2 i_3 i_4 i_5 i_6} l_{i_1 i_2 i_3 i_4 i_5 i_6}(\vec{x})\;\;,
\end{align}
where:
\[
\begin{array}{lll}
l_{i_1 i_2 i_3 i_4 i_5 i_6}(\vec{x})&=&l_{1,i_1}(x_1)l_{2,i_2}(x_2)l_{3,i_3}(x_3)\\
&&\times~ l_{4,i_4}(x_4)l_{5,i_5}(x_5)l_{6,i_6}(x_6)\;\;,
\end{array}
\]
with:
\begin{align}
l_{j,i_j}(x)=\prod_{\scriptsize\begin{array}{c}k_j=1\\ k_j\neq i_j\end{array}}^{n_j+1}\frac{x-x_{j,k_j}}{x_{j,i_j}-x_{j,k_j}}\;\;.
\end{align}
We now suppose to have a device that provides an estimate $\widehat{Z}(\beta)$ for the partition function, $Z(\beta)$, that satisfies 
\begin{equation}
\label{eq_oracle}
|\widehat{Z}(\beta)-Z(\beta)| \leq \delta\;\;,
\end{equation}
and, from this, we want to see how well we can estimate the previously defined overlaps. Since the overlaps depend on the analytically continued partition function $Z(\vec{\beta}^\star)$, we are going to show how to  reconstruct it. From Eq. \ref{eq_PolyPartition} and Eq. \ref{eq_coeff_poly} we can write:
\begin{equation}
\begin{array}{lll}
Z(\vec{\beta}^\star)&=&B(\vec{\beta}^\star)p(x_1(\beta_+^\star),x_2(\beta_-^\star),x_3({{\beta}'}^\star_+),\\
&&\times~ x_4({\beta}'^\star_-),x_5(\beta^\star),x_6({\beta}'^\star))\\
&=&B(\vec{\beta}^\star)\sum_i p_{i_1 i_2 i_3 i_4 i_5 i_6}\prod_{j=1}^6 l_{j,i_j}(\vec{x}^j\left(\beta^{\star j})\right).
\end{array}
\end{equation}
Now the coefficients $p_{i_1 i_2 i_3 i_4 i_5 i_6}$ are the values of the polynomial evaluated at the lattice points $\vec{x}_i$, and we can use the real temperature version of the partition function in order to write:
\begin{align}
Z(\vec{\beta}^\star)&=B(\vec{\beta}^\star)\sum_{\vec{i}} B^{-1}(\vec{\beta}_{\vec{i}})Z(\vec{\beta}_{\vec{i}})\prod_{j=1}^6 l_{j,i_j}(\vec{x}^j\left(\beta^{\star j})\right),
\end{align}
where $\vec{\beta}_{\vec{i}}$ represents the lattice $\Gamma$ transformed in ``$\beta$ coordinates'': 
\begin{equation}
\begin{array}{lll}
\vec{\beta}_{\vec{i}} &\equiv&\{\beta_{+,i_1},\beta_{-,i_2},{\beta}'_{+,i_3},{\beta}'_{-,i_4},\beta_{i_5},{\beta}'_{i_6}\}\\
&\equiv& \{-\log{x_{1,i_1}},-\log{x_{2,i_2}},-\log{x_{3,i_3}},\\
&&-\log{x_{4,i_4}},-\log{x_{5,i_5}},-\log{x_{6,i_6}}\}.
\end{array}
\end{equation}
We also want to make an explicit choice for this lattice:
\begin{align}\label{eq:lattice}
\vec{x}^j_{i_j}\equiv\frac{i_j}{n_j+1}~~\text{with:}~~i_j=1,\dots,n_j+1\;\;,
\end{align}
which clearly satisfies the properties of the rectangular mesh $\Gamma$ we stated before. Explicitly the mapping of this lattice in the ``temperature domain'' reads:
\begin{align}\label{eq:lattice_beta}
\beta^j_{i_j}=-\log{\frac{i_j}{n_j+1}},
\end{align}
and henceforth we use the notation $\vec{\beta}_{\vec{i}}=\{\beta^1_{\vec{i}},\beta^2_{\vec{i}},\beta^3_{\vec{i}},,\beta^4_{\vec{i}},\beta^5_{\vec{i}},\beta^6_{\vec{i}}\}$.
Now, we can write the final formula for the overlap as a function of the estimation of the partition function at real temperatures as:
\begin{equation}\label{eq:corr-action}
\begin{array}{lll}
f&=& \frac{1}{2^{|\Lambda |}}\Big[\sqrt{\frac{1-\epsilon^2}{\epsilon^4+4}}\Big]^{L |\Lambda|} \Big[\sqrt{\frac{1-\epsilon'^2}{\epsilon'^4+4}}\Big]^{L' |\Lambda|} \;B(\vec{\beta}^\star)\\
&&\times\sum_{\vec{i}} B^{-1}(\vec{\beta}_{\vec{i}})Z(\vec{\beta}_{\vec{i}})\prod_{j=1}^6 l_{j,i_j}(\vec{x}^j\left(\beta^{\star j})\right)\;\;.
\end{array}
\end{equation}
We are interested in studying how the variance on this quantity scales. We have:
\begin{equation}
\begin{array}{lll}
\sigma^2_{f}&\leq& \frac{1}{2^{2(L+L'+1)|\Lambda |}}\;|B(\vec{\beta}^\star)|^2\sum_{\vec{i}} |B^{-1}(\vec{\beta}_{\vec{i}})|^2\sigma^2_{Z(\vec{\beta}_{\vec{i}})}\\
&&\times\prod_{j=1}^6 |l_{j,i_j}(\vec{x}^j\left(\beta^{\star j})\right)|^2\;\;.
\end{array}
\end{equation}
From Eq. \ref{eq_oracle} we have:
\begin{equation}
\label{eq_sigmaOverlap}
\begin{array}{lll}
\sigma^2_{f}&\leq& \frac{1}{2^{2(L+L'+1)|\Lambda |}}\;|B(\vec{\beta}^\star)|^2\sum_{\vec{i}} \delta_{\vec{i}}^2 |B^{-1}(\vec{\beta}_{\vec{i}})|^2\\
&&\times\prod_{j=1}^6 |l_{j,i_j}(\vec{x}^j\left(\beta^{\star j})\right)|^2\;\;.
\end{array}
\end{equation}
%which we want to rewrite as:
%\begin{equation}
%\sigma_{\widetilde{f}}\leq \frac{1}{2^{(L+L'+1)|\Lambda |}}\;\sqrt{\sum_i \left(\delta_i \prod_{j=1}^6 |B_j(\vec{\beta}^{*j})||B^{-1}_j(\vec{\beta}^j_i)| |l_{j,i_j}(\vec{x}^j\left(\vec{\beta}^{*j})\right)|\right)^2}.
%\end{equation}
%We now begin to study the expression inside the product symbol (for fixed $j$).\\
%----------------------
We now study the term by term the quantities in this expression.  First,
\begin{equation}
\begin{array}{lll}
|B(\vec{\beta}^\star)|&=&\prod_{j=1}^6 |B_j(\beta^{\star j})|\\
&=&\prod_{j=1}^6 |e^{\frac{1}{2}n_j\beta^{\star j}}|\\
&=&\left|\left(\frac{1}{\sqrt{-i(1+\epsilon)}}\right)^{n_1/2}\left(\frac{1}{\sqrt{i(1-\epsilon)}}\right)^{n_2/2}\right.\\
&&\times\left.\left(\frac{1}{\sqrt{i(1+\epsilon)}}\right)^{n_3/2}\left(\frac{1}{\sqrt{-i(1-\epsilon)}}\right)^{n_4/2}\right|\\
&=&\frac{1}{(1-\epsilon^2)^{(L+L')|\Lambda|}},
\end{array}
\end{equation}
and
\begin{equation}
\begin{array}{lll}
|B(\vec{\beta}_{\vec{i}})^{-1}|&=&\prod_{j=1}^6 |B_j(\vec{\beta}^{j}_{i_j})^{-1}|\\
&=&\prod_{j=1}^6 |e^{-\frac{1}{2}n_j\vec{\beta}^{j}_{i_j}}|\\
&=&\prod_{j=1}^6 x_{j,i_j}^{n_j/2}\\
&=&\prod_{j=1}^6 \left(\frac{i_j}{n_j+1}\right)^{n_j/2}.
\end{array}
\end{equation}
We now turn to each term $ |l_{j,i_j}\left(\vec{x}^j(\beta^{\star j})\right)|$ for each fixed $j$:
\begin{equation}
\begin{array}{lll}
l_{j,i_j}(\vec{x}^j(\beta^{\star j}))&=&\prod_{\scriptsize\begin{array}{c}k_j=1\\ k_j\neq i_j\end{array}}^{n_j+1}\frac{|\vec{x}^j(\beta^{\star j})-x_{j,k_j}|}{|x_{j,i_j}-x_{j,k_j}|}\\
&=&\frac{(n_j+1)^{n_j}}{\prod_{k_j =1}^{i_j-1} (i_j-k_j)\prod_{k_j =i_j+1}^{n_j+1} (k_j-i_j)}\\
&&\times\frac{\prod_{k_j=1}^{n_j+1}\sqrt{(\rho_j \cos{\theta_j}-\frac{k_j}{n_j+1})^2+\rho_j^2 \sin{\theta_j}^2}}{\sqrt{(\rho_j \cos{\theta_j}-\frac{i_j}{n_j+1})^2+\rho_j^2 \sin{\theta_j}^2}}\\
&&\\
&\leq& \frac{(n_j+1)^{n_j}}{i_j!(n_j+1-i_j)!}\frac{e^{\frac{n_j+1}{2}I(\rho_j,\theta_j)}}{\sqrt{(\rho_j \cos{\theta_j}-\frac{i_j}{n_j+1})^2+\rho_j^2 \sin{\theta_j}^2}},
\end{array}
\end{equation}
where $\vec{x}^j(\beta^{\star j})\equiv \rho_j e^{i\theta_j}$ as can be deduced by looking at Eqs. \ref{eq:cons:betapm}, \ref{betavariables} and:
\begin{equation}
\begin{array}{llll}
I(\rho_j,\theta_j)&=&\int_{0}^1 dx&\log{[(\rho_j \cos{\theta_j}-(x-\frac{1}{n_j+1}))^2}+\rho_j^2 \sin{\theta_j}^2]\\
&<&-\frac{1}{4}&.
\end{array}
\end{equation}
The last inequality holds for the cases considered by Eq. \ref{betavariables}, for $n_j\geq10$.  Note that 
$\theta_1=-\theta_2=-\theta_3=\theta_4=\frac{\pi}{4}$ and $\theta_5,\theta_6\ll 1$ since $J\tau/L,J\tau'/L'\ll 1$ by assumption in the adiabatic mapping.
Reassembling everything and using Eq. \ref{eq_sigmaOverlap} we get:
\begin{align}
\sigma^2_{f}&\leq \sum_{\vec{i}} A_{\vec{i}}^2 \delta_{\vec{i}}^2   \;\;,
\end{align}
with:
\begin{equation}
\begin{array}{lll}
A_{\vec{i}}^2 &=&  \frac{\prod_{j=1}^6 \left[\frac{\left(\frac{i_j}{n_j+1}\right)^{n_j}    (n_j+1)^{2 n_j}e^{(n_j+1)I(\rho_j,\theta_j)}}{\left((\rho_j \cos{\theta_j}-\frac{i_j}{n_j+1})^2+\rho_j^2 \sin{\theta_j}^2\right)\left( i_j!(n_j+1-i_j)!\right)^2}\right]}{(1-\epsilon^2)^{2(L+L')|\Lambda|}2^{2(L+L'+1)|\Lambda |}}\\
&&\\
&\leq&\frac{\prod_{j=1}^6 \left[\frac{\left(\frac{i_j}{n_j+1}\right)^{n_j}    (n_j+1)^{2 n_j}e^{-(n_j+1)/4}}{\left( i_j!(n_j+1-i_j)!\right)^2}\right]}{(1-\epsilon^2)\theta_5^2\theta_6^2\sin^8{\frac{\pi}{4}}(1-\epsilon^2)^{2(L+L')|\Lambda|}2^{2(L+L'+1)|\Lambda |}}.
\end{array}
\end{equation}
In arriving at the inequality above we used the fact that ${(\rho_j \cos{\theta_j}-\frac{i_j}{n_j+1})^2+\rho_j^2 \sin^2{\theta_j}}\geq \rho_j^2\sin^2{\frac{\pi}{4}}$ for $j=1,2,3,4$ and  ${(\rho_j \cos{\theta_j}-\frac{i_j}{n_j+1})^2+\rho_j^2 \sin^2{\theta_j}}\geq \theta_j^2$ for $j=5,6$, supposing that $\theta_5,\theta_6\rightarrow 0$ as  is the case.  In the temperature domain this formula reads:
\begin{align}
\begin{array}{ll}
&A^2(\beta^j_{i_j})=
\frac{\prod_{j=1}^6 \left[\frac{e^{-n_j\beta^j_{i_j}}   (n_j+1)^{2 n_j}e^{-(n_j+1)/4}}{ \Gamma^2((n_j+1)e^{-\beta^j_{i_j}}+1)\Gamma^2((n_j+1)(1-e^{-\beta^j_{i_j}})+1)}\right]}{\theta_5^2\theta_6^2\sin^8{\frac{\pi}{4}}(1-\epsilon^2)^{2(L+L'+1)|\Lambda|}2^{2(L+L'+1)|\Lambda |}}\;\;.
\end{array}
\end{align}
From this we can get the following condition for the variance on the overelap to be polynomially bounded in the system size expressed for generic temperatures:
\begin{align}
\delta(\vec{\beta})\leq \frac{1}{A(\vec{\beta})}\;\;.
\end{align}
Explicitly we have:
\begin{align}
\begin{array}{ll}
\delta(\vec{\beta})\leq\prod_{j=1}^6\frac{\theta_5 \theta_6\sin^4{\frac{\pi}{4}}e^{\frac{n_j+1}{8}} \Gamma((n_j+1)e^{-\beta^{j}}+1)\Gamma({(n_j+1)(1-e^{-\beta^{j}})+1)} }{(n_j+1)^{n_j} e^{-\frac{n_j}{2}\beta^{j}}},
\end{array}
\end{align}
where we used  $1-\epsilon^2\geq\frac{1}{2}$.

\begin{figure}
 \begin{centering}
 \includegraphics[width=\columnwidth]{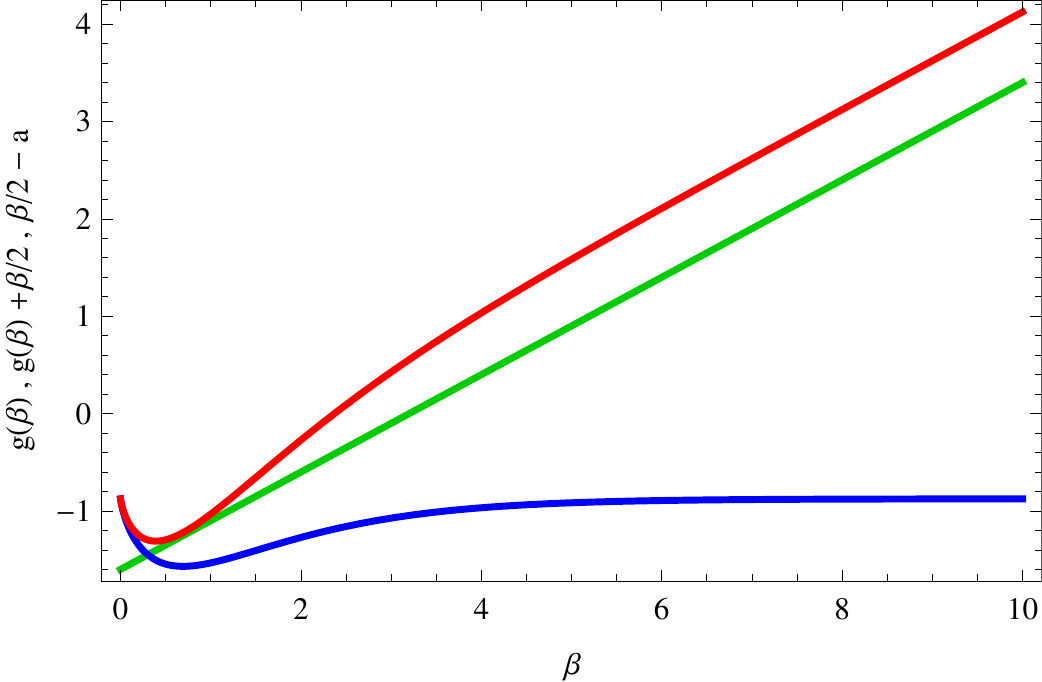}
   \caption{Plots of the functions $g(\beta)$ (blue) and $g(\beta)+\frac{\beta}{2}$ (red) and $\frac{\beta}{2}-a$ (green) with $a = \min_\beta{g(\beta)}= -1.6$ as defined in Eq. \ref{gfunction}.}
   \end{centering}\label{fig_delta}
\end{figure}

Using Stirling approximation we then obtain:
\begin{align}
\label{eq_delta_result}
\begin{array}{lll}
\log{\delta(\vec{\beta})}&\leq& \sum_{j=1}^6 n_j \left(g(\beta^j)+ \frac{\beta^j}{2}\right) \\
&+& \sum_{j=1}^6 \log{(n_j+1)} + \sum_{j=1}^6 g(\beta^j)+K\;\;,
\end{array}
\end{align}
where:
\begin{align}
\label{gfunction}
g(\beta_j)&=(1-e^{-\beta^j})\log{(1-e^{-\beta^j})}-\beta^j e^{-\beta^j}-\frac{7}{8}\\
K&= 4\log{\sin{\frac{\pi}{4}}}+\log{\theta_5\theta_6}.
\end{align}
This result is telling us how much error we can tolerate in the sampling of the classical partition function in order to be able to reconstruct certain quantum overlaps with a precision that scales polynomially in the system size. All error values satisfying Eq. \ref{eq_delta_result} allow for such a reconstruction. For this reason, if we want to obtain a weaker but more compact result we can chose to state a smaller threshold.
We can do this by substituting the functions appearing in Eq.  \ref{eq_delta_result}  with their minimum (see Fig. \ref{fig_delta}):
\begin{align}
\log{\delta(\vec{\beta})}\leq \sum_{j=1}^6  \left(\frac{\beta^j}{2}+a\right) n_j + \sum_{j=1}^6 \log{(n_j+1)} + b\;\;,
\end{align}
where:
\begin{align}
a &\leq \min_\beta{g(\beta)}\sim -1.6\\
b &= K-10\;\;.
\end{align}
In the thermodyamic limit the above formula can be further approximated by:
\begin{equation}
\begin{array}{lll}
\log{\delta(\vec{\beta})}&\leq& \sum_{j=1}^6 \left(\frac{\beta^j}{2}+a\right)  n_j + \sum_{j=1}^{6}\log (n_j) +\log{\theta_5\theta_6} \;\;,
\end{array}
\end{equation}
so that:
\begin{align}
\begin{array}{lll}
\delta(\vec{\beta})&\leq& 2^4 TT'L (L-1) L'(L'-1) |\Lambda|^6\prod_{j=1}^6 e^{\left(\frac{\beta^j}{2}-1.6\right)  n_j }.
\end{array}
\end{align}

This is the central result of this section.  It  has been obtained by supposing we sample the classical partition function in the (inverse) temperature lattice of points as given by Eq. \ref{eq:lattice_beta}. One can note that, in the thermodynamic limit, we are effectively sampling on a domain which ranges over all possible temperatures. We want now to address the question of how the result changes if we instead sample only on a limited temperature domain inside $[\beta^j_{\rm max},\beta^j_{\rm min}]$ for each of the six variables $\beta^j$. To do this we introduce the interval (recall we have set $J=J'=1$)
\begin{equation}
\Delta_j=\frac{e^{-\beta^j_{\rm min}}-e^{-\beta^j_{\rm max}}}{n_j}.
\end{equation}
and slightly modify the definition of the lattice given by Eq. \ref{eq:lattice} to have:
\begin{align}
\vec{x}_{j,i_j}\equiv\frac{\Delta_j}{n_j}(i_j-1)+x^E_j~~\text{with:}~~i_j=1,\dots,n_j+1
\end{align}
where $0< x^E_j \le 1$ and $\Delta_j$ is contrained so that $0< x_{j,i_j}\le 1$. In the following we want to focus on the case when $\Delta_j\rightarrow 0$.\\
The only difference with respect to the previous case lies in the terms $ |l_{j,i_j}(\vec{x}^j\left(\beta^{\star j})\right)|$ for each fixed $j$. We now have:
\begin{equation}
\begin{array}{lll}
l_{j,i_j}(\vec{x}^j(\beta^{\star j}))&=&\prod_{\scriptsize\begin{array}{c}k_j=1\\ k_j\neq i_j\end{array}}^{n_j+1}\frac{|\vec{x}^j(\beta^{\star j})-x_{j,k_j}|}{|x_{j,i_j}-x_{j,k_j}|}\\
&=&\left(\frac{n_j}{\Delta_j}\right)^{n_j}\frac{\frac{\prod_{k_j=1}^{n_j+1}\sqrt{(\rho_j \cos{\theta_j}-\frac{\Delta_j}{n_j}(k_j-1)-x^E_j)^2+\rho_j^2 \sin{\theta_j}^2}}{\sqrt{(\rho_j \cos{\theta_j}-\frac{\Delta_j}{n_j}(i_j-1)-x^E_j)^2+\rho_j^2 \sin{\theta_j}^2}}}{\prod_{k_j =1}^{i_j-1} |i_j-k_j|\prod_{k_j =i_j+1}^{n_j+1} |i_j-k_j|}\\
&&\\
&=&\left(\frac{n_j}{\Delta_j}\right)^{n_j}\frac{\frac{e^{\frac{1}{2}\sum_{k_j=1}^{n_j+1}{(\rho_j \cos{\theta_j}-\frac{\Delta_j}{n_j}(k_j-1)-x^E_j)^2+\rho_j^2 \sin{\theta_j}^2}}}{\sqrt{(\rho_j \cos{\theta_j}-\frac{\Delta_j}{n_j}(i_j-1)-x^E_j)^2+\rho_j^2 \sin{\theta_j}^2}}}{\prod_{k_j =1}^{i_j-1} |i_j-k_j|\prod_{k_j =i_j+1}^{n_j+1} |i_j-k_j|}\\
&&\\
&\leq &\left(\frac{n_j}{\Delta_j}\right)^{n_j}\frac{\frac{e^{\frac{n_j}{2\Delta_j}I(\rho_j,\theta_j,\Delta_j,S_j,x^E_j)}}{\sqrt{(\rho_j \cos{\theta_j}-\frac{\Delta_j}{n_j}(i_j-1)-x^E_j)^2+\rho_j^2 \sin{\theta_j}^2}}}{i_j!(n_j+1-i_j)!}
\end{array}
\end{equation}
where:
\begin{equation}
\begin{array}{lll}
I(\rho_j,\theta_j,\Delta_j,S_j,x^E_j)&=&\int_{x^E_j}^{x^E_j+\Delta_j}dx~\log{\left[(\rho_j\cos{\theta_j}-(x-S_j))^2\right.}\\
&&{\left.+\rho_j^2\sin^2{\theta_j}\right]}
\end{array}
\end{equation}
with:
\begin{equation}
S_j=\left\{\begin{array}{c} \frac{\Delta_j}{n_j}~~\text{if}~~x^E_j,x^E_j+\Delta_j < \rho_j \cos{\theta_j}  \\
0 ~~\text{if}~~x^E_j,x^E_j+\Delta_j > \rho_j\cos{\theta_j}.\end{array}\right.
\end{equation}
The value of the additional variable $S_j\ll 1$ introduced here depends on whether both $x^E$ and $x^E+\Delta_j$ lie on the same side of the domain of $x$ split by the position of the minimum of the function we want to integrate $f(x)=(\rho_j\cos{\theta_j}-x)^2+\rho_j^2\sin{\theta_j}$. Since in the end we want to work with $\Delta_j\ll 1$ this is not such a restrictive hypothesis but it allows for the following inequality (used to get the bounds on the quantities $l_{j,i_j}$) to be true:
\begin{equation}
\begin{array}{lll}
I(\rho_j,\theta_j,\Delta_j,S_j,x^E_j)&\ge& \sum_{k_j=1}^{n_j+1}{(\rho_j \cos{\theta_j}-\frac{\Delta_j}{n_j}(k_j-1)-x^E_j)^2}\\
&&+\rho_j^2 \sin{\theta_j}^2
\end{array}
\end{equation}
%
%This is the inequality
% the integral previously defined an upper bound of the product:
%\begin{equation}
%\prod_{k_j=1}^{n_j+1}\sqrt{(\rho_j \cos{\theta_j}-\frac{\Delta_j}{n_j}(k_j-1)-x^E_j)^2+\rho_j^2 \sin{\theta_j}^2}
%\end{equation}
%In fact, if we define $f(x)=(\rho_j\cos{\theta_j}-x)^2+\rho_j^2\sin{\theta_j}$, these two conditions assure that both $x^E$ and $x^E+\Delta_j$ lies on the same side of the domain of $x$ spiltted by the position of the minimum of $f$. Since in the end we want to deal with $\Delta_j\ll 1$ this is not such a restrictive hypotesis. 
%..In fact, the function 
%where we are restricting to the following two cases for each value of $j$:
%\begin{equation}
%\begin{array}{lll}
%x^E_j,x^E_j+\Delta_j &<&\rho_j \cos{\theta_j}\\
%x^E_j,x^E_j+\Delta_j &>& \rho_j\cos{\theta_j}
%\end{array}
%\end{equation}
%If we define a $f(x)=(\rho_j\cos{\theta_j}-x)^2+\rho_j^2\sin{\theta_j}$, these two conditions assure that both $x^E$ and $x^E+\Delta_j$ lies on the same side of the domain of $x$ spiltted by the position of the minimum of $f$. Since in the end we want to deal with $\Delta_j\ll 1$ this is not such a restrictive hypotesis. $S_j=\frac{\Delta_j}{n_j}$ has a value that depends on which of these two conditions is satisfied. If the first condition is met and $S_j=0$ if the second condition holds. The integral $I$ is finally defined as:
The integral $I(\rho_j,\theta_j,\Delta_j,S_j,x^E_j)$ can be computed to first order in $\Delta$:
\begin{equation}
\begin{array}{lll}
I(\rho_j,\theta_j,\Delta_j,S_j,x^E_j)&=&\frac{n_j\Delta_j}{2}\log{ P_j}
\end{array}
\end{equation}
where $P_j=(x_E-S_j)^2+\rho_j^2-2(x_E-S_j)\rho\cos{\theta_j}$. Note that: $0<P_j<(x_E-S_j+\rho_j)^2$. In the thermodynamic limit and by supposing $x_E<1-\epsilon$ we have $0<P_j<4$. The last equality defining $I$ holds at the first order in $\Delta_j$. Analogously to what was done before we write:
\begin{equation}
\begin{array}{lll}
\delta(\vec{\beta})\leq\theta_5\theta_6\sin^4{\frac{\pi}{4}} \prod_{j=1}^6\frac{\Gamma((n_j+1)e^{-\beta^{j}}+1)\Gamma({(n_j+1)-(n_j+1)e^{-\beta^{j}}+1)} }{(\frac{n_j}{\Delta_j})^{n_j} e^{-\frac{n_j}{2}\beta^{j}}P_j^{\frac{n_j}{2}}}
\end{array}
\end{equation}
and then take advantage of the Stirling approximation:
\begin{equation}
\begin{array}{lll}
\log{\delta(\vec{\beta})}&\le& K+\sum_{j=1}^6(n_j+1)\log{(n_j+1)}-n_j\log{n_j}\\
&&+n_j (g'(\beta^j)+\frac{1}{2}\beta^j-\frac{1}{2}\log{P_j}+\log{\Delta_j})+g'(\beta^j)
\end{array}
\end{equation}
where:
\begin{equation}
\begin{array}{lll}
g'(\beta^j)&=&(1-e^{-\beta^j})\log{(1-e^{\beta^j})}-\beta^j e^{-\beta^j}-1\\
K&=& 4\log{\sin{\frac{\pi}{4}}}+\log{\theta_5\theta_6}\;\;.
\end{array}
\end{equation}
In the thermodynamic limit this result becomes:
\begin{equation}
\begin{array}{lll}
\log{\delta(\vec{\beta})}&\le& K+\sum_{j=1}^6 1+\log{n_j}+n_j (g'(\beta^j)+\frac{1}{2}\beta^j-\frac{1}{2}\log{ P_j}\\
&&+\log{\Delta_j})+g'(\beta^j),
\end{array}
\end{equation}
or
\begin{equation}
\begin{array}{lll}
\delta(\vec{\beta})\le 2^4TT'L(L-1)L'(L'-1)\prod_{j=1}^6{\left(\frac{\Delta_j}{\sqrt{P_j}}\right)^{n_j}}  e^{(\frac{\beta^j}{2}-1)n_j}.
\end{array}
\end{equation}
A less conservative result takes advantage of the upper bound for $P_j$ so that:
%\begin{equation}
%\delta(\vec{\beta})\le2^4(JT)(J'T')L(L-1)L'(L'-1)\prod_{j=1}^6{\left(\frac{\Delta_j}{x_E-S_j+\rho_j}\right)^{n_j}}  \prod_{j=1}^6 e^{(\frac{\beta_j}{2}-1)n_j}
%\end{equation}
%or:
\begin{equation}
\delta(\vec{\beta})\le2^4TT'L(L-1)L'(L'-1)  \prod_{j=1}^6 e^{(\frac{\beta^j}{2}-1+\log{\frac{\Delta_j}{2}})n_j}.
\end{equation}
Hence the price for allowing the classical partition function to be estimated only in a small temperature window is an overhead exponential in the system size.

\section{Magnetisation and Approximation Schemes}\label{sect:FPRAS}

The first part of our construction closely follows a general argument presented in Ref. \cite{JS}, and establishes a connection between partition function evaluations and the ability to draw samples from Boltzmann probability distributions. Some adaptations were made, though. We felt that indicating only these adaptations would have resulted in an awkward presentation. This is why, for the sake of clarity, we have chosen to reproduce this argument, with these adaptations included, in a concise but self-contained manner. In the second part of our construction, we show how measurements of magnetisation on specific non-homogeneous Ising models allow to draw from Boltzmann distributions. 

Let us thus consider the Ising model on a two-dimensional square lattice $\Lambda$, described by the Hamiltonian:
\beq
H(\sigma)=-J\sum_{\langle i, j \rangle} \sigma_i \sigma_j-h\sum_{i \in \Lambda} \sigma_i.
\eeq
For $h=0$, the model is solvable and $Z(h=0)$ is known exactly (see e.g. \cite{Wu}). We wish to evaluate the partition function at a fixed temperature\footnote{Change of notations: Since we will work at constant temperature, we will from now drop $\beta$ and simply write $Z(h)$ instead of $Z(\beta,h)$.} $\beta$, $Z(h)$, for $h >0$, say\footnote{The case $h <0$ is treated similarly.}. For that purpose, we express $Z(h)$ as 
\beq
Z(h)=\frac{Z(h_L)}{Z(h_{L-1})} \times \frac{Z(h_{L-1})}{Z(h_{L-2})} \times \ldots \times \frac{Z(h_{1})}{Z(h_{0})} \times Z(h_{0}),
\eeq 
where $0=h_0 < h_1 < \ldots < h_L=h$. These values $h_k$ are chosen to be equally spaced, and we will denote the spacing $h_k-h_{k-1}$ by $\delta h$. Each ratio $\varrho_k=Z(h_k)/Z(h_{k-1})$ can be expressed as 
\beq
\begin{array}{lll}
\varrho_k&=&\sum_{\sigma} \frac{e^{-\beta H_{k-1}(\sigma)}}{Z(h_{k-1})} e^{\beta \delta h |\Lambda| M(\sigma)}\\
& \equiv &\sum_{\sigma} \pi_{k-1}(\sigma) \; e^{\beta \delta h |\Lambda| M(\sigma)},
\end{array}
\eeq
where $M(\sigma)$ denotes the mean magnetisation of the system when the lattice is in configuration $\sigma$, $|\Lambda|$ denotes again the size of the lattice $\Lambda$, and where $H_{k-1}$ is a shorthand notation for the hamiltonian when the magnetic field is set to $h_{k-1}$.

In order to evaluate $Z(h)$, we will use a collection of estimators for the quantities $\varrho_k$, each involving $n$ sample configurations. These estimators are defined as
\beq
\begin{array}{lll}
\hat{\varrho}_k: \{ \sigma_k^{(1)}, \ldots, \sigma_k^{(n)}\} &\to& \hat{\varrho}_k(\sigma_k^{(1)}, \ldots, \sigma_k^{(n)})\\
&=&\frac{1}{n} \sum_{j=1}^n e^{\beta |\Lambda| \delta h M(\sigma_k^{(j)})},
\end{array}
\eeq  
where each sample $\sigma_k^{(j)}$ is drawn according to some probability distribution $\pi'_{k-1}$. Our estimator for $Z(h)$ is 
\bed
\hat{Z}(h) \equiv \prod_{k=1}^L \hat{\varrho}_k \; Z(h_0).
\eed

Let $\bar{\varrho}_k$ denote the mean value of $\hat{\varrho}_k$, i.e.
\bed
\begin{array}{ll}
\bar{\varrho}_k=\sum_{\sigma_k^{(1)}} \ldots \sum_{\sigma_k^{(n)}} &\pi'_{k-1}(\sigma_k^{(1)}) \ldots \pi'_{k-1}(\sigma_k^{(n)}) \; \\
&\times~\hat{\varrho}_k (\sigma_k^{(1)}, \ldots, \sigma_k^{(n)}).
\end{array}
\eed
Since all $\hat{\varrho}_k$ are independent random variables, we find that the mean value of $\hat{Z}(h)$ is given by $\bar{Z}(h)=\prod_{k=1}^L \bar{\varrho}_k Z(h_0)$. Now let us assume that 
\beq\label{eq:first-general-bound}
|Z(h)-\bar{Z}(h)| \leq \epsilon' Z(h),
\eeq
and that
\beq
|\bar{Z}(h)-\hat{Z}(h)| \leq \delta \; \bar{Z}(h),
\eeq
with probability at least, $3/4$ say \footnote{This value is somewhat arbitrary. As explained in Ref.\cite{JS}, any level of confidence strictly above $1/2$ can be efficiently boosted to arbitrarily close to 1.}. Then  
\bed
(1-\delta)(1-\epsilon') Z(h) \leq \hat{Z}(h) \leq (1+\delta)(1+\epsilon') Z(h),
\eed
with probability at least $3/4$. Thus 
\beq
(1-\epsilon) Z(h) \leq \hat{Z}(h) \leq (1+\epsilon) Z(h)
\eeq
with probability at least $3/4$ whenever $\epsilon \geq \delta+\epsilon'+\delta \epsilon'$. 

Clearly,
\bed
e^{-\beta \delta h |\Lambda|} \leq e^{\beta |\Lambda| \delta h M(\sigma)} \leq e^{\beta \delta h |\Lambda|} \hspace{0.2 cm} \forall \sigma.
\eed
Plugging these inequalities into Hoeffding's inequality \cite{BE}, we find that
\beq
\text{Prob}[ |\hat{\varrho}_k-\bar{\varrho}_k| \leq \zeta ] \geq 1-2 e^{-2 n \zeta^2/\sinh(|\Lambda| \beta \delta h)^2}.
\eeq
Let us use this latter relation in order to construct an upper bound on $|\hat{Z}(h)-\bar{Z}(h)|$ valid with tunable probability.  We will use the following Lemma:

\begin{lemma}
\label{approxmagZ}
\beq
|\hat{Z}(h)-\bar{Z}(h)| \leq |\prod_{k=1}^L (1+\frac{\zeta}{\bar{\varrho}_k})-1| \; \bar{Z}(h)
\eeq
with probability at least $(1-2 e^{-2 n \zeta^2/\sinh(|\Lambda| \beta \delta h)^2})^L$.
\end{lemma}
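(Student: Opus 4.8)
The plan is to condition on the favourable event $\mathcal{G}=\bigcap_{k=1}^L \{|\hat{\varrho}_k-\bar{\varrho}_k|\le\zeta\}$. Since the samples used to build the different estimators $\hat{\varrho}_k$ are drawn independently, the random variables $\hat{\varrho}_1,\dots,\hat{\varrho}_L$ are independent; combining this with the Hoeffding-type bound $\mathrm{Prob}[|\hat{\varrho}_k-\bar{\varrho}_k|\le\zeta]\ge 1-2e^{-2n\zeta^2/\sinh(|\Lambda|\beta\delta h)^2}$ already established, one gets $\mathrm{Prob}[\mathcal{G}]\ge(1-2e^{-2n\zeta^2/\sinh(|\Lambda|\beta\delta h)^2})^L$. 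So it suffices to prove the deterministic inequality $|\hat{Z}(h)-\bar{Z}(h)|\le|\prod_{k=1}^L(1+\zeta/\bar{\varrho}_k)-1|\,\bar{Z}(h)$ on $\mathcal{G}$.

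On $\mathcal{G}$, write $\hat{\varrho}_k=\bar{\varrho}_k+\eta_k$ with $|\eta_k|\le\zeta$. Because $\bar{\varrho}_k$ is the expectation of $\frac1n\sum_j e^{\beta|\Lambda|\delta h M(\sigma_k^{(j)})}$, which is a positive quantity, we have $\bar{\varrho}_k>0$ and may factor
\[
\hat{Z}(h)-\bar{Z}(h)=Z(h_0)\Big(\prod_{k=1}^L\hat{\varrho}_k-\prod_{k=1}^L\bar{\varrho}_k\Big)
=Z(h_0)\prod_{k=1}^L\bar{\varrho}_k\;\Big(\prod_{k=1}^L\big(1+\tfrac{\eta_k}{\bar{\varrho}_k}\big)-1\Big)
=\bar{Z}(h)\Big(\prod_{k=1}^L\big(1+\tfrac{\eta_k}{\bar{\varrho}_k}\big)-1\Big).
\]

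It remains to bound $|\prod_{k=1}^L(1+\eta_k/\bar{\varrho}_k)-1|$. Here I would invoke the elementary fact that if $|a_k|\le b_k$ with $b_k\ge0$, then $|\prod_{k=1}^L(1+a_k)-1|\le\prod_{k=1}^L(1+b_k)-1$: expanding the product gives $\prod_k(1+a_k)-1=\sum_{\emptyset\ne S\subseteq\{1,\dots,L\}}\prod_{k\in S}a_k$, so the triangle inequality yields $|\prod_k(1+a_k)-1|\le\sum_{\emptyset\ne S}\prod_{k\in S}|a_k|\le\sum_{\emptyset\ne S}\prod_{k\in S}b_k=\prod_k(1+b_k)-1$, the last quantity being non-negative. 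Applying this with $a_k=\eta_k/\bar{\varrho}_k$ and $b_k=\zeta/\bar{\varrho}_k$ gives $|\prod_{k=1}^L(1+\eta_k/\bar{\varrho}_k)-1|\le\prod_{k=1}^L(1+\zeta/\bar{\varrho}_k)-1=|\prod_{k=1}^L(1+\zeta/\bar{\varrho}_k)-1|$, and substituting into the displayed identity closes the argument.

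This proof is essentially bookkeeping, so there is no real obstacle; the only point requiring a moment of care is the combinatorial expansion estimate, and the observation that $\bar{\varrho}_k>0$ so that the factorization $\prod\hat{\varrho}_k=\prod\bar{\varrho}_k\prod(1+\eta_k/\bar{\varrho}_k)$ is legitimate. One should also remark that the probability bound uses independence of the estimators across the $k$ index (not independence of the individual samples within a fixed $k$, which need only be identically distributed for Hoeffding), so that $\mathrm{Prob}[\mathcal{G}]=\prod_k\mathrm{Prob}[|\hat{\varrho}_k-\bar{\varrho}_k|\le\zeta]\ge(1-2e^{-2n\zeta^2/\sinh(|\Lambda|\beta\delta h)^2})^L$.
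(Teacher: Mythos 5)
Your proof is correct and follows essentially the same route as the paper: factor $\hat{Z}(h)-\bar{Z}(h)$ as $\bar{Z}(h)\big(\prod_k(1+(\hat{\varrho}_k-\bar{\varrho}_k)/\bar{\varrho}_k)-1\big)$, bound the product term by the corresponding product with $\zeta/\bar{\varrho}_k$, and invoke Hoeffding together with independence of the $\hat{\varrho}_k$ across $k$. Your subset-expansion plus triangle-inequality argument for $|\prod_k(1+a_k)-1|\le\prod_k(1+b_k)-1$ is a slightly cleaner packaging of the paper's two-case combinatorial argument for $|\prod_k(1+x_k)-1|\le|\prod_k(1+|x_k|)-1|$, but the substance is identical.
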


\begin{proof}
We start with the following identity
\bed
\begin{array}{lll}
|\hat{Z}(h)-\bar{Z}(h)|&=&|\prod_{k=1}^L \hat{\varrho}_k-\prod_{k=1}^L \bar{\varrho}_k| \; Z(h_0)\\
&=&|\prod_{k=1}^L (1+\frac{\hat{\varrho}_k-\bar{\varrho}_k}{\bar{\varrho}_k})-1| \; \bar{Z}(h)
\end{array}
\eed
Next, we have the inequality
\beq\label{eq:useful-ineq}
|\prod_{k=1}^L (1+x_k)-1| \leq |\prod_{k=1}^L (1+|x_k|)-1|, \hspace{0.6cm} \forall x_k \in \mathbb{R}.
\eeq
Let us consider two cases: (i) $\prod_{k=1}^L (1+x_k)-1\geq 0$, (ii) $\prod_{k=1}^L (1+x_k)-1 < 0$. The inequality is trivial in case (i). In case (ii), we need to prove that
\bed
1-\prod_{k=1}^L (1+x_k) \leq \prod_{k=1}^L (1+|x_k|)-1,
\eed
or $2 \leq \prod_{k=1}^L (1+|x_k|)+\prod_{k=1}^L (1+x_k)$. The r.h.s. of this last inequality can certainly be written as 
\bed
2+\sum_{i_1} \ldots \sum_{i_L} \varkappa_{i_1 \ldots i_L} (|x_1|^{i_1} \ldots |x_{L} |^{i_L}+x_1^{i_1} \ldots x_{L}^{i_L}), 
\eed
where each coefficient $\varkappa_{i_1 \ldots i_L}$ is non-negative. It is also clear that each quantity $(|x_1|^{i_1} \ldots |x_{L} |^{i_L}+x_1^{i_1} \ldots x_{L}^{i_L})$ is non-negative.
Inequality (\ref{eq:useful-ineq}) implies that 
\bed
|\hat{Z}(h)-\bar{Z}(h)| \leq |\prod_{k=1}^L (1+\frac{| \hat{\varrho}_k-\bar{\varrho}_k |}{\bar{\varrho}_k})-1| \; \bar{Z}(h).
\eed
The r.h.s of this relation is lower than $|\prod_{k=1}^L (1+\frac{\zeta}{\bar{\varrho}_k})-1| \; \bar{Z}(h)$ with probability at least 
$(1-2 e^{-2 n \zeta^2/\sinh(|\Lambda| \beta \delta h)^2})^L$ (Hoeffding's inequality).
\end{proof}

 We will pick the spacing between two consecutive magnetisations to be $\delta h=\frac{\eta}{\beta |\Lambda|}$, where $\eta$ is some positive constant we are free to choose at our convenience. $\delta h$ fixes the value of $L$ to
\beq\label{eq:cond-L}
L=(h-h_0) \beta |\Lambda|/\eta.
\eeq
With a given choice for $\delta h$, we have that $\bar{\varrho}_k \geq e^{-\eta}$ and
\beq\label{eq:bias}
|\hat{Z}(h)-\bar{Z}(h)| \leq | (1+ e^{\eta} \zeta)^L -1 | \; \bar{Z}(h),
\eeq
with probability at least $(1-2 e^{-2 n \zeta^2/\sinh(|\Lambda| \beta \delta h)^2})^L$. How should we pick $\zeta$ in order to ensure that the l.h.s. of (\ref{eq:bias}) is smaller than $\delta \bar{Z}(h)$ for some fixed $\delta$ ? Since $(1+ e^{\eta} \zeta)^L \leq e^{L \zeta e^{\eta}}$, it is enough that 
\bed
\zeta \leq \frac{\ln (1+\delta)}{L e^{\eta}}.
\eed
We also wish to know how, for fixed values of $\zeta, L, \eta$, we should choose $n$ in order to guarantee a level of confidence at least equal to $3/4$. Direct substitution shows that the condition
\bed
(1-2 e^{-2 n \zeta^2/\sinh(\eta)^2})^L \geq 3/4
\eed
is satisfied if
\beq\label{eq:cond-n}
n \geq - \frac{\sinh{\eta}^2 e^{2 \eta} L^2}{2(\ln (1+\delta))^2} \ln \left[\frac{1}{2}\left(1-\left(\frac{3}{4}\right)^{1/L}\right)\right].
\eeq
To summarise, for $L$ satisfying (\ref{eq:cond-L}) and $n$ satisfying (\ref{eq:cond-n}), the partition function estimator satisfies 
\beq
\text{Prob}[|\hat{Z}(h)-\bar{Z}(h)| \leq \delta \bar{Z}(h) ] \geq 3/4. 
\eeq
Next we wish to establish a condition that guarantees that Inequality (\ref{eq:first-general-bound}) holds. We start by observing that
\bed
|\bar{Z}(h)-Z(h)| \leq |\prod_{k=1}^L (1+\frac{|\varrho_k-\bar{\varrho}_k|}{\varrho_k})-1| \; Z(h).
\eed
Let 
\bed
\begin{array}{lll}
\Delta \pi_{k-1} &\equiv& \max_{S} | \pi_{k-1}(S)-\pi'_{k-1}(S) |\\
&=&\frac{1}{2} \sum_{\sigma} |\pi_{k-1}(\sigma)-\pi'_{k-1}(\sigma)|
\end{array}
\eed
denote the total variation\footnote{To obtain the last equality, one observes that if an event $S_*$ achieves the maximum, so does the complementary event $S_*^c$.} between the probability distributions $\pi_{k-1}$ and $\pi'_{k-1}$. Let us also denote $\Delta \pi^*=\max_k \Delta \pi_{k-1}$. We see that 
\bed
|\varrho_k-\bar{\varrho}_k| \leq e^{\eta} \Delta \pi^*, \hspace{0.5cm} \rho_k \geq e^{-\eta} \hspace{0.5cm} \forall k.
\eed
Thus
\beq
\begin{array}{lll}
|Z(h)-\bar{Z}(h)| &\leq& [(1+ e^{2 \eta} \Delta \pi^*)^L-1] Z(h) \\
&\leq& (e^{L e^{2 \eta} \Delta \pi^*}-1) Z(h).
\end{array}
\eeq
So it is enough that 
\bed
\Delta \pi^* \leq \frac{e^{-2 \eta}}{L} \ln(1+\epsilon').
\eed
On another hand, $\Delta \pi_{k-1}$ satisfies the inequality
\bed
\Delta \pi_{k-1} \leq \frac{1}{2} \max_{\sigma} |1-\frac{\pi'_{k-1}(\sigma)}{\pi_{k-1}(\sigma)}|.
\eed

\begin{figure}[h]
\begin{center}
\includegraphics[width=\columnwidth]{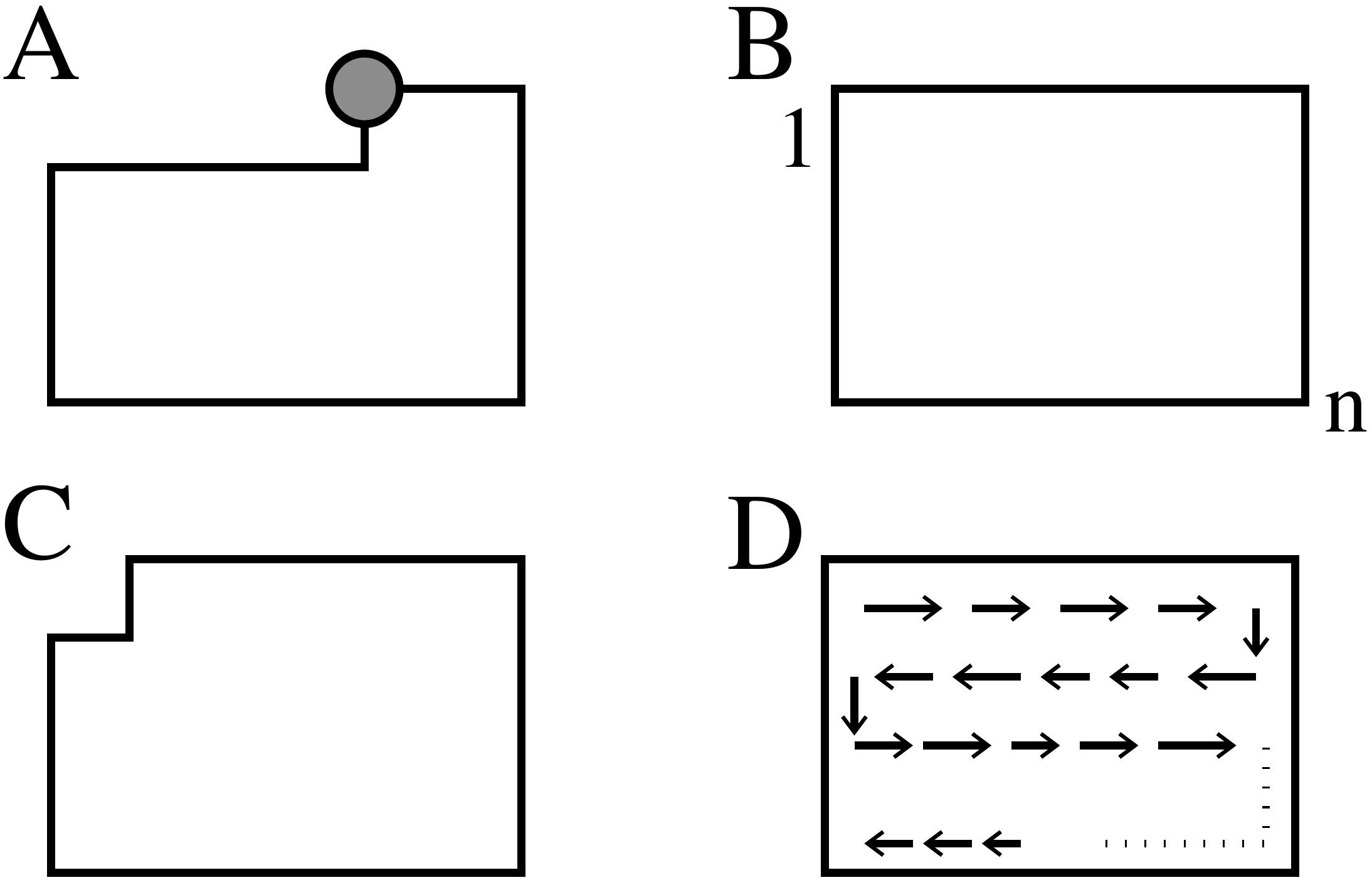}
\end{center}
\caption{A. Typical corner on which magnetisations need to be measured in order to get an approximation for the partition function of the Ising model defined on a square lattice. B. Labelling of spins of the original lattice. C. Lattice obtained after the spin on one corner has been fixed. D. Cartoon for a possible choice to run over all spins of the original lattice}
\label{fig:BQP-DRAW}
\end{figure}

Using Bayes' theorem, to express $\pi_{k-1}$ in terms of marginal and conditional probability distributions,
\beq
\begin{array}{lll}
\pi_{k-1}(\sigma_1 \ldots \sigma_{|\Lambda|})&=&\pi_{k-1}^{(1)}(\sigma_1) \pi_{k-1}^{(2)}(\sigma_2|\sigma_1) \ldots \\
&&\times~\pi_{k-1}^{(|\Lambda|)}(\sigma_{|\Lambda|} |\sigma_1 \ldots \sigma_{|\Lambda|-1}),
\end{array}
\eeq
the r.h.s of the latter inequality can be written as
\bed
\frac{1}{2} \max_{\sigma} |\prod_{l=1}^{|\Lambda|} \frac{\pi'^{(l)}_{k-1}(\sigma_l | \sigma_1 \ldots \sigma_{l-1})}{\pi^{(l)}_{k-1}(\sigma_l | \sigma_1 \ldots \sigma_{l-1})}-1|.
\eed
If we use the \emph{finesse}  
\beq
\mathfrak{f} \equiv 
\max_{k,l,\sigma} \frac{|\pi'^{(l)}_{k-1}(\sigma_l | \sigma_1 \ldots \sigma_{l-1})-\pi^{(l)}_{k-1}(\sigma_l | \sigma_1 \ldots \sigma_{l-1})|}{\pi^{(l)}_{k-1}(\sigma_l | \sigma_1 \ldots \sigma_{l-1})}
\eeq
to quantify the accuracy with which the distributions $\{ \pi'_{k-1} \}$ approach the distributions $\{ \pi_{k-1} \}$, we see that $\Delta \pi^* \leq \frac{1}{2} |(1+\mathfrak{f})^{|\Lambda|}-1| \leq \frac{1}{2} (e^{\mathfrak{f} |\Lambda|}-1)$. So $|Z(h)-\bar{Z}(h)| \leq \epsilon' Z(h)$ whenever the finesse satisfies
\beq\label{eq:finesse-requirement}
\mathfrak{f} \leq \frac{1}{|\Lambda|} \ln [1+\frac{2 e^{-2 \eta}}{|\Lambda|} \ln(1+\epsilon')].
\eeq

We now turn to the second part of our construction and explain how it is possible to get samples for the estimators $\hat{\varrho}_k$ from corner single site magnetisation estimates, as indicated on Fig.\ref{fig:BQP-DRAW}-A. Assume that the $|\Lambda|$ particles of the lattice are numbered as indicated on Fig.\ref{fig:BQP-DRAW}-B. For fixed external field $h_{k-1}$, It is clear that the magnetisation on the corner $'1'$ is given by
\bed
\begin{array}{lll}
m_{k-1}(1)&=&\frac{1}{Z(h_{k-1})}\sum_{\sigma} e^{-\beta H_{k-1}(\sigma )} \sigma_1\\
&=& \pi^{(1)}_{k-1}(\uparrow)-\pi^{(1)}_{k-1}(\downarrow).
\end{array}
\eed
From an estimate $m'_{k-1}(1)$, we construct $\pi'^{(1)}_{k-1}(\sigma_1)$ as
\beq
\begin{array}{lll}
\pi'^{(1)}_{k-1}(\uparrow)&=&\frac{1+m'_{k-1}(1)}{2} \hspace{0.3cm}\\
\pi'^{(1)}_{k-1}(\downarrow)&=&\frac{1-m'_{k-1}(1)}{2}.
\end{array}
\eeq
It is certainly possible to draw \emph{exactly} according to this distribution $\pi'^{(1)}_{k-1}$; it is a \emph{known} two-outcome probability distribution. Let us imagine we do it and obtain an outcome $\mathsf{x}_1$. Then we consider another Ising system, identical to the original apart from the fact that the spin labelled '1' is now fixed to $\mathsf{x}_1$. This new system is now defined on the geometry indicated by Fig.\ref{fig:BQP-DRAW}-C ($|\Lambda|-1$ spins), and governed by the Ising Hamiltonian:
\bed
H^{(2)}(\sigma_2 \ldots \sigma_{|\Lambda|})=H_{k-1}(\mathsf{x}_1 \sigma_2 \ldots \sigma_{|\Lambda|}),
\eed
and its Boltzmann weights obey
\bed
\begin{array}{lll}
\frac{e^{-\beta H^{(2)}(\sigma_2 \ldots \sigma_n)}}{Z^{(2)}}
%=\pi^{(2)}(\sigma_2 \ldots \sigma_{|\Lambda|})
&=&\pi_{k-1}^{(2)}(\sigma_2|\mathsf{x}_1) \ldots \\
&&\times~\pi_{k-1}^{(|\Lambda|)}(\sigma_{|\Lambda|} |\mathsf{x}_1 \ldots \sigma_{|\Lambda|-1}).
\end{array}
\eed
If we now measure the magnetisation at corner '2' for this new system, we get
\bed
m'_{k-1}(2|\mathsf{x}_1) \simeq 
m_{k-1}(2|\mathsf{x}_1)= \pi_{k-1}^{(2)}(\uparrow|\mathsf{x}_1)-\pi_{k-1}^{(2)}(\downarrow|\mathsf{x}_1)
\eed
The magnetisation $m'_{k-1}(2|\mathsf{x}_1)$ allows to construct
\beq
\begin{array}{lll}
\pi'^{(2)}_{k-1}(\uparrow |\mathsf{x}_1)&=&\frac{1+m'_{k-1}(2|\mathsf{x}_1)}{2} \hspace{0.3cm}\\
\pi'^{(2)}_{k-1}(\downarrow |\mathsf{x}_1)&=&\frac{1-m'_{k-1}(2|\mathsf{x}_1)}{2}.
\end{array}
\eeq
Again, this known probability distribution is simple enough that it is possible to draw exactly a sample $\mathsf{x}_2$ according to it. Repeating this reasoning, running along the lattice in the order indicated by the cartoon on Fig.\ref{fig:BQP-DRAW}-D, we see that the ability to estimate corner magnetisations combined with Bayes' theorem allows to draw sequentially \footnote{The order we have chosen has no particular meaning. The reasoning is of course valid for any labelling of the sites of the lattices.} according to 
\bed
\begin{array}{lll}
\pi'_{k-1}(\sigma_1 \ldots \sigma_{|\Lambda|})&=&\pi'^{(1)}_{k-1}(\sigma_1) \pi'^{(2)}_{k-1}(\sigma_2|\sigma_1) \ldots \\
&&\times~\pi'^{(|\Lambda|)}_{k-1}(\sigma_{|\Lambda|} |\sigma_1 \ldots \sigma_{|\Lambda|-1}).
\end{array}
\eed

Finally, we observe that 
\bed
\begin{array}{ll}
&\frac{|\pi'^{(l)}_{k-1}(\sigma_l | \sigma_1 \ldots \sigma_{l-1})-\pi^{(l)}_{k-1}(\sigma_l | \sigma_1 \ldots \sigma_{l-1})|}{\pi^{(l)}_{k-1}(\sigma_l | \sigma_1 \ldots \sigma_{l-1})|} \\
&\leq
\frac{|m'_{k-1}(l |\sigma_1 \ldots \sigma_{l-1})-m_{k-1}(l |\sigma_1 \ldots \sigma_{l-1})|}{|1-|m_{k-1}(l |\sigma_1 \ldots \sigma_{l-1})| \; |}.
\end{array}
\eed
So the condition (\ref{eq:finesse-requirement}) leads simply to a condition on the \emph{relative} precision over the magnetisation.

Summarising, \emph{for any $\epsilon >0$, temperature $\beta$ and magnetic field $h$, it is possible to provide an estimate $\hat{Z}(h)$ for the Ising partition function $Z(h)$ satisfying
\beq
\text{Prob}[|\hat{Z}(h)-Z(h)| \leq \epsilon \; Z(h)] \geq 3/4,
\eeq
in a time that scales at most polynomially with $\beta,\epsilon^{-1}$, $|h|$, and the size of the system if we are able to perform corner magnetisation measurements on related non-homogeneous Ising systems. The required relative precision need not be lower than the inverse of some polynomial in $|h|, \beta,\epsilon^{-1}$ and the size of the system.}

\end{document}